
\documentclass[pra,aps,amsmath,amssymb,superscriptaddress,twocolumn,longbibliography]{revtex4-1}
\usepackage{graphicx,multirow}
\usepackage{color,outlines}
\usepackage[sc,osf]{mathpazo}\linespread{1.05}  
\usepackage{amsthm}
\makeatletter
\renewcommand*\env@matrix[1][*\c@MaxMatrixCols c]{%
  \hskip -\arraycolsep
  \let\@ifnextchar\new@ifnextchar
  \array{#1}}
\makeatother

\usepackage{physics}
\usepackage{outlines}
\usepackage{hyperref}

\newcommand{\beq}{\begin{equation}}
\newcommand{\eeq}{\end{equation}}
\newtheorem{thm}{Theorem}

\newtheorem{definition}{Definition}
\newtheorem{prop}{Proposition}
\newtheorem{conj}{Conjecture}

\usepackage{tikz}
\usetikzlibrary{shapes.geometric, arrows}
\usetikzlibrary{decorations.pathmorphing}
\tikzset{snake it/.style={decorate, decoration=snake}}


\renewcommand{\arraystretch}{1.3}
\usepackage{booktabs}
\usepackage{makecell}

\begin{document}

\title{Construction and local equivalence of dual-unitary operators: from dynamical maps to quantum combinatorial designs} 
\author{Suhail Ahmad Rather}
\email{suhailmushtaq@physics.iitm.ac.in}
\affiliation{Department of Physics, Indian Institute of Technology
Madras, Chennai, India~600036}
\author{S. Aravinda}
\email{aravinda@iittp.ac.in}
\affiliation{ Department of Physics, Indian Institute of Technology Tirupati, Tirupati, India~517619} 
\author{Arul
Lakshminarayan}
\affiliation{Department of Physics, Indian Institute of Technology
Madras, Chennai, India~600036}

\begin{abstract}

While quantum circuits built from two-particle dual-unitary (maximally entangled) operators serve as  minimal models of typically nonintegrable many-body systems, the construction and characterization of dual-unitary operators themselves are only partially understood. A nonlinear map on the space of unitary operators was proposed in PRL.~125, 070501 (2020) that results in operators being  arbitrarily close to dual unitaries.
Here we study the map analytically for the two-qubit case describing the basins of attraction, fixed points, and rates of approach to dual unitaries. 
A subset of dual-unitary operators having maximum entangling power are 2-unitary operators or perfect tensors, and are equivalent to four-party  absolutely maximally entangled states. It is known that they only exist if the local dimension is larger than $d=2$.
We use the nonlinear map, and introduce stochastic variants of it, to construct explicit examples of new dual and 2-unitary operators. A necessary criterion for their local unitary equivalence to distinguish classes is also introduced and used to display various concrete results and a conjecture in $d=3$.  
It is known that orthogonal Latin squares provide a ``classical combinatorial design"
for constructing permutations that are 2-unitary.  
We extend the underlying design from classical to genuine quantum
ones for general dual-unitary operators and give an example of what might be the smallest sized genuinely quantum design of a 2-unitary in $d=4$.
%
%
\end{abstract}

\maketitle

\tableofcontents 

\section{Introduction}

In recent years a major research trend uses tools of quantum information theory to understand the puzzles of quantum many-body physics. The typically complex entanglement structure of many-body states drives cross fertilization across various fields of research in physics. Particularly, in the areas of condensed matter physics and string theory, quantum information theory continues to play an exciting role in creating new avenues of understanding \cite{amico2008entanglement,zeng2015quantum,jahn2021holographic,kibe2021holographic}. 

Quantum computers allow the realization of the vision of Feynman \cite{feynman2018simulating} on the efficient simulation of physical systems \cite{georgescu2014quantum}.
In the present era of Noisy-Intermediate-Scale-Quantum (NISQ) \cite{preskill2018quantum} computing such simulations become realistic \cite{ippoliti2021many}. The universality of quantum computing allows simulation of any quantum system, where quantum circuits are built using unitary operators or gates acting on single particle and two particle subsystems. 
A quantum computer is itself a controllable quantum many-body system. Traditional approaches involve studying the properties of systems based on the Hamiltonian evolution and spectra.  At this juncture it is important to understand properties of quantum many-body systems from the quantum circuit formalism and to contrast that with the traditional studies.

Quantum advantage 
using random unitary circuits has been explored in recent experiments using Google's ``sycamore" processor \cite{QSupreme} and ``Zuchongzhi 2.0" \cite{zhu2021quantum}. Similar models are used in studies of entanglement evolution in  many-body quantum systems  in which the random unitary gates act on nearest neighbours \cite{Nahum2018, nahum2017quantum,khemani2018operator,von2018operator,chan2018solution}. Quantum circuits, in arbitrary local dimensions, without any random interactions have been proposed as elegant minimal models that can span the gamut of integrable to fully chaotic quantum many-body systems \cite{Akila2016,Bertini2019,gutkin2020exact,BraunPRE2020}. These quantum circuits have a special ``duality" property that the evolution operator in the spatial as well as in the temporal direction of the circuit are governed by unitary dynamics. The origin of this duality lies in the two-particle unitary gates being {\it dual-unitary} \cite{Bertini2019}. This duality facilitates an analytical treatment of many quantities such as two-point correlation functions, spectral form-factor, operator entanglement, out-of-time-order correlators and the exact study the entanglement dynamics \cite{Bertini2018,Bertini2019,bertini2019entanglement,gutkin2020exact,
bertini2019operator,kos2020chaos,lerose2020influence,garratt2020many,flack2020statistics,
bertini2019operatorII,piroli2020exact,claeys2020maximum,klobas2021exact, ippoliti2022fractal,ippoliti2021postselection,klobas2021entanglement,claeys2022emergent,zhou2022maximal}. The two-particle unitary from which the circuit is built, plays a significant role in the following and is an operator in $d^2$ dimensional space where the local dimension is $d$ and is typically denoted as $U$.

On the other hand entanglement of unitary operators, similar to states, has been studied from the early days of quantum information theory\cite{Zanardi2000,Zanardi2001,Wang2002,Wang2003,Nielsen2003,Vidal2002,Hammerer2002,Collins2001,Eisert2000,cirac2001}. Quantities such as operator entanglement \cite{Zanardi2001}, entangling power \cite{Zanardi2000} and complexity of an operator \cite{dowling2008geometry} are few measures that quantify the nonlocal properties of unitary operators. Operator entanglement measures how entangled an operator is when viewed as a vector in a product vector space.  It has been identified that an unitary operator is dual-unitary if and only if it has maximum possible operator entanglement \cite{SAA2020}. Entangling power quantifies the average entanglement produced by a bipartite unitary operator acting on an ensemble of pure product states. A special subclass of dual-unitary operators are those having the maximum possible entangling power allowed by local dimensions. These are the same operators that have been referred to variously as 2-unitary \cite{Goyeneche2015} or perfect tensors \cite{Pastawski2015}. 

The exactly calculable two-point correlation functions in dual-unitary circuits enable the characterization of the many-body system in terms of an ergodic hierarchy, from ergodic to Bernoulli through mixing \cite{Bertini2019,ASA_2021}. It is identified that the dual-unitary circuit is Bernoulli, when correlations instantly decay, if and only if the two-particle unitary operator has maximum entangling power \cite{ASA_2021}. Additionally, a sufficient condition for the many-body circuit to show the mixing behavior is derived as a function of entangling power \cite{ASA_2021}. These results establish a close connection between the entangling properties of two particle unitary operators from which the many-body quantum circuits are built and the dynamical nature of the  many-body systems.

Let a bipartite pure quantum state's Schmidt form be  $\ket{\psi}_{AB} = \sum_{i=0}^{d-1} \sqrt{\lambda_i} \ket{ii} /\sqrt{d} $, where $d$ is the local Hilbert space dimension. 
Setting $\lambda_i = 1 \; \text{for all} \; i$ in this expansion results in $\ket{\Phi}=\sum_{i=1}^{d}\ket{ii}/\sqrt{d}$ which is a maximally entangled state, in fact the one that is closest to $\ket{\psi_{AB}}$. Any set of orthonormal bases in the subspaces constructs such maximally entangled states.

In contrast the construction of maximally entangled unitary operators do not follow from orthonormal operator bases. Express an unitary operator in operator Schmidt form, $U = \sum_{j=1}^{d^2} \sqrt{\lambda_j} X_j \otimes Y_j $, $\tr (X_j^{\dagger}X_k) = \tr (Y_j^{\dagger} Y_k) = \delta_{jk}$ and $\lambda_j \geq 0$. $U$ is maximally entangled or dual-unitary iff $\lambda_i = 1\;\text{for all}\; i$.
However the constraint of unitarity is stronger than the constraint of normalization on the state and this imposes complex conditions on the Schmidt matrices $X_j$ and $Y_j$. Thus simply assigning $\lambda_i = 1 \; \text{for all}\; i$ does not retain unitarity although it does result in a maximally entangled operator. This makes it hard to analytically construct dual-unitary operators. Construction of maximally entangled unitaries was discussed at least as early as in Ref.~\cite{Tyson_2003}.

If the local dimension is $d=2$, namely for qubits, all possible dual-unitary operators can be parametrized using the Cartan decomposition \cite{Bertini2019}. There are no 
2-unitary or perfect tensors in this case. While a complete parametrization of dual-unitary operators for $d>2$ is not known, many classes and examples have been examined and used thus far. The {\sc swap} or flip operator is a simple example of a dual-unitary operator. The discrete Fourier transform in $d^2$ dimensions maximizes operator entanglement \cite{Tyson2003,Bhargavi2019} and is hence also a dual-unitary. However, the {\sc swap} has zero entangling power, while the Fourier transform has a finite value.
Diagonal and block diagonal operators, along with the {\sc swap} gate can be used to construct dual-unitary operator \cite{claeys2020ergodic,ASA_2021}. These have limited entangling power and in partcular cannot reach the maximum value \cite{ASA_2021}. The dual-unitary operators introduced recently in Ref. \cite{singh2021diagonal} are also bounded by the entangling power of diagonal unitary operators.

A numerical iterative algorithm which produces unitary operators arbitrarily close to being dual-unitary has been presented in \cite{SAA2020}. This algorithm can yield dual-unitary operators with a wide range of entangling powers, especially exceeding the bound corresponding to block-diagonal based constructions. Remarkably, the numerical algorithm can also yield exact analytical forms for dual-unitary operators \cite{ASA_2021} and several other examples, including new 2-unitaries, are displayed further below. A slightly modified algorithm has been used to positively settle an open problem on the existence of four-party absolutely maximally entangled (AME) states of local dimension six \cite{SRatherAME46}, see also Ref.~\cite{AME46_conf} for an elaborate discussion of the solution. AME states are genuinely entangled multipartite pure states which have maximal entanglement in all bipartitions \cite{Helwig_2012}. Thus an AME state of $N$ qudits each of dimesion $d$ denoted as AME($N,d$) has all subsystems of size $\lfloor N/2 \rfloor$ maximally mixed.

The numerical algorithm acts generically as a dynamical map in the space of unitary operators. 
These are thus high dimensional dynamical systems which deserve to be studied in their own right. 
In this work, we study the fixed point structure of the map. In particular for the case of two qubits, an explicit analytical form of the map is derived. This enables deriving dynamical characteristics such as the rate of approach to attractors which are dual-unitary operators. A variety of dynamical behaviours have been observed: (i) power-law approaches to the {\sc swap} gate, (ii) exponential approach to other dual unitary gates, with a rate that diverges for the maximally entangling case of the {\sc dcnot} gate.

%
 
Due to operator-state isomorphism, a 2-unitary operator is equivalent to a four-party AME state \cite{Goyeneche2015}. There are various ways of constructing AME states in which quantum combinatorial designs are used \cite{Goyeneche2015,GRMZ_2018}. Since 2-unitaries are a subset of dual unitaries, a less restrictive combinatorial design underlying dual-unitary operators that are permutations was found in \cite{ASA_2021}. In this work, we extend such combinatorial designs to dual-unitary operators going beyond permutations. We define {\it stochastic} dynamical maps capable of generating such structured dual-unitary operators. 

Apart from dual-unitary operators, the maps presented in \cite{SAA2020} can be used to generate infinitely many 2-unitaries for $d>2$. An important question now arises if the 2-unitaries so obtained are local unitarily (LU) equivalent to each other and to 2-unitary permutations of the same size. Two bipartite unitary operators $U$ and $U'$ on $\mathcal{H}_d \otimes \mathcal{H}_d$ are said to local unitarily equivalent, denoted $U\overset{\text{LU}}{\sim}U'$,  if there exist single qudit unitary gates $u_1, u_2,v_1, v_2$ such that 
\beq
U'=(u_1 \otimes u_2)U (v_1 \otimes v_2).
\label{eq:U1U2equi}
\eeq
However, as far as we know, there is no general procedure to identify LU equivalent unitary operators, apart from the case of two-qubits \cite{cirac2001}. This problem becomes acute when the operators concerned have the same entangling powers, such as the 2-unitaries. In this work, we address this question by proposing a necessary criterion for LU equivalence between bipartite unitary operators
that can potentially work also in the case of 2-unitaries.

This leads us to conjecture that {\it all} two-qutrit 2-unitaries are LU equivalent to each other. From an exhaustive search, we find that the special subset of $72$ possible 2-unitary permutations of size 9 are LU equivalent to each other. For local Hilbert space dimension $d=4$, we find that this still continues to hold: indeed there are $6912$ 2-unitary permutations, and we find that these can generated from any one of them by local permuations. Thus upto LU equivalence, we find that there is {\it only one} 2-unitary permutation in $d=3,4$ and implies that there is only one unique orthogonal Latin square of size $3$ and $4$. Note that the connection between 2-unitary permutations and orthogonal Latin squares has been known 
for some time \cite{Clarisse2005}.

Although there is only one 2-unitary permutation upto LU equivalence,  further below, we give an explicit example of a 2-unitary of size 16 which is not LU equivalent to any 2-unitary permutation of the same size. In other words, we give an explicit example of an AME state of four ququads which is not LU connected to an AME state of four ququads with minimal support. Minimal support four-party AME states have $d^2$ nonvanishing coefficients in some product orthonormal basis, which is the smallest number possible \cite{Goyeneche2015}. These new examples of AME states can be used to construct new error-correcting codes as was done in \cite{SRatherAME46} and can provide insights about the most general underlying combinatorial designs 2-unitaries possess. For $d=5$ we show that there are two LU inequivalent 2-unitary permutations or equivalently, two LU inequivalent AME states of minimal support. This contradicts {\em Conjecture 2} in Ref.~\cite{Adam_SLOCC_2020} which implies that all four party AME states of minimal support are LU equivalent for all local dimensions $d$.

\subsection{Summary of principal results and structure of the paper}

In view of the length of this paper, we summarize some of the main results here.
\begin{enumerate}
\item{Dynamical maps for generation of dual-unitary operators, Sections ~(\ref{sec:dynamicalmaps}, \ref{Sec:weyldynamics})}
\begin{enumerate}
\item
It is shown in Proposition \ref{prop:loc_orbit_map}  that the map preserves local unitary equivalence. 

\item
Explicit form of the map for the two-qubit case  is derived. This is used to show that all dual unitaries are period-two points, and conversely all period-two points are dual unitaries. It is shown that convergence of the map to dual unitaries is typically exponential. 

\end{enumerate}
\item{Quantum designs and new classes of 2-unitaries and AME states, Sec. ~(\ref{sec:combi}--~(\ref{Sec:permu})}
\begin{enumerate}
\item
A necessary criterion for LU equivalence between bipartite unitary gates is proposed, and is particularly useful for establishing inequivalence between 2-unitary operators and AME states.
\item
An AME state of four qudits each of local dimension 4, AME$(4,4)$, is constructed such that it is not LU equivalent to any known AME state obtained from classical orthogonal Latin squares. It is likely to be the simplest genuine  orthogonal quantum Latin square construction.  

\item
For local dimensions $d=3,$ and $d=4$, it is shown that there is only one LU class of AME states constructed from orthogonal Latin squares (OLS). However, we show that there are more than one such LU class for $d>4$.

\end{enumerate}
\end{enumerate}

The paper is structured as follows. In Sec.~(\ref{sec:def}), the basic terminologies used in the current work is defined. In  Sec. ~(\ref{sec:dynamicalmaps}), the non-linear iterative maps from which dual-unitary and 2-unitary operators are produced is described, and  their fixed point structure is discussed. Stochastic generalizations are introduced to result in specially structured operators. In Sec.~(\ref{Sec:weyldynamics}) the iterative map is studied in explicit forms for the case of qubits. Here we analytically estimate the power-law or exponential approach to dual-unitary operators. In Sec (\ref{sec:combi}) combinatorial designs corresponding to dual-unitary operators is discussed. In Sec.~ (\ref{Sec:spldual}) the question of local unitary equivalence of 2-unitary operators is discussed, specially for the small cases of $d=3$, and $4$. In Sec. (\ref{Sec:permu}) permutations small dimensions is studied in detail via their entangling powers and gate-typicality. Classification of LU classes for dual-unitary and T-dual unitary permutation operators is given for $d=2,3$. Finally we conclude in Sec.~(\ref{sec:conc}).

\section{Preliminaries and definitions \label{sec:def}} 

In this section, we mostly recall some relevant quantities and measures.

\subsection{Operator entanglement and entangling power \label{sec:op_ent}}

Any operator $X \in \mathcal{B}(\mathcal{H}_d)$ is mapped to the state $\ket{X} \in \mathcal{H}_d \otimes \mathcal{H}_d $ as  
\begin{equation}
	\ket{X} := (X \otimes I) \ket{\Phi},
	\label{eq:iso}
\end{equation}
where $\{\ket{i}\}_{i=1}^d$ is an orthonormal basis in $\mathcal{H}_d$ and $\ket{\Phi} := \frac{1}{\sqrt{d}}\sum_{i=1}^d \ket{ii}$ is the generalized Bell state.
A bipartite unitary operator $U = \sum_{ij}^{d^2} \alpha_{ij} e_i^A \otimes e_j^B \in  \mathcal{B}(\mathcal{H}_d \otimes \mathcal{H}_d) $, is mapped to $\ket{U} = \sum_{ij}^{d^2} \alpha_{ij} \ket{e_i}_A \otimes \ket{e_j}_B \in \mathcal{H}_{d^2} \otimes \mathcal{H}_{d^2}$,  
where $e^{A,B}_{i}$ are a pair of operator bases in $\mathcal{B}(\mathcal{H}_d)$.

The entanglement of an unitary operator $U$ is the entanglement of the state $\ket{U}=\sum_{j=1}^{d^2} \sqrt{\lambda_j}\ket{X^A_j}\ket{ Y^B_j}$.  
The Schmidt decomposition of $U$ is given by
\begin{equation}
U = \sum_{j=1}^{d^2} \sqrt{\lambda_j} X_j^A \otimes Y_j^B,
\label{eq:sch}
\end{equation}
with $\tr{{X_j^A}^\dagger X_k^A} = \tr{{Y_j^B}^\dagger Y_k^B} = \delta_{jk}, \quad  \sum_{j = 1}^{d^2}\lambda_j =d^2$.
The {\it operator entanglement} of $U$ is defined in terms of linear entropy as 
\begin{equation}
	E(U) = 1 - \frac{1}{d^4} \sum_{j=1}^{d^2} \lambda_j^2,
	\label{eq:EU}
\end{equation}
where $0 \leq E(U) \leq 1 - 1/d^2$. 

Another related, but distinct, entanglement facet of an unitary operator $U$ is its {\it entangling power}, $e_p(U)$. It is defined as the average entanglement  produced due to its action on pure product states distributed according to the uniform, Haar measure, 
\begin{equation}
	e_p(U) = C_d\overline{\mathcal{E}(U \ket{\phi_1} \otimes \ket{\phi_2})}^{\ket{\phi_1},\ket{\phi_2}},
\end{equation}
where $\mathcal{E}(\cdot)$ can be any entanglement measure, and $C_d$ is a constant scale factor. Considering $\mathcal{E}(\cdot)$ to be the linear entropy, the entangling power can be directly calculated using operator entanglement \cite{Zanardi2001} as follows. Let $S$ be the {\sc swap} operator such that 
\beq
S\ket{\phi_A}\ket{\phi_B} = \ket{\phi_B}\ket{\phi_A},
\eeq
for all $\ket{\phi_A} \in \mathcal{H}_d,\,\ket{\phi_B} \in \mathcal{H}_d$.
We choose $C_d$ such that the scaled entangling power $0\leq e_p(U) \leq 1$ is given by 
\begin{equation}
	e_p(U) = \frac{1}{E(S)} [E(U) + E(US) - E(S)],
	\label{eq:ep}
\end{equation}
where $E(S) = 1 - 1/d^2$. 

Note that the swap operator is such that it has the maximum possible operator entanglement, however the entangling power $e_p(S)=0$. For any operator $U$, $e_p(U)=e_p(US)=e_p(SU)$. The so-called gate-typicality $g_t(U)$ \cite{Bhargavi2017} distinguishes these and is defined as
\begin{equation}
	g_t(U) = \frac{1}{2E(S)} [E(U) - E(US) + E(S)],
	\label{eq:gt}
\end{equation}
and also ranges from $0$ to $1$, with $g_t(S)=1$ and vanishes for all local gates.

\subsection{Matrix reshaping \label{sec:matshap}}
A bipartite unitary operator $U$ on $\mathcal{H}_d \otimes \mathcal{H}_d$ can be expanded in product basis as 
\begin{equation} 
	U  = \sum_{i \alpha j \beta} \mel{i\alpha}{U}{j\beta} \ket{i\alpha}\bra{j\beta}
	  \label{eq:umat}
\end{equation} 
There are four basic matrix rearrangements of $U$ that we use in this work : 
\begin{enumerate}
\item
Realignment operations:
\begin{align}
\quad R_1 : \mel{i\alpha}{U}{j\beta} & = \mel{\beta\alpha}{U^{R_1}}{ji} \\ 
  \quad R_2 : \mel{i\alpha}{U}{j\beta} & = \mel{ij}{U^{R_2}}{\alpha\beta}
 \end{align}
 \item
 Partial Transpose operations:
 \begin{align}
 \quad \Gamma_1 : \mel{i\alpha}{U}{j\beta} & = \mel{j\alpha}{U^{\Gamma_1}}{i\beta} \\ 
\quad \Gamma_2 : \mel{i\alpha}{U}{j\beta} & = \mel{i\beta}{U^{\Gamma_2}}{j\alpha}  
\end{align}

\end{enumerate}

The relation between entanglement and matrix reshapings becomes clear on considering the state $\ket{U}$ as now a 4-party state $\ket{\psi} \in \mathcal{H}_d\otimes \mathcal{H}_d\otimes \mathcal{H}_d\otimes \mathcal{H}_d $: 
\begin{equation}
 \ket{\psi}_{ABCD} = (U_{AB} \otimes I_{CD}) \ket{\Phi}_{AC}\ket{\Phi}_{BD}, 
 \label{eq:4st}
\end{equation}
where $\ket{\Phi} = \frac{1}{\sqrt{d}}\sum_i^d \ket{ii}$ is the generalized Bell state. The reduced state corresponding to the three possible partition $AB|CD$, $AC|BD$ and $AD|BC$ are given by
\begin{equation}
  \rho_{AB} = \frac{1}{d^2} UU^\dagger, \,
  \rho_{AC} = \frac{1}{d^2}U^{R_2}U^{R_2\dagger}, \,
  \rho_{AD} = \frac{1}{d^2}U^{\Gamma_2}U^{\Gamma_2 \dagger}. 
\label{eq:redst}
\end{equation}
Using $(X \otimes Y)^{R_2} = \op{X}{Y^*}$, where * refers to complex conjugation in the computational basis. It is easy to see $U^{R_2}U^{R_2\dagger} = \sum_1^{d^2} \lambda_j \op{X_j}{X_j}$. The Schmidt value $\lambda_j$ are the singular values of $U^{R_2}$ (which are the same as the singular values of $U^{R_1}$). The operator entanglement $E(U)$ can be interpreted as the linear entropy of entanglement of the bipartition $AC|BD$, and can be expressed in terms of $U^{R}$ as 
\begin{equation}
	E(U) = 1 - \frac{1}{d^4} \Tr{(U^RU^{R\dagger})^2}.
	\label{eq:euR}
\end{equation}

Similarly, the operator entanglement $E(US)$ is the linear entropy of the bipartition $AD|BC$ and is
\begin{equation}
	E(US) = 1 - \frac{1}{d^4} \text{Tr}{(U^{\Gamma}U^{\Gamma \dagger})^2}.
	\label{eq:EUS}
\end{equation}
Whenever the subscripts on $R$ and $\Gamma$ have been dropped they can refer equally to either of the two operations. Note that the singular values of $U^R$ and $U^{\Gamma}$ are all local unitary invariants (LUI). 

We recall definitions of some special families of unitary operators and also introduce some new families of unitary operators.
\begin{definition}[Dual unitary \cite{Bertini2019}]\label{def:dual}
	If the realigned matrix $U^R$ of unitary operator $U$ is also unitary, then $U$ is called a dual unitary. 
\end{definition}

\begin{definition}[T-dual unitary \cite{ASA_2021}] \label{def:Tdual}
If the partial transposed matrix $U^{\Gamma}$ of a unitary operator $U$ is also unitary, then $U$ is called a T-dual unitary. 
\end{definition}

\begin{definition}[2-unitary \cite{Goyeneche2015}] \label{def:2-uni}
	A unitary $U$ for which both $U^R$ and $U^{\Gamma}$ are also unitary is called 2-unitary. 
\end{definition}

\begin{definition}[Self dual unitary] \label{def:selfdual}
Unitary operator $U$ for which $U^R = U$ is called a self-dual unitary.
\end{definition} 

Note that for a 2-unitary $E(U)= E(US)=E(S)=1-1/d^2$ are maximized and thus from Eq.~(\ref{eq:ep}) $e_p(U)=1$, the maximum possible value. Thus the corresponding four-party state given by Eq.~(\ref{eq:4st}) is maximally entangled along all three bipartitions and is an absolutely maximally entangled state of four qudits; AME($4,d$).

In the mathematics literature, the class of unitary operators which remain unitary under `block-transpose' have been studied since 1989 \cite{ocneanu1988quantized,krishnan1996biunitary,jones1999planar,reutter2016biunitary,Nechita2017,kodiyalam2020planar,nechita2021sinkhorn} . Referred to as biunitaries, they are dual unitary upto multiplication by {\sc swap}, and is the result of the $\Gamma_1$ operation above. However, the term ``biunitary" seems to be used interchangably for both dual and T-dual unitary operators and subsequently no special studies of 2-unitaries, that are both dual and T-dual seems to exist.

T-dual and dual unitary have very different entanglement properties, as reflected in their two most prominent representatives: the identity and the {\sc swap} gate. However, they are related in the sense that every T-dual unitary $U$ has a dual partner $US$ (or $SU$). 
Note that if $U$ is 2-unitary, so also are $U^R$ and $U^{\Gamma}$. For example, the realignment of $U^R$ is $U$ itself, while $(U^R)^{\Gamma}=U^{\Gamma} S$, which is evidently unitary given that $U$ is 2-unitary.

\section{Dual-unitary  and 2-unitary operators from nonlinear iterative maps
\label{sec:dynamicalmaps}}
Complete parametrization of dual unitary operators for arbitrary local Hilbert space dimension $d$ is not known in general except in the two-qubit case \cite{Bertini2019}. Several (incomplete) analytic constructions of families of dual unitary operators have been proposed based on complex Hadamard matrices \cite{gutkin2016classical}, diagonal \cite{claeys2020ergodic}, and block-diagonal unitary matrices \cite{ASA_2021}. Here we briefly review the non-linear maps introduced in \cite{SAA2020} to generate unitary operators which are arbitrarily close to dual unitaries.
\subsection{Dynamical map for dual unitaries \label{subsec:MRmap}}
The following map is defined on the space of bipartite unitary operators,
$$
\mathcal{M}_R:\mathcal{U}(d^2) \longrightarrow \mathcal{U}(d^2).$$
One complete action of $\mathcal{M}_R$ on a seed unitary $U_0$ consists of the following two steps:
\begin{enumerate}
\item[(i)] 
{\em Linear part}: Realignment of $U_0$: $U_0 \xrightarrow{R}U_0^R$, 
\item[(ii)]
{\em Non-linear part}: Projection of $U_0^R$ to the nearest unitary matrix $U_1$ given by its polar decomposition (PD) \cite{Fan1955, Keller1975}; $U_0^R=U_1 \sqrt{U_0^{R\, \dagger} U_0^R}$.
\end{enumerate}
Note that $U_0^R$ must be of full rank for the map to be well defined as the polar decomposition of rank deficient matrices is not uniquely defined.
We write one complete action of the map on $U_0$ as
\[\mathcal{M}_R[U_0]:=U_1.\]
After $n$ iterations,
\[\underbrace{\mathcal{M}_R \circ \mathcal{M}_R \circ \cdots \circ \mathcal{M}_R}_{n \; \text{times}}[U_0]:=\mathcal{M}_R^n[U_0]=U_n.\]
For arbitrary seeds the map  has been observed to converge to dual-unitaries almost certainly \cite{SAA2020}, and this is  made plausible by the following observations on the fixed points of the map $\mathcal{M}_R$.


An important property of the map is that {\em it preserves the local orbit} of seed unitary $U_0$ in the following sense.
\begin{prop}
\beq
\text{If}\; U_0' \stackrel{\text{LU}}{\sim} U_0,\; \text{then}\;  U_1'=\mathcal{M}_R[U_0'] \stackrel{\text{LU}}{\sim} U_1= \mathcal{M}_R[U_0].
\label{eq:orbitpreserve}
\eeq
\label{prop:loc_orbit_map}
\end{prop}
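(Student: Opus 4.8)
The plan is to verify that \emph{each} of the two steps constituting one action of $\mathcal{M}_R$ is covariant under local unitary conjugation, and then to compose the two.

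First I would establish how the realignment $R$ interacts with local unitaries. Starting from $U_0' = (u_1 \otimes u_2)\,U_0\,(v_1 \otimes v_2)$ and chasing indices through the definition of (say) the $R_2$ rearrangement, $\mel{i\alpha}{U}{j\beta} = \mel{ij}{U^{R_2}}{\alpha\beta}$, one should obtain
\beq
U_0'^{\,R} = (u_1 \otimes v_1^{T})\,U_0^{\,R}\,(u_2^{T} \otimes v_2),
\label{eq:Rcovariance}
\eeq
with an analogous identity for $R_1$: the four single-qudit factors are merely permuted among the tensor slots, two of them getting transposed, because $R$ exchanges the second ``ket'' index with the first ``bra'' index. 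Since the transpose of a unitary is unitary, $A := u_1 \otimes v_1^{T}$ and $B := u_2^{T} \otimes v_2$ are again \emph{local} unitaries on $\mathcal{H}_d \otimes \mathcal{H}_d$; in particular $U_0'^{\,R}$ has full rank whenever $U_0^{\,R}$ does, so the map is well defined on the whole local orbit at once.

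Next I would combine Eq.~(\ref{eq:Rcovariance}) with the uniqueness of the polar decomposition. Write $U_0^{\,R} = U_1\,P$ with $U_1$ unitary and $P = \sqrt{U_0^{\,R\dagger}U_0^{\,R}}$ positive and invertible (full rank being assumed). Then
\beq
U_0'^{\,R} = A\,U_1\,P\,B = (A\,U_1\,B)\,(B^{\dagger} P B),
\eeq
where $A U_1 B$ is unitary and $B^{\dagger} P B$ is positive and invertible, since conjugation by a unitary preserves both properties. By uniqueness of the polar decomposition of an invertible matrix, the right-hand side \emph{is} the polar decomposition of $U_0'^{\,R}$, whence $U_1' = \mathcal{M}_R[U_0'] = A U_1 B = (u_1 \otimes v_1^{T})\,U_1\,(u_2^{T} \otimes v_2) \stackrel{\text{LU}}{\sim} U_1 = \mathcal{M}_R[U_0]$, as claimed.

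The only genuine obstacle is the index bookkeeping behind Eq.~(\ref{eq:Rcovariance}): one must carefully track which single-qudit factors become transposed (equivalently, complex conjugated) and onto which tensor factor they land after realignment; the polar-decomposition step is then a routine appeal to uniqueness plus invariance of positivity under unitary conjugation. A coordinate-free alternative, which I would also mention, is to regard $\ket{U_0}$ as the four-party state of Eq.~(\ref{eq:4st}): local unitaries on $U_0$ become (possibly conjugated) local unitaries on that state, realignment is the relabeling $B \leftrightarrow C$ of parties which preserves locality, and the nonlinear step replaces a reduced operator by its unitary polar factor, an operation manifestly covariant under local unitaries.
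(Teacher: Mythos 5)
Your proposal is correct and follows essentially the same route as the paper: the covariance identity $U_0'^{\,R} = (u_1 \otimes v_1^{T})\,U_0^{\,R}\,(u_2^{T} \otimes v_2)$ followed by commuting the local unitaries past the positive factor of the polar decomposition. The only difference is that you make explicit the appeal to uniqueness of the polar decomposition and the invariance of positivity under unitary conjugation, which the paper's proof uses implicitly when it rewrites $\sqrt{U_0^{R\dagger}U_0^{R}}(u_2^{T}\otimes v_2)$ as $(u_2^{T}\otimes v_2)\sqrt{U_0^{'R\dagger}U_0^{'R}}$.
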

\begin{proof}
We start from the identity \cite{ASA_2021},
\beq
\begin{split}
U_0^{' R }=& 
[(u_1 \otimes u_2) U_0 (v_1 \otimes v_2)]^R\\
&=(u_1 \otimes v_1^T) U_0^R (u_2^T \otimes v_2).
\end{split}
\eeq
where $T$ is the usual transpose.
Using the polar decomposition of $U_0^{R}$ we get 
\beq
\begin{split}
U_0^{' R } &= (u_1 \otimes v_1^T) U_1 \sqrt{U_0^{R\, \dagger} U_0^R}(u_2^T \otimes v_2)\\ & = (u_1 \otimes v_1^T) U_1 (u_2^T \otimes v_2) \sqrt{U_0^{' R\, \dagger} U_0^{' R}}\\&  \equiv U_1' \sqrt{U_0^{' R\, \dagger} U_0^{' R}}.
\end{split}
\eeq
Explicitly
\beq
\mathcal{M}_R[U_0']=(u_1\otimes v_1^T) \mathcal{M}_R[U_0] (u_2^T \otimes v_2).
\label{eq:Map_Locals}
\eeq
\end{proof}
Thus the changes in the operator entanglement under the $\mathcal{M}_R$ map are unaffected by local unitary operations.

Analogous to the $\mathcal{M}_R$ map for dual unitaries, one can define $\mathcal{M}_{\Gamma}$ map to generate T-dual unitary operators. Action of $\mathcal{M}_{\Gamma}$ on $U_0$ is defined as $\mathcal{M}_{\Gamma}[U_0]:=U_1$, where $U_1$ is closest unitary to $U_0^{\Gamma}$ given by its polar decomposition. 
Such an algorithm was independently studied in \cite{Nechita_2017} to generate a special class of random structured bipartite unitary operators.

\subsubsection{Dual unitaries as fixed points} Action of the map on a dual unitary $\mathsf{U}$ is
\[\mathcal{M}_R[\mathsf{U}]=\mathsf{U}^R,\]
as $\mathsf{U}^R$ is also unitary. As the realignment operation is an involution, $(X^R)^R=X$, it follows that $\mathcal{M}_R[{\mathsf{U}}^R]=\mathsf{U}$ and 
\beq
\mathcal{M}_R^2[\mathsf{U}]=\mathsf{U},
\eeq
{\it i.e.}, dual unitaries are period-2 fixed points of the map. Note that self-dual unitaries ($\mathsf{U}^R=\mathsf{U}$) are fixed points of the $\mathcal{M}_R$ map itself. 

For the two-qubit case ($d^2=4$), we prove that dual-unitaries are the \emph{only} fixed points of the $\mathcal{M}_R^2$ map, or equivalently the perod-2 orbits of $\mathcal{M}_R$. 
However, for the two-qutrit case ($d^2=9$), there are fixed points of the $\mathcal{M}_R^2$ map other than dual unitaries; see appendix \ref{app:nondualFP} for an explicit example. In this case $U_0^R=U_1 \sqrt{U_0^{R\, \dagger} U_0^R}$, $U_1^R=U_0 \sqrt{U_1^{R\, \dagger} U_1^R}$, and the pair $U_0$ and $U_1$ conspire such that they are the nearest unitary to the other's realignment. Generic seeds are neither of this kind, nor do they seem to end up in such pairs.

For $d^2>9$ we have not been able to find such non-dual fixed points. The reason that the map does not converge to such fixed points is because of large dimensionality a random sampling of seed unitaries over the corresponding unitary group $\mathcal{U}(d^2)$  with $(d^2-1)$ number of parameters is unable to find appropriate seed unitaries which lead to such fixed points. One might also expect higher order fixed points of the map which makes the map a novel dynamical system in its own right but for the purposes of this work we will not focus on such directions.

\subsection{Dynamical map for 2-unitaries}
The set of 2-unitary operators is a common intersection of dual and T-dual unitaries. In order to generate 2-unitary operators a slightly modified map $\mathcal{M}_{\Gamma R}$ is used by incorporating also the partial transpose operation. Schematically the action of the map on seed unitary $U_0$ is

\beq
\mathcal{M}_{\Gamma R}:\;\;U_0 \xrightarrow{R} U_0^R \xrightarrow{\Gamma} \left(U_0^{R}\right)^{\Gamma}:=U_0^{\Gamma R} \xrightarrow{\text{PD}} U_1.
\label{eq:MTRmap}
\eeq
It has been pointed out previously that sampling $U_0$ from the Circular Unitary Ensemble (CUE), for small local dimensions ($d^2=9,\;16$), $U_n=\mathcal{M}_{\Gamma R}^n[U_0]$ is arbitrarily close to being 2-unitary, with a significant probability; around $95\%$ for $d^2=9$ and $20\%$ for $d^2=16$ \cite{SAA2020}. To generate 2-unitary operators in larger dimensions, one may need to start with an appropriate initial seed unitary, not just sampled from the CUE. This was done in \cite{SRatherAME46} to generate a 2-unitary of order $36$, which settled the long-standing 
problem of the existence of absolutely maximally entangled states of 4 parties in local dimensions six.
 
The search for 2-unitaries in the unitary group $\mathcal{U}(d^2)$ with $d^4$ parameters can be 
viewed as an optimization problem for maximizing entangling power in a high dimensional space with a complex landscape. Random seeds can get attracted to the many local extrema which are typically saddles. This makes the search for finding global extrema increasingly hard in higher dimensions. A glimpse of the difficulties involved is discussed in Ref. \cite{Grzegorz_thesis} where details about Hessians of the entangling power and especially its maximization in $d=6$ are presented.
\subsubsection{2-unitaries as fixed points}
An action of the $\mathcal{M}_{\Gamma R}$ map on a 2-unitary $\mathsf{U}$ is given by
\[\mathcal{M}_{\Gamma R}[\mathsf{U}]=\mathsf{U}^{\Gamma R},\]
as $\mathsf{U}^{\Gamma R}:=\left(\mathsf{U}^R\right)^{\Gamma}=U^{\Gamma}S$ is also unitary.
The combined rearrangement $\Gamma R$ is not an involution like $R$ or $\Gamma$, but is equivalent to the identity operation when composed thrice. Note that  $\Gamma R$ operation on the set of 4 symbols which label indices of the product basis states is $\left\lbrace{1, 2,3,4}\right\rbrace \xrightarrow{R}\left\lbrace{1, 3, 2,4}\right\rbrace \xrightarrow{\Gamma} \left\lbrace{1,4,2,3}\right\rbrace$. 
Thus, iterating $\Gamma R$ thrice results in
\begin{multline*}
\left\lbrace{1,2,3,4}\right\rbrace \xrightarrow{\Gamma R} \left\lbrace{1,4,2,3}\right\rbrace \xrightarrow{\Gamma R} \left\lbrace{1,3,4,2}\right\rbrace \xrightarrow{\Gamma R} \left\lbrace{1,2,3,4}\right\rbrace.
\end{multline*}
Therefore, 2-unitaries are period-3 fixed points of the $\mathcal{M}_{\Gamma R}$ map:
\beq
\mathcal{M}_{\Gamma R}^3[\mathsf{U}] =\mathsf{U}.
\eeq

\subsection{Structured dual unitaries from stochastic maps \label{Sec:spldual}}
Analytic constructions of dual unitaries are obtained by multiplying T-dual unitaries with the swap $S$. The families of T-dual unitary operators that have been analytically constructed so far, mostly have a block-diagonal structure, or are permutations (which can also be block-diagonal) \cite{ASA_2021,prosen2021many}. 
Dual-unitary permutations preserve dual-unitarity under multiplication (both left as well as right ) by arbitrary diagonal unitaries \cite{ASA_2021}. We refer to this property of dual unitary permutations as an enphasing symmetry,
and is a kind of gauge freedom enjoyed by these matrices. This symmetry is also present in the uniform block-diagonal constructions. The iterative algorithms discussed above, in general, do not lead to dual unitaries with this symmetry. In these cases, no special structure of dual is usually evident,  as illustrated in Fig.~(\ref{fig:MRmapoutput}).  
\begin{figure}
\includegraphics[scale=0.5]{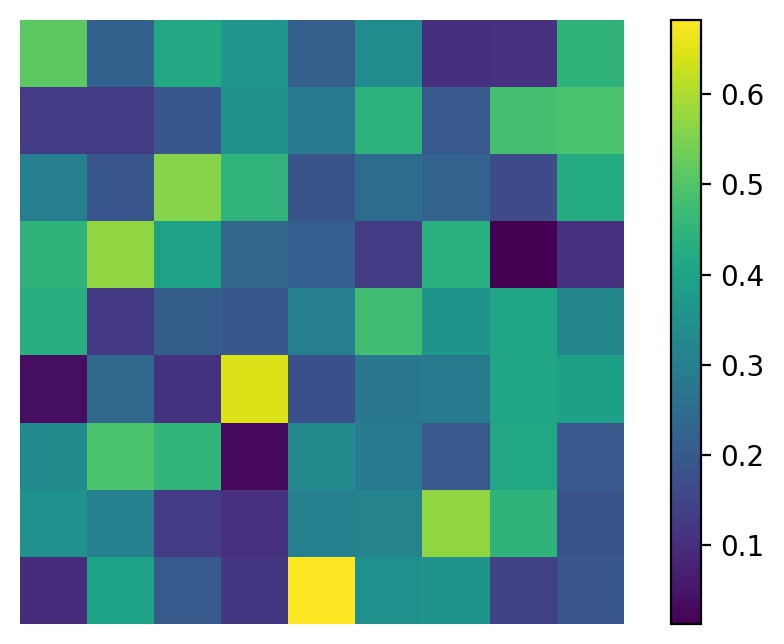}
\caption{Action of $\mathcal{M}_{R}$ map, 1000 times, on a random seed unitary of size $d^2=9$ results in an approximate dual unitary which have typically all $d^4$ entries non-zero. Shown are the absolute value of the matrix elements in one such instance.}
\label{fig:MRmapoutput}
\end{figure}

In this section we demonstrate that modified algorithms can be defined that are capable of resulting in dual and 2-unitaries with block-diagonal structures or enphased permutations, and hence afford some degree of control or design. This is achieved by incorporating in the algorithm random diagonal unitaries which preserve the dual-unitary property of structured matrices.


One such algorithm $\mathbb{M}_{R}$ which converges to dual unitaries with the enphasing symmetry is defined as
\beq
\mathbb{M}_{R}:\;\;U_0 \xrightarrow{R} U_0^R  \xrightarrow{\text{PD}} U_1' \rightarrow U_1=D_1 \; U_1' \; D_2\;,
\label{eq:MRkickmap}
\eeq
where $D_1,\,D_2$ are diagonal unitaries with random phases. Note that the map is no longer deterministic, as $U_0$ does not uniquely determine $U_1$.  The map converges (in all cases that we have encountered for $d=2,\,3,\,4$) to dual unitaries that remain dual-unitary on multiplication by arbitrary diagonal unitaries.

Starting from a random seed unitary $U_0$, the map converges to dual unitaries with different block structures as shown in Fig.~(\ref{fig:block_d_3}) for $d^2=9$. For the sake of convenience we have shown the non-zero elements of the corresponding T-dual unitary to the dual-unitary obtained from the map. It is known that a block-diagonal unitary of size $d^2$ is T-dual if the size of each block is a multiple of $d$ \cite{ASA_2021}. The map indeed yields T-dual unitaries which are block-diagonal and size of each block is multiple of $d$ as shown in Fig.~(\ref{fig:block_d_3}) for $d^2=9$. The resulting dual unitaries are of the following form (up to multiplication by $S$):
\begin{enumerate}
\item[(i)]
$
U=\oplus_{i=1}^3 u_i,\;u_i \in \mathcal{U}(3).$
\item[(ii)]
$U=u_1 \oplus u_2,\;u_1\in \mathcal{U}(6),\,u_2\in \mathcal{U}(3).$
\end{enumerate}
Due to their peculiar structure these dual unitaries remain dual-unitary under multiplication by random diagonal unitaries. This is easy to see for the uniform block case as compared to the non-uniform case in Fig.~(\ref{fig:block_d_3}). 
In the nonuniform case, the $6 \times 6$ block cannot be replaced by an arbitrary unitary matrix. In fact $6 \times 6$ unitary matrix acting on $\mathcal{C}^2 \otimes  \mathcal{C}^3$ should satisfy T-dual unitarity \cite{ASA_2021}. If we require in addition that the duality (or T-duality equivalently) is preserved under multiplication by diagonal unitaries, a subset is picked, an example being shown in Fig.~(\ref{fig:block_d_3}). Note the peculiar structure of the $6 \times 6$ block in the non-uniform case. It consists of three $2 \times 2$ unitary matrices arranged in such a way that multiplication by arbitrary diagonal unitaries preserves T-duality.
 \begin{figure}
\includegraphics[scale=0.5]{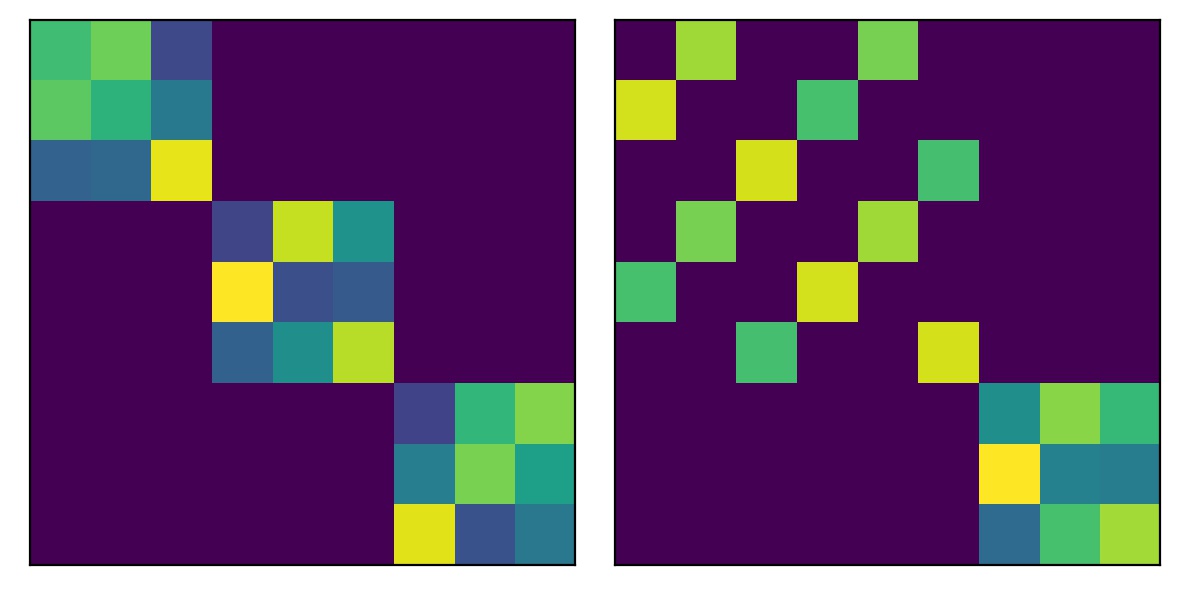}
\caption{(Color online) Structured T-dual unitaries obtained using the $\mathbb{M}_R$ map for $d=3$. (Left) T-dual unitary consisting of three blocks (unitary matrices) of size 3. (Right) T-dual unitary with a $6 \times 6$ block and a $3 \times 3$ block.}
\label{fig:block_d_3}
\end{figure}
 
We have checked that similar structured matrices are obtained for $d=4$ and $5$.  For $d^2=16$, the map yields dual unitaries which are of the following forms (up to multiplication by $S$),
\begin{enumerate}
\item[(i)]
$U=\oplus_{i=1}^4 u_i$, $u_i \in \mathcal{U}(4)$,
\item[(ii)]
$U=\oplus_{i=1}^2 u_i$, $u_i \in \mathcal{U}(8)$, and
\item[(iii)]
$U=\oplus_{i=1}^2 u_i$, $u_1 \in \mathcal{U}(4)$, $u_2 \in \mathcal{U}(12)$.
\end{enumerate}
These block structures are compatible with the analytical constructions of dual unitary operators based on block-diagonal unitaries. 

To obtain structured 2-unitaries, we define $\mathbb{M}_{\Gamma R}$ map as follows,
\beq
U_0 \xrightarrow{R} U_0^R \xrightarrow{\Gamma} \left(U_0^{R}\right)^{\Gamma}:=U_0^{\Gamma R} \xrightarrow{\text{PD}} U_1' \rightarrow U_1=D_1 \; U_1' \; D_2\;.
\label{eq:MTRkickmap}
\eeq
For $d^2=9$, it is observed that for a random seed unitary if the map converges to 2-unitary then it is a 2-unitary permutation matrix up to multiplication by diagonal unitaries as shown in Fig.~(\ref{fig:MTRkickmapoutput}). There is only one non-zero element in each row positioned in such a way that the whole arrangement of non-zero entries in a 2-unitary permutation matrix is directly related to orthogonal Latin squares which we elaborate in next sections. The map is not as efficient as it's deterministric counterpart $\mathcal{M}_{\Gamma R}$ in yielding 2-unitaries from random seed unitaries. However, it demonstrates that one can obtain structured 2-unitary operators of desired symmetry and can be used to gain insights about the most general constructions of such special unitary operators,
\begin{figure}
\includegraphics[scale=0.5]{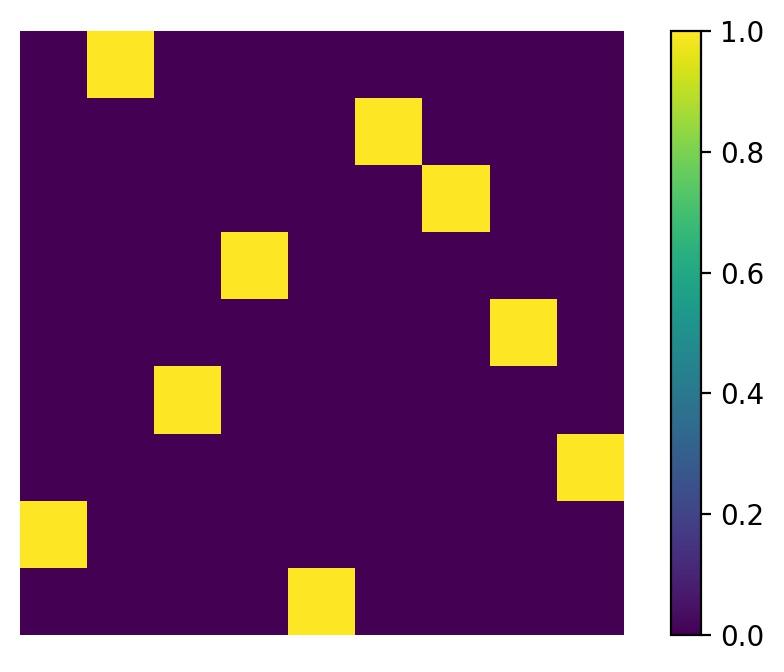}
\caption{Action of $\mathbb{M}_{\Gamma R}$ map on a random seed unitary of order 9. The map converges to 2-unitary permutation matrix (up to phases). The only non-zero element in each row or column is shown by a yellow square.}
\label{fig:MTRkickmapoutput}
\end{figure}
We have observed that multiplying at each step of the map with random, but structured, unitaries other than diagonal unitaries can also yield structured dual matrices, provided duality is preserved under such operations.

\section{Dynamical map in the two-qubit case \label{Sec:weyldynamics}}

The $\mathcal{M}_R$ map is now studied explicitly and analytically in the case of two qubit unitary operators. The nonlocal part of the operators is well-known in this case.  As the map has been shown to be covariant under local unitary transformations, see Eq.~(\ref{eq:orbitpreserve}), it is sufficient to consider its action on the nonlocal part.
The subset of dual-unitary matrices is known explicitly in this case and we can calculate the rate at which arbitrary seeds approach the dual set. We will find that those that approach the {\sc swap} gate $S$ do so algebraically slowly, while generically the approach is exponential. 

Any unitary operator in $\mathcal{U}(4)$ can be written as $(u_1 \otimes u_2) U (v_1 \otimes v_2)$, where $u_i$ and $v_i$ are single qubit unitaries in $\mathcal{U}(2)$, and 
\begin{equation}
U = \exp\left[-i \sum_{k=1}^3 c_k \,(\sigma_k \otimes \sigma_k)\right].
\label{eq:cartan}
\end{equation}  
Here $\sigma_k$ are Pauli matrices, $c_k \in \mathbb{R}$ are \textit{Cartan coefficients}, and $U$ is the nonlocal part of the canonical Cartan form \cite{KBG01,KC01,Zhang2003}. 
The so-called ``Weyl chamber" \cite{Zhang2003} is a tetrahedron formed by considering the subset of $c_k$'s,
\begin{align}
& 0 \leq |c_3| \leq c_2 \leq c_1 \leq \frac{\pi}{4} 
\label{eq:weyl}
\end{align}
The $c_k$'s in the Weyl chamber which uniquely identify local unitarily inequivalent gates, is also termed as the gate's information content \cite{Kus2013}. The nonlocal part of $U$ we refer below simply as the Cartan form.  
For two qubit dual unitaries \cite{Bertini2019},
\beq
c_1=c_2=\frac{\pi}{4},\;\; c_3 \in \left[0, \frac{\pi}{4} \right],
\label{eq:Cartan_dualrange}
\eeq
and provides the complete parametrization of the nonlocal part. An equivalent parametrization is not known in higher dimensions.

\subsection{$\mathcal{M}_R$ map in the Weyl chamber} 
While the map has been defined on general unitary matrices, 
the overall phase has no impact on entanglement and the map can be 
defined as an action on $\mathcal{SU}(4)$, with $\det(U)=1$, to itself
by removing the phase at each step. This turns out to be very useful
for the qubit case.

Consider a seed unitary in Cartan form as  

\begin{align}
&U_0= \exp\left[-i \sum_{k=1}^3 c_k^{(0)} \,(\sigma_k \otimes \sigma_k)\right] = \\
     &    \left( \begin{array}{cccc}
    e^{-ic_3^{(0)}}c_-^{(0)}       & 0 & 0  & -ie^{-ic_3^{(0)}}s_-^{(0)} \\
    0       & e^{ic_3^{(0)}}c_+^{(0)} & -ie^{ic_3^{(0)}}s_+^{(0)} & 0 \\
    0       & -ie^{ic_3^{(0)}}s_+^{(0)} & e^{ic_3^{(0)}}c_+^{(0)} & 0 \\
    -ie^{-ic_3^{(0)}}s_-^{(0)}       & 0 & 0  & e^{-ic_3^{(0)}}c_-^{(0)}  \nonumber
\end{array} \right)
\label{eq:Carmatrix}
\end{align}
where,
\beq
\begin{split}
c_\pm^{(n)} &= \cos(c_1^{(n)}\pm c_2^{(n)}),\\
s_\pm^{(n)}&= \sin(c_1^{(n)}\pm c_2^{(n)}),\; n=0,1,\cdots.
\end{split}
\label{eq:candsplusminus}
\eeq

Note that $U_0 \in \mathcal{SU}(4)$ and we would like the subsequent iterations to also satisfy this property: it also becomes easy to identify the Cartan coefficients $c_i$ at every step.
A crucial property of the map is that it preserves the matrix form of $U_0$ such that $U_1$ has exactly the same structure; see appendix (\ref{app:map}). 

Let 
\begin{equation}
    U_n = 
     \left( \begin{array}{cccc}
    \alpha_n & 0 & 0 & \beta_n \\
    0 & \delta_n & \gamma_n & 0 \\
    0 & \gamma_n & \delta_n & 0 \\
    \beta_n & 0 & 0 & \alpha_n 
    \end{array} \right)\; \in \;\mathcal{SU}(4)\,,
\label{eq:carabc0}
\end{equation}
where
\begin{equation} \begin{split} & \alpha_n = e^{-ic_3^{(n)}}c_-^{(n)},\, \beta_n = -ie^{-ic_3^{(n)}}s_-^{(n)},\\ & \gamma_n = -ie^{ic_3^{(n)}}s_+^{(n)},\, \delta_n = e^{ic_3^{(n)}}c_+^{(n)},
\end{split}
\label{eq:alphadefn}
\end{equation}

The mapping between matrix elements of $U_{n+1}$ and $U_n$ is given by  
\begin{equation}
\begin{split}
\alpha_{n+1}=&\frac{e^{-i\,\frac{\chi_{n+1}}{4}}}{2}\left[ \frac{\alpha_n+\delta_n}{|\alpha_n+\delta_n|}+\frac{\alpha_n-\delta_n}{|\alpha_n-\delta_n|}\right]\\
\beta_{n+1}=&\frac{e^{-i\,\frac{\chi_{n+1}}{4}}}{2}\left[ \frac{\alpha_n+\delta_n}{|\alpha_n+\delta_n|}-\frac{\alpha_n-\delta_n}{|\alpha_n-\delta_n|}\right]\\
\gamma_{n+1}=&\frac{e^{-i\,\frac{\chi_{n+1}}{4}}}{2}\left[ \frac{\beta_n+\gamma_n}{|\beta_n+\gamma_n|} -  \frac{\beta_n-\gamma_n}{|\beta_n-\gamma_n|}\right]\\
\delta_{n+1}=&\frac{e^{-i\,\frac{\chi_{n+1}}{4}}}{2}\left[ \frac{\beta_n+\gamma_n}{|\beta_n+\gamma_n|} +  \frac{\beta_n-\gamma_n}{|\beta_n-\gamma_n|}\right],
\end{split}
\label{eq:abcdmap}
\end{equation}
where
\beq
\chi_{n+1}=\text{Arg}[(\alpha_n^2-\delta_n^2)(\beta_n^2-\gamma_n^2)].
\label{eq:chi}
\eeq

The dynamical system is thus a 4-dimensional complex map on the
manifold $S^4 \times S^4$. There are constraints originating from the unitarity condition:  $|\alpha_i|^2+|\beta_i|^2=1$, $|\gamma_i|^2+|\delta_i|^2=1$, $\mbox{Re} (\alpha_i \beta_i^*)=0$ and  $\mbox{Re}( \gamma_i \delta_i^*)=0$, and the $\mathcal{SU}$ condition: $(\alpha_n^2-\beta_n^2)(\delta_n^2-\gamma_n^2)=1$. 
The nonlinear nature of the map is clear as the entries
of the above transformation are themselves functions of other variables.

Rather than the high-dimensional complex map in Eq.~(\ref{eq:abcdmap}), using Eq.~(\ref{eq:alphadefn}) one obtains a 3-dimensional real map in terms of the Cartan coefficients. 
Defining $\theta_{\pm}^{(n)}=\text{Arg}(\alpha_n \pm \delta_n)$ and $\phi_{\pm}^{(n)}=\text{Arg}(\beta_n \pm \gamma_n)$, the complex map in Eq.~(\ref{eq:abcdmap}) simplifies to
\beq
\begin{split}
&c_1^{(n+1)}  = \frac{1}{4}(-\theta_{+}^{(n)}+\theta_{-}^{(n)}-\phi_{+}^{(n)}+\phi_{-}^{(n)}),\\
&c_2^{(n+1)} = \frac{1}{4}(\theta_{+}^{(n)}-\theta_{-}^{(n)}-\phi_{+}^{(n)}+\phi_{-}^{(n)}),\\
&c_3^{(n+1)}  =\frac{1}{4}(-\theta_{+}^{(n)}-\theta_{-}^{(n)}+\phi_{+}^{(n)}+\phi_{-}^{(n)}).
\label{eq:cartan_gen2}
\end{split}
\eeq
Numerically it is observed that the Cartan coefficients of $U_{n+1}=\mathcal{M}_R[U_n]$ obtained from the above 3-dimensional map agree for all even $n$ with those calculated using the numerical algorithm presented in Refs.~\cite{Childs_2003,Kus2013} and also satisfy Eq.~(\ref{eq:weyl}). However for odd $n$ although $c_3^{(n)}$ values still agree but $c_1^{(n+1)}$ and $c_2^{(n+1)}$ values differ from the numerical value  by $\pi/2$. In order to obtain the desired Cartan coefficients satisfying Eq.~(\ref{eq:weyl}) from the above 3-dimensional map, one needs to replace $c_2^{(n+1)}$ by $\pi/2-c_2^{(n+1)}$ for all odd $n$.


Before we simplify the 3-dimensional map given by Eq.~(\ref{eq:cartan_gen2}), we first prove the following two theorems.

\begin{thm}
For two-qubit gates of the form Eq.~(\ref{eq:cartan}) self-dual unitaries ($U=U^R$) are the only fixed points of the $\mathcal{M}_{R}$ map.
\end{thm}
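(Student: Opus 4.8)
The plan is to reduce to the canonical Cartan form and then exploit the explicit form of the map derived in Eq.~(\ref{eq:cartan_gen2}). By Proposition~\ref{prop:loc_orbit_map} it is enough to take the seed $U_0$ of the form Eq.~(\ref{eq:cartan}), parametrized by $(c_1^{(0)},c_2^{(0)},c_3^{(0)})$ in the Weyl chamber Eq.~(\ref{eq:weyl}) and lying in the domain of $\mathcal{M}_R$ (i.e.\ $U_0^R$ full rank). First I would record that, on the matrix form Eq.~(\ref{eq:carabc0}), the realignment $R$ acts by the single interchange $\beta_0\leftrightarrow\delta_0$; this is a one-line check from the definition of $R$. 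Consequently $U_0$ is self-dual (Definition~\ref{def:selfdual}) precisely when $\beta_0=\delta_0$, and since $\beta_0=-ie^{-ic_3^{(0)}}\sin(c_1^{(0)}-c_2^{(0)})$, $\delta_0=e^{ic_3^{(0)}}\cos(c_1^{(0)}+c_2^{(0)})$, in the Weyl chamber this happens iff $c_1^{(0)}=c_2^{(0)}=\pi/4$ (with $c_3^{(0)}$ arbitrary). The ``if'' direction is then immediate, as already observed: if $U_0^R=U_0$ is unitary, its polar factor is itself and $\mathcal{M}_R[U_0]=U_0$.

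For the ``only if'' direction, assume $U_0$ is a fixed point and use Eq.~(\ref{eq:cartan_gen2}) for the first iterate (which lands in the Weyl chamber, so no $\pi/2$ correction is needed). Subtracting the $c_1^{(1)}$ and $c_2^{(1)}$ equations and imposing $c_i^{(1)}=c_i^{(0)}$ yields
\beq
2\bigl(c_1^{(0)}-c_2^{(0)}\bigr)=\theta_-^{(0)}-\theta_+^{(0)},
\eeq
with $\theta_\pm^{(0)}=\mathrm{Arg}(\alpha_0\pm\delta_0)$. Writing $\alpha_0+\delta_0\propto\cos c_1^{(0)}\cos c_2^{(0)}\cos c_3^{(0)}-i\sin c_1^{(0)}\sin c_2^{(0)}\sin c_3^{(0)}$ and $\alpha_0-\delta_0\propto\sin c_1^{(0)}\sin c_2^{(0)}\cos c_3^{(0)}-i\cos c_1^{(0)}\cos c_2^{(0)}\sin c_3^{(0)}$ (both in the fourth quadrant for $c_3^{(0)}\ge0$) gives $\tan(-\theta_+^{(0)})=\tan c_1^{(0)}\tan c_2^{(0)}\tan c_3^{(0)}$ and $\tan(-\theta_-^{(0)})=\tan c_3^{(0)}/(\tan c_1^{(0)}\tan c_2^{(0)})$. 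Since $\tan c_1^{(0)}\tan c_2^{(0)}\le1$ in the Weyl chamber, $\tan(-\theta_+^{(0)})\le\tan(-\theta_-^{(0)})$, hence $\theta_-^{(0)}-\theta_+^{(0)}\le0$; combined with $c_1^{(0)}\ge c_2^{(0)}$ this forces $c_1^{(0)}=c_2^{(0)}$.

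Once $c_1^{(0)}=c_2^{(0)}$ we have $\beta_0=0$, so in the index ordering $(1,4,2,3)$ the matrix $U_0^R$ is block diagonal with upper block $\alpha_0 I+\delta_0\sigma_x$ and lower block $\gamma_0\sigma_x$, whereas $U_0$ itself has upper block $\alpha_0 I$ and lower block $\delta_0 I+\gamma_0\sigma_x$. The polar factor of a block-diagonal matrix is block diagonal, and the polar factor of $\gamma_0\sigma_x$ is $(\gamma_0/|\gamma_0|)\sigma_x$, which has no component along $I$ (here $\gamma_0\ne0$, since $\gamma_0=0$ would force $c_1^{(0)}=c_2^{(0)}=0$ and then $c_3^{(0)}=0$, i.e.\ $U_0=I$, whose realignment is rank deficient, hence outside the domain). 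As $\mathcal{M}_R[U_0]$ equals this polar factor up to the overall $\mathcal{SU}(4)$ phase and must equal $U_0$, matching the lower block forces $\delta_0=0$, i.e.\ $\cos(c_1^{(0)}+c_2^{(0)})=\cos 2c_1^{(0)}=0$, so $c_1^{(0)}=c_2^{(0)}=\pi/4$; that is, $U_0$ is self-dual. Combining the two directions proves the statement.

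The part I expect to be most delicate is the boundary bookkeeping rather than the algebra. The sign argument fixing $c_1^{(0)}=c_2^{(0)}$ was stated for $c_3^{(0)}\ge0$; for $c_3^{(0)}<0$ the quadrants of $\alpha_0\pm\delta_0$ and $\beta_0\pm\gamma_0$ change, so the arctangent branches in Eq.~(\ref{eq:cartan_gen2}) must be re-read carefully before re-running the same inequality (it goes through identically and again isolates the family $c_1^{(0)}=c_2^{(0)}=\pi/4$, which is exactly the self-dual set for either sign of $c_3^{(0)}$). One must also keep track of where $\mathcal{M}_R$ is actually defined, i.e.\ excise the measure-zero locus of Cartan forms with $U_0^R$ rank deficient (in particular $c_2^{(0)}=c_3^{(0)}=0$), so that "fixed point'' is meaningful there.
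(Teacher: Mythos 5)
Your route is genuinely different from the paper's: the paper never passes to the Cartan angles but instead writes the fixed-point equation $\mathcal{M}_R[U_0]=U_0$ directly in the matrix elements $\alpha_0,\beta_0,\gamma_0,\delta_0$, multiplies by conjugates, and uses the unitarity constraints $\mathrm{Re}(\alpha_0\beta_0^*)=\mathrm{Re}(\gamma_0\delta_0^*)=0$ to force $k_\pm^{(0)}=l_\pm^{(0)}=1$, whence $\beta_0=\delta_0$. However, your proof has a structural gap at the very first step. Proposition~\ref{prop:loc_orbit_map} gives \emph{covariance}, $\mathcal{M}_R[(u_1\otimes u_2)U_0(v_1\otimes v_2)]=(u_1\otimes v_1^T)\mathcal{M}_R[U_0](u_2^T\otimes v_2)$, not invariance of the fixed-point property: if $U_0'$ is a fixed point and $U_0'\stackrel{\text{LU}}{\sim}U_0$ with $U_0$ in the Weyl chamber, one only obtains $\mathcal{M}_R[U_0]=(I\otimes v_1^*u_2)\,U_0\,(v_1u_2^*\otimes I)$, which does not make $U_0$ a fixed point. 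So you cannot reduce to the Weyl chamber. This matters concretely: the theorem is stated for all gates of the form Eq.~(\ref{eq:cartan}) with $c_k\in\mathbb{R}$, and there are fixed points with $\beta_0=\delta_0\neq 0$ (the paper's Eq.~(\ref{eq:gen_selfdual}); e.g.\ $(c_1,c_2,c_3)=(0,\pi/4,\pi/4)$ gives $\beta_0=\delta_0=e^{i\pi/4}/\sqrt{2}$). These are self-dual, so the theorem holds for them, but they lie outside the Weyl chamber, your characterization ``self-dual iff $c_1=c_2=\pi/4$'' fails for them, and your only-if argument never sees them. The paper's matrix-element proof handles this general case and explicitly records both families Eq.~(\ref{eq:selfdual}) and Eq.~(\ref{eq:gen_selfdual}).

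A second, smaller gap is your parenthetical claim that the first iterate of Eq.~(\ref{eq:cartan_gen2}) ``lands in the Weyl chamber, so no $\pi/2$ correction is needed.'' The paper explicitly warns that the raw formula returns $c_1,c_2$ values shifted by $\pi/2$ relative to the Weyl-chamber coefficients for odd iterates (see the XXX discussion, where $c_1^{(1)}=\pi/2-c_2^{(1)}\geq\pi/4$), and your key identity $2\bigl(c_1^{(0)}-c_2^{(0)}\bigr)=\theta_-^{(0)}-\theta_+^{(0)}$ changes entirely if the fixed-point condition must instead be imposed as $c_2^{(0)}=\pi/2-c_2^{(1)}$. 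You flag this as delicate but do not resolve it; since your whole monotonicity argument ($\tan c_1\tan c_2\leq 1$ forcing $\theta_--\theta_+\leq 0$) hangs on that identity, the branch bookkeeping needs to be done, not deferred. That said, the subsequent block-diagonal step (polar factor of $\gamma_0\sigma_x$ has no identity component, forcing $\delta_0=0$) is correct and is an attractive shortcut; if you replace the Weyl-chamber reduction by an argument on general $(\alpha_0,\beta_0,\gamma_0,\delta_0)$ subject only to unitarity and the $\mathcal{SU}(4)$ constraint, as the paper does, the block idea could still be salvaged into a complete proof.
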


\begin{thm}
For two-qubit gates of the form Eq.~(\ref{eq:cartan}) dual unitaries are the only fixed points of the $\mathcal{M}_{R}^2$ map i.e., $\mathcal{M}_{R}^2[U_0]=U_0$ iff $U_0$ is a dual-unitary.
\end{thm}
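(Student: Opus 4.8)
The ``if'' direction is immediate from the preceding discussion: if $\mathsf{U}$ is dual-unitary then $\mathcal{M}_R[\mathsf{U}]=\mathsf{U}^R$ is unitary, and since realignment is an involution $\mathcal{M}_R^2[\mathsf{U}]=(\mathsf{U}^R)^R=\mathsf{U}$. (A gate of the form Eq.~(\ref{eq:cartan}) that is dual-unitary has $c_1=c_2=\pi/4$, hence $\beta=\delta=0$ in Eq.~(\ref{eq:carabc0}); it is then self-dual and already a genuine fixed point of $\mathcal{M}_R$.) The content is the converse, which I would prove in three steps.

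First, suppose $\mathcal{M}_R^2[U_0]=U_0$ with $U_0$ of the form Eq.~(\ref{eq:cartan}). By covariance of the map, Eq.~(\ref{eq:orbitpreserve}), we may assume the Cartan coefficients of $U_0$ lie in the Weyl chamber, so $U_0$ is the matrix Eq.~(\ref{eq:carabc0}) in $\mathcal{SU}(4)$, and since the map preserves this form so do all its iterates. If $\mathcal{M}_R[U_0]=U_0$, Theorem~1 already gives that $U_0$ is self-dual, hence dual. So it remains to show that there is no genuine period-two orbit $\{U_0,\,U_1:=\mathcal{M}_R[U_0]\}$, $U_0\neq U_1$, inside the family Eq.~(\ref{eq:carabc0}) with $U_0$ not dual.

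Second, a dimension-independent reduction. Writing $U_1=\mathcal{M}_R[U_0]$ as the polar unitary of $U_0^R$ gives $\langle U_0^R,U_1\rangle=\|U_0^R\|_1$, and likewise $\langle U_1^R,U_0\rangle=\|U_1^R\|_1$; since realignment preserves the Hilbert--Schmidt inner product and is an involution, $\langle U_0^R,U_1\rangle=\langle U_0,U_1^R\rangle=\overline{\langle U_1^R,U_0\rangle}$, whence $\|U_0^R\|_1=\|U_1^R\|_1$. But the singular values of $U_0^R$ have $\ell^2$-norm $\big(\mathrm{Tr}\,U_0^R U_0^{R\dagger}\big)^{1/2}=(\mathrm{Tr}\,U_0 U_0^\dagger)^{1/2}=d$, so by Cauchy--Schwarz $\|U_0^R\|_1\le d^2$, with equality iff all $d^2$ singular values equal $1$, i.e.\ iff $U_0^R$ is unitary, i.e.\ $U_0$ is dual. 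Hence for a genuine non-dual orbit one must have $\|U_0^R\|_1=\|U_1^R\|_1<d^2$, the ``strict'' case.

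Third, for $d=2$ one excludes this strict case using the explicit reduced map Eq.~(\ref{eq:cartan_gen2}). A short calculation turns one step of it into rational expressions for $\tan 2M'$, $\tan 2P'$ and $c_3'$ as functions of $\cos 2c_1$, $\cos 2c_2$ and $\sin 2c_3$, where $M=c_1-c_2$, $P=c_1+c_2$; composing these twice and imposing $\tan 2M^{(2)}=\tan 2M^{(0)}$, $\tan 2P^{(2)}=\tan 2P^{(0)}$, $c_3^{(2)}=c_3^{(0)}$ yields a system whose only Weyl-chamber solution is $M^{(0)}=0$, $P^{(0)}=\pi/2$, i.e.\ $c_1^{(0)}=c_2^{(0)}=\pi/4$. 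I expect this to be the main obstacle: the twice-iterated map is a rational function of several tangents, and one must track carefully the arctangent branches and the ``Weyl folding'' ($c_2\mapsto\pi/2-c_2$ at the odd intermediate step) needed to keep the iterates' coefficients canonical. In effect this is the $d^2=4$ analogue of the property that fails at $d^2=9$, namely that there is no ``conspiring pair'' $U_0\neq U_1$ with each the polar unitary of the other's realignment. A cleaner route, were it available, would be a strict Lyapunov function for $\mathcal{M}_R^2$ on the Weyl chamber that vanishes exactly on the dual set and is invariant under the folding, a natural candidate being $\cos^2 2c_1+\cos^2 2c_2$; a genuine decrease of such a quantity off the dual set would exclude period-two orbits outright, but proving that decrease is itself a nontrivial task — stronger than the exponential convergence rate obtained in the next subsection.
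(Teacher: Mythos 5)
Your ``if'' direction is correct, and your observation that a canonical-form dual unitary has $\beta=\delta=0$ and is therefore already self-dual (a period-one point, consistent with Theorem~1) matches the paper. Your second step --- that $U_1$ being the polar unitary of $U_0^R$ gives $\langle U_0^R,U_1\rangle=\|U_0^R\|_1$, and that the involution/isometry property of $R$ forces $\|U_0^R\|_1=\|U_1^R\|_1$ along any period-two orbit --- is a valid and rather elegant dimension-independent observation, but as you note it only shows the two members of the orbit are ``equally non-dual''; it excludes nothing.

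The genuine gap is your third step, which is where the entire content of the theorem lives, and which you do not carry out. You assert that composing the angle map Eq.~(\ref{eq:cartan_gen2}) twice and imposing the period-two condition ``yields a system whose only Weyl-chamber solution is $c_1^{(0)}=c_2^{(0)}=\pi/4$,'' but you neither perform the composition nor exhibit the system, and you yourself flag the arctangent branches and the odd-step Weyl folding $c_2\mapsto\pi/2-c_2$ as unresolved obstacles. The proposed Lyapunov alternative is likewise only a candidate with no proof of monotonicity. Note that the existence of non-dual period-two orbits for $d^2=9$ (Appendix~\ref{app:nondualFP}) shows the claim is genuinely dimension-dependent, so it cannot be waved through: something specific to $d=2$ must be exploited. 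The paper's proof does exactly this by avoiding the angle parametrization altogether: it writes the two fixed-point systems Eqs.~(\ref{eq:period_two_U1})--(\ref{eq:period_two_U2}) directly in the matrix elements $\alpha_n,\beta_n,\gamma_n,\delta_n$, and repeatedly multiplies these equations by conjugates of other entries and takes real parts, using the unitarity constraints $\mathrm{Re}(\alpha_n\beta_n^*)=\mathrm{Re}(\gamma_n\delta_n^*)=0$ to derive ratio identities that force $k_\pm^{(n)}=l_\pm^{(n)}=1$, i.e.\ $|\alpha_0\pm\delta_0|=|\beta_0\pm\gamma_0|=1$; this in turn gives $\mathrm{Re}(\alpha_0\delta_0^*)=\mathrm{Re}(\beta_0\gamma_0^*)=0$, which is precisely unitarity of $U_0^R$. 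Until you either complete the twice-iterated rational computation with all branches tracked, or supply the analogous algebraic chain, the converse direction remains unproven.
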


Proofs of the above theorems are presented in appendix \ref{app:fixed_point_thms}.

Consider a two-qubit seed unitary $U_0$ parametrized by the Cartan parameters $c_1^{(0)},c_2^{(0)},c_3^{(0)}$. Under the $\mathcal{M}_R$ map $U_0$ is mapped to $U_1$ which is parametrized by $c_1^{(1)},c_2^{(1)},c_3^{(1)}$. The $\mathcal{M}_R$ map can be viewed most economically as a 3-dimensional dynamical map on the Cartan parameters,
\begin{equation} 
\begin{split} 
U_n & \xrightarrow{\mathcal{M}_R} U_{n+1},\\
(c_1^{(n)},c_2^{(n)},c_3^{(n)}) & \xrightarrow{\mathcal{M}_R} (c_1^{(n+1)},c_2^{(n+1)},c_3^{(n+1)}).
\end{split}
\end{equation}

\subsection{Deriving the map for special initial conditions}
Although we are unable to derive explicit maps in terms of these parameters for general $c_i^{(0)}$, we are able to do so for special values. We show that these converge to the desired fixed points, $c_1^{(n\rightarrow \infty )}=\pi/4,\;c_2^{(n\rightarrow \infty )}=\pi/4\,,$
and $c_3^{(n\rightarrow \infty )} \in [0,\pi/4]$, which is the set of dual-unitary operators. This is depicted in Fig.~(\ref{fig:weylMRmap}) for few random realizations evolved under the map for $n=10$ steps. 
\begin{figure}
\includegraphics[scale=0.5]{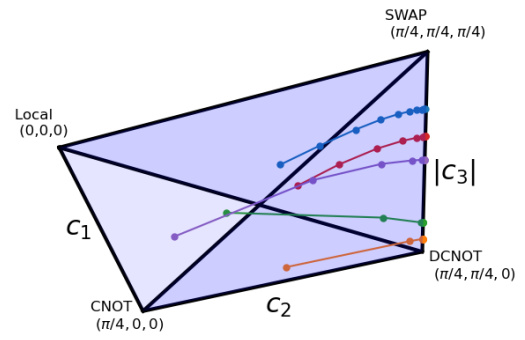}
\caption{Trajectories of five random realizations of two-qubit gates are shown inside the Weyl chamber under action of the map for $n=10$ steps. The edge joining the {\sc swap} gate and the {\sc dcnot} gate corresponds to dual unitaries to which the map converges.}
\label{fig:weylMRmap}
\end{figure}
For the general case we argue why this happens and also derive the rate of exponential approach. 

\subsubsection{XY family: plane $c_3=0$}

The first special case is when $c_3^{(0)}=0$ and $0<c_2^{(0)}\leq c_1^{(0)}$.
In this case, using Eq.~(\ref{eq:abcdmap}), we can see that a {\sl single} application of the $\mathcal{M}_{R}$ map results in the following unitary:
\[U_1=\left(
\begin{array}{cccc}
1 & 0 & 0 & 0\\
0 & 0 & -i &  0 \\
0 & -i  & 0 &  0 \\
 0 & 0 & 0 & 1
\end{array}
\right),
\]
with $\det(U_1)=\det(U_0)=1$. Thus the map preserves the $\mathcal{SU}$ property of seed unitaries. Cartan parameters for $U_1$ are: $c_1^{(1)}=c_2^{(1)}=\pi/4,\,c_3^{(1)}=0$, and therefore $U_1$ is dual-unitary. This gate is LU equivalent to the gate {\sc dcnot} \cite{Kus2013} which is $S \times \mbox{{\sc cnot}}$. Explicitly,
\begin{align}
U_{\text{DCNOT}}& =(H \otimes I)\,U_1 \,(D_1 \otimes D_1\, H)  \nonumber \\
&=\begin{pmatrix}
1 & 0 & 0 & 0 \\
0 & 0 & 0 & 1 \\
0 & 1 & 0 & 0 \\
0 & 0 & 1 & 0   
\end{pmatrix},
\end{align}
 where $H=\frac{1}{\sqrt{2}}\begin{pmatrix}
1 & 1 \\
1 & -1 
\end{pmatrix}$ is the Hadamard gate and $D_1=P_{\frac{\pi}{2}}=\begin{pmatrix}
1 & 0 \\
0 & i 
\end{pmatrix}$ is a phase gate.

Thus the entire interior of the base of the Weyl chamber, $c_3=0$ plane, is mapped to the same dual-unitary gate $U_1$ in just one step and the rate at which it happens is infinite. 
\subsubsection{XXX family: $c_1=c_2=c_3$} 
Let $c_1=c_2=c_3 = c \in [0,\pi/4]$ in Eq.~(\ref{eq:cartan}), the single parameter family of unitary operators $U$,
\beq
U=\exp(-i\,c\sum_{i=1}^3 \sigma_i \otimes \sigma_i). 
\label{eq:xxxuni}
\eeq
This forms an edge of the Weyl chamber, the one that connects local unitaries to the {\sc swap} gate $S$.
Unitaries of this form are useful in many contexts such as in the trotterization of integrable isotropic (XXX) Heisenberg Hamiltonian \cite{Vanicat_2018}. They are also, modulo phases, the fractional powers of the {\sc swap} gate $S$ as $U = \exp \left(-2\,i\,c\, S\right)$.
 
If we choose the seed unitary $U_0$ from this family with $c=c^{(0)}$, it follows from Eq.~(\ref{eq:alphadefn}) that $\beta_0=0$. Action of the map on $U_0$ gives
\beq
U_1=\begin{pmatrix}
\alpha_1 & 0 & 0 & \beta_1 \\
0 & 0 & \gamma_1 & 0 \\
0 & \gamma_1 & 0 & 0 \\
\beta_1 & 0 & 0 & \alpha_1
\end{pmatrix} ,
\eeq

for which $\delta_1=0$. Note that $U_1$ is not exactly of the same form as $U_0$ for which $\beta_0=0$. In fact for all even (odd) $n$, $U_n$ is such that $\beta_n=0$ ($\delta_n=0$). For even $n$, $\beta_n=0$ implies $c_1{(n)}=c_2{(n)}$ both being equal to $c_3^{(n)}$ and thus $U_n$ belongs to the same family. However, for odd $n$ it is observed that although $c_2^{(n)}=c_3^{(n)} \leq \pi/4$, $c_1^{(1)}=\pi/2-c_2^{(1)} \geq \pi/4$ and thus $c_1^{(1)}$ does not satisfy Eq.~(\ref{eq:weyl}). 
Note that $U_1$ with Cartan coefficients $c_1^{(1)}=\pi/2-c^{(1)},c_2^{(n)}=c_3^{(1)}=c^{(1)}$ is {\em not} LU equivalent to a gate with Cartan coefficients $c_1^{(n)}=c_2^{(n)}=c_3^{(1)}=c^{(1)}$, although in the part of the Weyl chamber we are restricted attention to, they are the same points.

As a consequence of this, the 3-dimensional map given by Eq.~(\ref{eq:cartan_gen2}) becomes a 1-dimensional map.  

Let $c^{(n)}$ be the Cartan coefficient parametrizing $U_n=\mathcal{M}_R^n[U_0]$, and
\beq
x_n=1/\tan(2\,c^{(n)}).
\label{eq:xndef}
\eeq
The complications attendant on the ranges of $c_i$ do not affect this variable. In terms of $x_n$ (for a derivation see App.~(\ref{app:CartanAlg})), the map takes a simple algebraic form, 
\beq
x_{n+1}=\frac{2\,x_n}{1+\sqrt{4\,x_n^2+1}}.
\label{eq:XXXunmap}
\eeq
The unique fixed point of the map is $x^*=0$ corresponding to the {\sc swap} gate, and the map is a contraction as shown in the  App.~(\ref{app:CartanAlg}).
Therefore, in the limit of large $n$, $x_n \rightarrow x^*=0$. In this limit, Eq.~(\ref{eq:XXXunmap}) can be approximated as
\beq 
x_{n+1} \approx x_n(1-x_n^2).
\label{eq:u_nlarge}
\eeq
Thus in the vicinity of the fixed point, the difference equation may be approximated by the differential equation
$dx_n/dn=-x_n^3$.
This is simple to solve and gives the large $n$ approximation to the map above as \beq
x_n\approx 1/\sqrt{2n}.
\eeq

\subsubsection{{\sc swap-cnot-dcnot} face; $c_1^{(0)}=\pi/4$}
In this case seed unitaries lie on the {\sc swap-cnot-dcnot} face of the Weyl chamber with $c_1^{(0)}=\pi/4$. Under the action of the map $c_1^{(n)}=\pi/4$ for all $n$ and thus the corresponding map is 2-dimesional defined in terms of $c_2^{(n)}$ and $c_3^{(n)}$. An important property of the map observed in this case, which follows as $c_{\pm}=s_{\mp}$, is that 
the phase in Eq.~(\ref{eq:abcdmap}) $\chi_{n+1}=0$.
This property is crucial for simplifying the map  as shown below.

Defining $y_n=1/\tan^2(2\,c_2^{(n)})$ and $z_n=1/\tan^2(2\,c_3^{(n)})$, the corresponding 2-dimensional map takes a purely algebraic form given by (for a derivation see App.~(\ref{app:CartanAlg}))
\beq
\begin{split}
y_{n+1} & =\frac{y_n}{1+z_n},\\
z_{n+1} & =\frac{z_n}{1+y_n}.
\label{eq:xnyn_alg_form}
\end{split}
\eeq
Although the above map has a symmetric form, due to the specific choice of Cartan parameters Eq.~(\ref{eq:weyl}), the symmetry is broken and the fixed points are $y^*=0$ (or $c_2^{(\infty)}=\pi/4$) and $z^* \in [0,\infty]$ (or $c_3^{(\infty)}\in [0,\pi/4]$) corresponding to the set of dual unitaries. This 2-dimensional map can be solved analytically by noting that
\beq
\Omega=\frac{1+y_{n+1}}{1+z_{n+1}}=\frac{1+y_{n}}{1+z_{n}}
\label{eq:flow_invariant}
\eeq
is an invariant. It's value is determined by the initial conditions as:
\beq 
\Omega=\frac{1+y_0}{1+x_0}=\left(\frac{\sin(2\,c_3^{(0)})}{\sin(2\,c_2^{(0)})}\right)^2 < 1 
\label{eq:Omega}
\eeq 
for $c_3^{(0)}<c_2^{(0)}$.

 Using this to eliminate $z_n$, we have the 1-dimensional map:
\beq
\begin{split}
y_{n+1}= \Omega \frac{y_n}{1+y_n},
\end{split}
\eeq
which has the exact solution 
\beq
y_n= \frac{\Omega^n\,y_0}{1+\left(\dfrac{1-\Omega^n}{1-\Omega}\right)\,y_0},
\label{eq:exact_exp_sol}
\eeq
and implies that
\beq
z_n= \frac{z_0}{\Omega^n+\left(\dfrac{1-\Omega^n}{1-\Omega}\right)\,\Omega\, z_0}.
\eeq
It follows from Eq.~(\ref{eq:exact_exp_sol}) that $y_{\infty}=0$ and $z_{\infty}=\frac{1}{\Omega}-1$ respectively. Also $c_3^{(\infty)}$ which parametrizes the dual unitary to which the map converges can be written explicity in terms of the initial pair $(c_2^{(0)},c_3^{(0)})$ as 
\beq
\begin{split}
c_3^{(\infty)}&=\frac{1}{2}\arctan\left[\sqrt{\frac{\Omega}{1-\Omega}}\right]\\
&=\frac{1}{2}\arctan\left[\frac{\sin(2\,c_3^{(0)})}{\sqrt{\sin^2(2\,c_2^{(0)})-\sin^2(2\,c_3^{(0)})}}\right].
\end{split}
\eeq

Defining $\Delta c_i^{(n)}=c_i^{(\infty)}-c_i^{(n)}$ where $c_1^{(\infty)}=c_2^{(\infty)}=\pi/4$. Note that 
$\Omega=\sin^2(2\,c_3^{(\infty)})$, governs the exponential 
approach to the duals. From the explicit and full solution in Eq.~(\ref{eq:exact_exp_sol}) it follows that
\beq
\begin{split}
\Delta c_2^{(n)} & \sim |\sin 2\,c_3^{(\infty)}|^n,\\
\Delta c_3^{(n)} & \sim |\sin 2\,c_3^{(\infty)}|^{2\,n}.
\label{eq:decay_eqns}
\end{split}
\eeq
We will see below that these continue to hold for the general case as well.

The marginal case $\Omega =1$ corresponds to seed unitaries on the {\sc swap-cnot} edge with $c_2^{(0)}=c_3^{(0)}$ and is dealt separately below.
\subsubsection{{\sc swap-cnot} edge}
For these gates $c_1^{(0)}=\pi/4,\, c_2^{(0)}=c_3^{(0)}$ and is a special case of the face just discussed. In this case, the 2-dimensional map Eq.~(\ref{eq:xnyn_alg_form})  degenerates to a 1-dimensional map given by
\beq
y_{n+1}=\frac{y_n}{1+y_n},
\label{eq:XYYunmap}
\eeq
with $\Omega=1$. This map also can be solved analytically and the solution is given by
\beq
y_n=\frac{y_0}{\sqrt{n\,y_0^2+1}}.
\label{eq:XYYsol}
\eeq
%
%
The approach to the unique fixed point $y^*=0$ is algebraic in contrast to other gates on the {\sc swap-cnot-dcnot} face and goes as  $\sim 1/\sqrt{n}$. Thus the {\sc swap} gate is approached slowly along both the edges that connect it in the Weyl chamber from the locals or from the {\sc cnot} gates. The other edge is the dual-unitary edge that is already a line of fixed points. In fact the entire face of the Weyl chamber containing locals-{\sc swap}-{\sc cnot} is mapped into itself and all initial conditions on this approach the dual-unitary {\sc swap} gate algebraically. This face is characterized by two of the Cartan coefficients being equal, namely $c_2^{(n)}=c_3^{(n)} \equiv c^{(n)}$. In the limit of large $n$, $\Delta c_1^{(n)}=\pi/4-c_1^{(n)} \sim 1/n$ while $\Delta c_2^{(n)}=\Delta c_3^{(n)} =\pi/4-c^{(n)} \sim 1/\sqrt{n}$.


\subsubsection{XXZ family: $c_1=c_2$}

Let us consider now a family of two-qubit gates for which $c_1^{(0)}=c_2^{(0)}=c^{(0)}\in (0,\frac{\pi}{4}]$, and $c_3^{(0)} \leq c^{(0)} \in [0,\frac{\pi}{4}]$. This restricts the seed unitaries to the face of the Weyl chamber that contains locals-{\sc swap}-{\sc dcnot}. Under the action of the map the unitaries remain on this face for even $n$ and up to a local unitary transformation for odd $n$. 

The map is 2-dimensional defined on $c_1^{(n)}=c_2^{(n)}=c^{(n)} \leq \pi/4$ and $c_3^{(n)}\leq c^{(n)}$ is given by
 
\begin{align}
\label{eq:XXZcartanmap1}
c^{(n+1)} & = \frac{\pi}{4}-\frac{1}{4}\arctan\left\lbrace\frac{1}{2} \sin(2\,c_3^{(n)})\left[\frac{1}{\tan^2(c^{(n)})}-\tan^2(	c^{(n)})\right]\right\rbrace,\\
\label{eq:XXZcartanmap2}
c_3^{(n+1)} & = \frac{c_3^{(n)}}{2}+\frac{1}{4}\arctan\left\lbrace\frac{1}{2} \tan(2\,c_3^{(n)})\left[\frac{1}{\tan^2(c^{(n)})}+\tan^2(	c^{(n)})\right]\right\rbrace,
\end{align}

The fixed points consist of $c^*=\pi/4$ and  $c_{3}^*$ can take any value in $[0,\frac{\pi}{4}]$ which is a line of fixed points corresponding to two-qubit dual unitaries.
It is not hard to see that these are the only fixed points of the map. 

The important information about the nature of the map can be obtained in the large $n$ limit, which is effectively a linear stability analysis. Defining $\Delta c_i^{(n)}=c_i^{(\infty)}-c_i^{(n)}$ where $c_1^{(\infty)}=c_2^{(\infty)}=\pi/4$. For small $\Delta c^{(n)}$, Eq.~(\ref{eq:XXZcartanmap1}) gives
\beq
\Delta c^{(n+1)} \approx \sin(2\,c_3^{(n)}) \Delta c^{(n)}.
\label{eq:expmapxxz}
\eeq
Whereas Eq.~(\ref{eq:XXZcartanmap2}) yields simply  $c_3^{(n+1)}\approx c_3^{(n)}$ to first order in $X_n$ indicating that 
it can take any value only determined by the initial condition. We denote this value as $c_3^*=c_3^{(\infty)}$. Thus the above equation is of the form,
$
\Delta c^{(n+1)}=r\Delta c^{(n)}
$, with $r=\sin(2\,c_3^{(\infty)})$ , and we get the solution:
\beq
\Delta c^{(n)}=e^{-n\xi}\Delta c^{(0)}, \; 
 \xi=|\ln r|=\left|\ln \sin(2\,c_3^{(\infty)})\right |
 \label{eq:rateX}
\eeq 
  Therefore, the convergence to the respective fixed points: $c^*=\frac{\pi}{4}$ and $c_3^*=c_3^{(\infty)} \in [0,\pi/4]$, is also exponential with the rate determined by the value $c_3^{(\infty)}$ as found in Eq.~(\ref{eq:decay_eqns}) for gates lying on the {\sc swap-cnot-dcnot} face. This is shown qualitatively in Fig.~(\ref{fig:xxzrates}). The rate $\xi=\infty$ when $c_3^{(\infty)}=0$ and the unitaries converge to the {\sc dcnot}. This is consistent with the discussion in the XY discussion above where it was shown that in this case just one step of the map is needed. The rate $\xi=0$ when $c_3^{(\infty)}=\pi/4$ when the gate attained aymptotically is the {\sc swap}. This is consistent with the discussion of the edge XXX above where an algebraic approach was obtained. 

For large $n$ the behaviour of $\Delta c_3^{(n+1)}=c_3^{(\infty)}-c_3^{(n)}$  is found by analyzing Eq.~(\ref{eq:XXZcartanmap2}) keeping the second order terms in $\Delta c^{(n)}$, and we get 
\beq
\Delta c_3^{(n+1)} =\frac{1}{2}\sin (4 c_3^{(\infty)}) (\Delta c^{(n)})^2=\frac{1}{2}\sin (4 c_3^{(\infty)})e^{-n \xi_3},
\label{eq:rateZ}
\eeq
with $\xi_3=2 \xi$, 
and hence the approach to $c_3^{(\infty)}$ is exponential at a rate that is {\em twice} that of the other Cartan parameters.

\begin{figure}[htbp]
\includegraphics[scale=0.45]{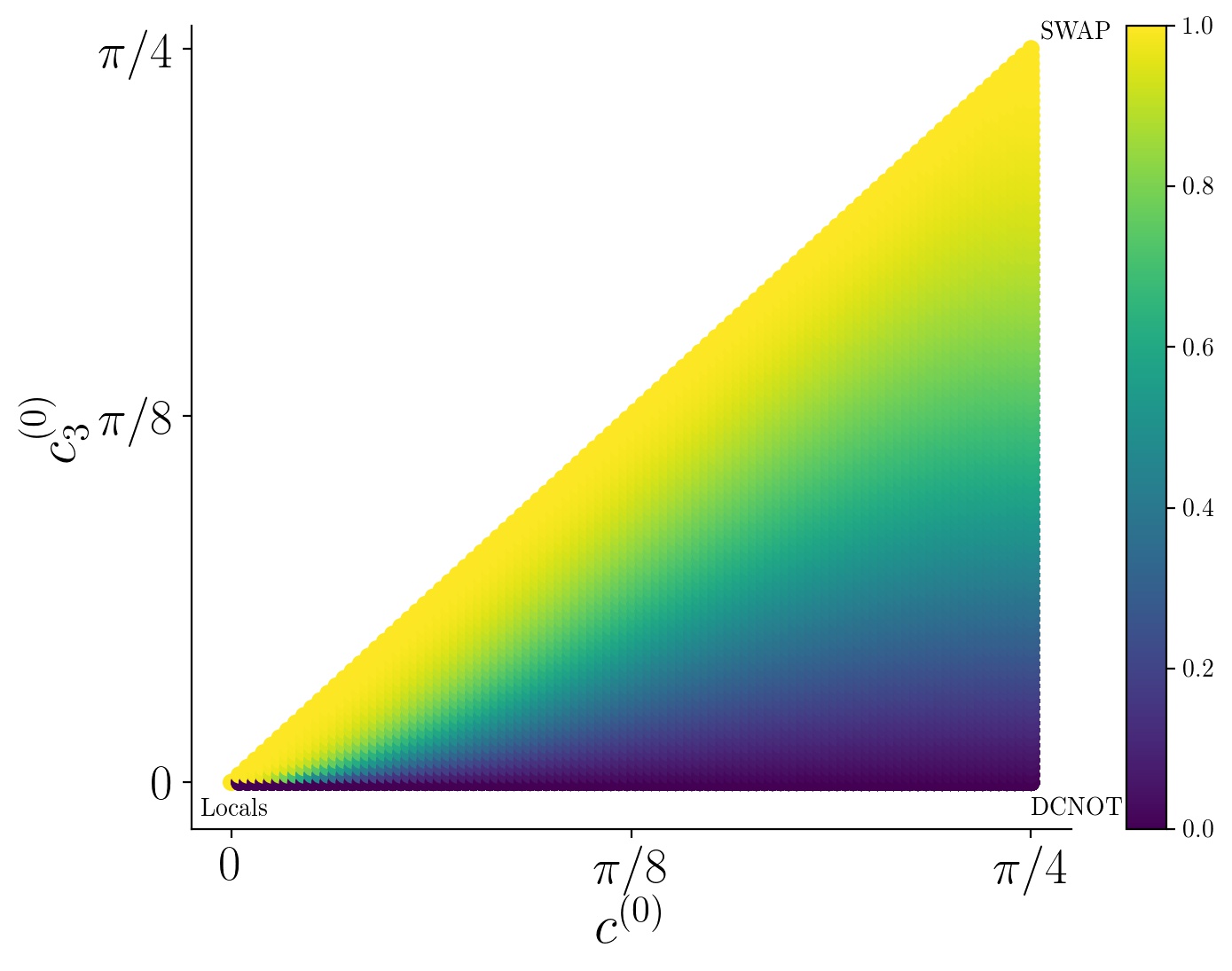}
\caption{Convergence in the XXZ case: initial condition in the local-{\sc swap}-{\sc dcnot} face. Plotted is $r=\sin{(2\,c_3^{(100)})}\approx \sin{(2\,c_3^{(\infty)})}$, which is related to the rate of convergence to a dual-unitary gate as $\xi=\ln r$. Around $10^5$ initial conditions $(c^{(0)},c_3^{(0)})$ are taken and evolved for $n=100$ times under the 2-d map Eqs.~( \ref{eq:XXZcartanmap1})--~(\ref{eq:XXZcartanmap2}). Initial conditions with $c_3^{(0)}=0$, in the base of the triangle above,  converge to the {\sc dcnot} gate at an infinite rate, while
initial conditions with $c^{(0)}=c_3^{(0)} \in (0,\pi/4]$ converge to the {\sc swap} gate at vanishing rate, namely algebraically.}
\label{fig:xxzrates}
\end{figure} 

\subsubsection{Generic initial conditions} Interestingly, numerical results indicate that the exponential approach to $(\pi/4,\pi/4,c_3^{(\infty)})$ given in Eq.~(\ref{eq:rateX}) and Eq.~(\ref{eq:rateZ}) continues to hold  for a generic initial condition inside the Weyl chamber. An illustration is  displayed in Fig.~(\ref{fig:gen_init_U}) for the initial condition $(c_1^{(0)},c_2^{(0)},c_3^{(0)})=(\pi/6,\pi/8,\pi/12)$. Under the map it converges to a dual-unitary gate with $c_3^{\infty} \approx 0.443$. The rates $\xi_1$ and $\xi_2$ at which $\Delta c_1^{(n)}= \pi/4-c_1^{(n)}$ and $\Delta c_2^{(n)}=\pi/4-c_2^{(n)}$  approach $0$ are almost the same given by  $ \xi = \left| \ln \sin(2\,c_3^{(\infty)})\right|$. The rate $\xi_3$ at which $\Delta c_3^{(n)}=c_3^{(\infty)}-c_3^{(n)} \rightarrow 0 $  continues to be a very good approximation $\xi_3 = 2 \xi$. Initial conditions which converge to dual-unitaries with large $c_3^{(\infty)}$ values {\em i.e.}, small entangling power, take longer times. This is reflected in Fig.~(\ref{fig:weylMRmap}) for random realizations where gates that are closer to the {\sc swap} gate take longer to reach the corresponding point on the dual-unitary edge.
\begin{figure}[htbp]
\includegraphics[scale=0.6]{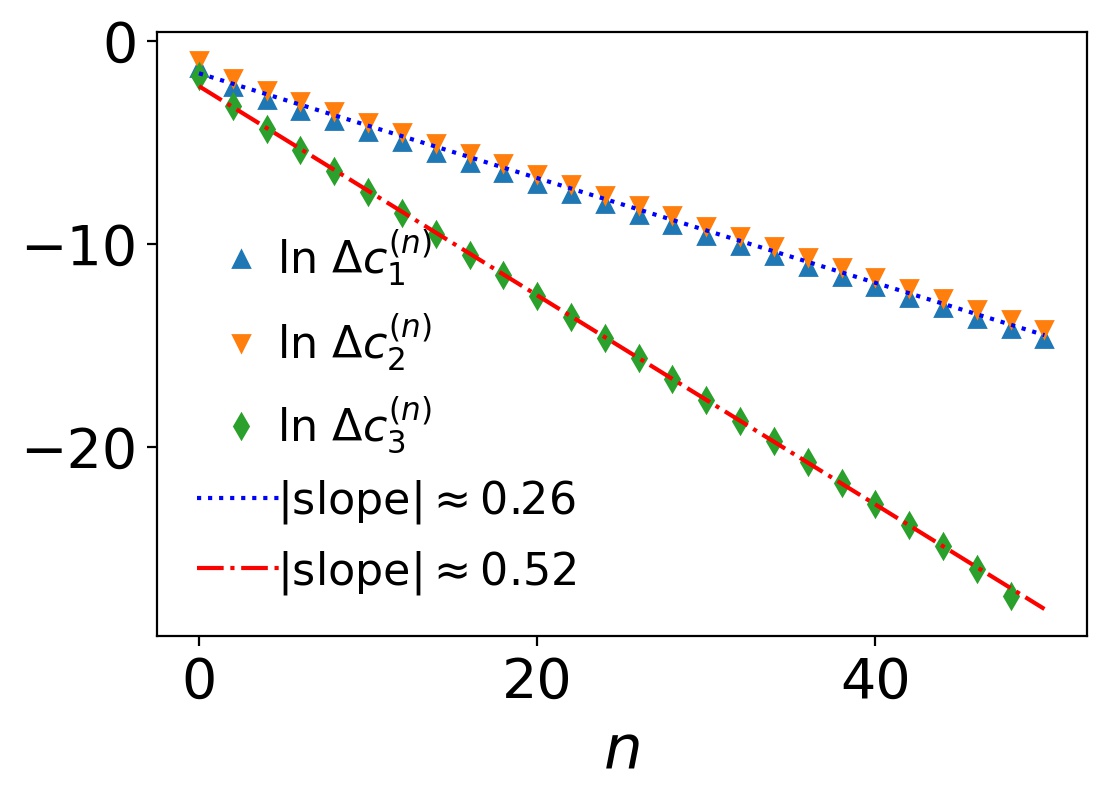}
\caption{Convergence for initial condition inside the Weyl chamber: seed unitary with $(c_1^{(0)},c_2^{(0)},c_3^{(0)})=(\pi/6,\pi/8,\pi/12)$ is evolved under the map for $n=50$ steps. The exponential rates at which $\Delta c_1^{(n)}= \pi/4-c_1^{(n)}$ and $\Delta c_2^{(n)}= \pi/4-c_2^{(n)}$ decay are almost the same $\xi =|\ln \sin(2 c_3^{(\infty)})|$  determined by $c_3^{(\infty)} \approx 0.443$. Rate at which $\Delta c_3^{(n)}= c_3^{(\infty)}-c_3^{(n)} \rightarrow 0$ is almost twice than that of $\Delta c_1^{(n)}$ or $\Delta c_2^{(n)}$. The numerically calculated slopes match with these values.}
\label{fig:gen_init_U}
\end{figure}


 We summarize the convergence of the map for different families in Table \ref{Tab:qubitmapconv}.
 The slow algebraic approach hold for all initial conditions that approach the {\sc swap} gate. In fact,
 we have numerically verified that even for higher dimensions, $d>2$, if the seed unitary is a fractional power of {\sc swap}, they approach the dual-unitary {\sc swap} gate algebraically as $\sim 1/\sqrt{n}$.
 
It maybe noted that a very different map on the Weyl chamber has been studied by looking at the the powers of two-qubit gates in Ref.~\cite{Mandarino2018}.  This map is ergodic on the Weyl chamber and is 
related to billiard dynamics in a tetrahedron, unlike the dissipative nature of the $\mathcal{M}_R$ map that we have studied.
 
\begin{table*}[htbp]
\centering
	\caption{Convergence to dual unitaries for different two-qubit seed unitaries, parametrized by the Cartan coefficients $c_i^{(0)}$. In all cases $c_1^{(\infty)}=c_2^{(\infty)}=\pi/4$, and $\Delta c_i= c_i^{(\infty)}-c_i^{(n)}$.}
	\begin{tabular}{||c|c|c||}
		\hline
		Cartan coefficients and Weyl chamber location of seeds & Dual-unitary approached  & Nature of convergence\\
		\hline
		\hline
		Base, $c_3^{(0)}=0$, $c_2^{(0)}>0$  & {\sc dcnot}, $c_3^{(1)}=\, c_3^{(\infty)}=0$& Instantaneous, rate $\infty$ \\
		\hline
		{\sc swap}-Local Edge: $c_1^{(0)}=c_2^{(0)}=c_3^{(0)}$ &  {\sc swap} &
		Algebraic: $\Delta c_i \sim 1/\sqrt{n}$\\
		\cline{1-1} \cline{3-3} 
		{\sc swap}-Local-{\sc cnot} face, $c_2^{(0)}=c_3^{(0)}\neq c_1^{(0)} $ & $c_3^{(\infty)}=\pi/4$ & \makecell{Algebraic: \\ $\Delta c_3, \, \Delta c_2 \sim 1/\sqrt{n},\, \Delta c_1 \sim 1/n$}\\
		\hline
		{\sc swap}-Local-{\sc dcnot} face, $c_1^{(0)}=c_2^{(0)}\neq c_3^{(0)}$ &    & Exponential: \\
		\cline{1-1}
		{\sc swap-cnot-dcnot} face, $c_1^{(0)}=\pi/4$, $c_2^{(0)} \neq c_3^{(0)} $& Generic  & $\Delta c_1, \Delta c_2 \sim \exp(- \xi\, n),\, \Delta c_3 \sim \exp(-2 \xi \, n)$  \\
		\cline{1-1}
		Interior, $c_1^{(0)} > c_2^{(0)} > c_3^{(0)}$ &$c_3^{(\infty)} \neq 0, \, \pi/4$ & $\xi =|\ln \sin(2 c_3^{(\infty)})|$\\
		\hline
	\end{tabular}
\label{Tab:qubitmapconv}
\end{table*}

\section{Combinatorial designs corresponding to dual unitary operators \label{sec:combi}}
Tools developed in combinatorial mathematics have been very useful in constructing multipartite entangled states \cite{Goyeneche2015,GRMZ_2018}. In Ref.~\cite{Clarisse2005} it was shown that orthogonal Latin squares of order $d$ can be used to contruct 2-unitary permutation matrices of order $d^2$. Since 2-unitary operators belong to a subset of dual-unitary operators, we point to less restrictive combinatorial structures corresponding to general dual-unitary operators. In the case of dual-unitary permutations such designs were discussed earlier in \cite{ASL2020}, which we first summarize.

\subsection{Permutation matrices: classical design}
A permutation of $d^2$ symbols or elements from $\left[ d \right] \times \left[ d\right]$, $\left[d \right]=\left\lbrace{1,2,\cdots,d}\right\rbrace$, is specified by the operator on computational product basis states $\ket{ij}$ as,
\beq 
P\ket{ij}=\ket{k_{ij}l_{ij}}.
\label{eq:KLmat}
\eeq
Thus, this can be written in terms of a pair of $d \times d$ matrices $K=(k_{ij})$ and $L=(l_{ij})$. 
In \cite{ASA_2021}, it was shown that for $P$ to be dual-unitary (T-dual), 
the conditions on $K$ and $L$ matrices are:
\begin{enumerate}
\item[(i)]
{\em Condition on $K$:} No element repeats along any {\em row} ({\em column}). 
\item[(ii)]
{\em Condition on $L$:} No element repeats along any {\em column} ({\em row}). 
\end{enumerate}
As an example, for $d=2$, the dual-unitary {\sc swap} gate permutes the basis states as
\beq
\begin{array}{|cc|}
\hline
11 & 12 \\
21 & 22 \\
\hline
\end{array}
\; \longrightarrow \;
\begin{array}{|cc|}
\hline
11 & 21 \\
12 & 22 \\
\hline
\end{array},
\eeq
with $K$ and $L$ given by
\beq
K=
\begin{array}{|cc|}
\hline
1 & 2 \\
\hline
1 & 2 \\
\hline
\end{array}\; , \;
L=
\begin{array}{|c|c|}
\hline
1 & 1 \\
2 & 2 \\
\hline
\end{array}.
\eeq

Orthogonal Latin squares, denoted OLS$(d)$ \cite{keedwell2015latin}, are examples of designs used to construct the 2-unitary operators \cite{Clarisse2005}. A Latin square is a $d \times d$ array with $d$ distinct elements such that every element appears exactly once in each column and in each row. Two Latin squares with elements $s_{ij}$ and $t_{ij}$  are orthogonal if the ordered pairs $(s_{ij},t_{ij})$ are all distinct. 


If $K$ and $L$, defined above, are Latin squares, then the corresponding permutation matrix $P$ is both dual-unitary and T-dual, hence it is 2-unitary.
OLS($d$) exist for all $d$ except $d=2$ and $6$ \cite{bose1960further}. Thus 2-unitary permutations exist for all $d$ except $d=2$ and $6$. An example of an OLS($3$) is:
\beq
\begin{array}{|c|c|c|}
\hline
1 & 2 & 3 \\
\hline
3 & 1 & 2 \\
\hline
2 & 3 & 1 \\
\hline
\end{array} \; \cup \; 
\begin{array}{|c|c|c|}
\hline
1 & 3 & 2 \\
\hline
3 & 2 & 1 \\
\hline
2 & 1 & 3 \\
\hline
\end{array} \;= \;
\begin{array}{|c|c|c|}
\hline
11 & 23 & 32 \\
\hline
33 & 12 & 21 \\
\hline
22 & 31 & 13 \\
\hline
\end{array}
\label{eq:OLS3}.
\eeq 
Note that all nine pairs from the set $\left\lbrace 1,2,3 \right\rbrace \times \left\lbrace 1,2,3 \right\rbrace=\left\lbrace  11,12,\dots,32,33 \right\rbrace$ are present.

For dual-unitary or, T-dual permutations $K$ and $L$ are not Latin squares in general. We define {\em r-Latin square (c-Latin square)} as an arrangement of $d$ symbols in a $d \times d$ array if it satisfies conditions of a Latin square only along rows (columns). Note that the usual Latin square is both r-Latin square as well as c-Latin square. Two such less constrained Latin squares are orthogonal if by superposing them all $d^2$ ordered pairs obtained are distinct. For dual-unitary permutations $K$ is r-Latin square and $L$ is c-Latin square while for T-dual permutations $K$ is c-Latin square and $L$ is r-Latin square, which are restatements of the conditions above for duality (T-duality). 

\subsection{General dual-unitary operators: Quantum design}
Here we discuss the underlying combinatorial structure of general dual-unitary operators. Consider a unitary operator $U \in \mathcal{B}(\mathcal{H}_d \otimes \mathcal{H}_d) $. Define 
\beq
\ket{\psi_{ij}}=U\ket{i j},
\label{eq:Uijdef}
\eeq
where $\{\ket{ij}\}_{i,j=1}^d$ is the computational basis in $\mathcal{H}_d \otimes \mathcal{H}_d$. The unitarity of $U$ implies that the set of vectors $\{\ket{\psi_{ij}}\}_{i,j=1}^d$ also forms an orthonormnal basis in $\mathcal{H}_d \otimes \mathcal{H}_d$.

Consider $\ket{\psi_{ij}}$'s which are of product form
\beq
\ket{\psi_{ij}}=\ket{\alpha_{ij}} \otimes \ket{\beta_{ij}}.
\eeq
Analagous to $K$ and $L$ defined in the previous section for permutation operators, we arrange $d^2$ single qudit states $\ket{\alpha_{ij}}$ and  $\ket{\beta_{ij}}$ as follows: 
 \beq 
 \begin{split}
\mathcal{K} &=
\begin{array}{|cccc|}
\hline
\ket{\alpha_{11}} & \ket{\alpha_{12}} & \cdots & \ket{\alpha_{1d}} \\
\ket{\alpha_{21}} & \ket{\alpha_{22}} & \cdots & \ket{\alpha_{2d}} \\
\vdots & \vdots & \vdots & \vdots \\
\ket{\alpha_{d1}} & \ket{\alpha_{d2}} & \cdots & \ket{\alpha_{dd}} \\
\hline
\end{array} \\
\mathcal{L} &=
\begin{array}{|cccc|}
\hline
\ket{\beta_{11}} & \ket{\beta_{12}} & \cdots & \ket{\beta_{1d}} \\
\ket{\beta_{21}} & \ket{\beta_{22}} & \cdots & \ket{\beta_{2d}} \\
\vdots & \vdots & \vdots & \vdots \\
\ket{\beta_{d1}} & \ket{\beta_{d2}} & \cdots & \ket{\beta_{dd}} \\
\hline
\end{array}
\end{split} 
\label{eq:KLmat}
\eeq 
The conditions for $U$ to be dual-unitary in terms of $\mathcal{K}$ and $\mathcal{L}$ is presented below.
\begin{thm} 
If every  row of $\mathcal{K}$ and every  column of $\mathcal{L}$  forms an orthonormal basis in $\mathcal{H}_d$, then the unitary operator $U = \sum_{i,j=1}^d \op{\psi_{ij}}{ij} = \sum_{i,j=1}^d \op{\alpha_{ij}\beta_{ij}}{ij}$ is dual-unitary. 
\end{thm}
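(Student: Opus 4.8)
The goal is to verify Definition~\ref{def:dual}, i.e.\ that the realigned matrix $U^{R}$ is unitary; equivalently, by Eq.~(\ref{eq:redst}), that the reduced density matrix $\rho_{AC}=\tfrac{1}{d^{2}}U^{R_{2}}U^{R_{2}\dagger}$ of the four-party state $\ket{\psi}_{ABCD}$ of Eq.~(\ref{eq:4st}) is maximally mixed (equivalently, by Eq.~(\ref{eq:euR}), that $E(U)$ attains its maximum $1-1/d^{2}$, the operator-entanglement characterization of dual unitarity). The plan is to compute $\rho_{AC}$ directly from the product structure of the $\ket{\psi_{ij}}$. First I would rephrase the two hypotheses as operator identities: ``every row of $\mathcal{K}$ is an orthonormal basis'' means $\sum_{j=1}^{d}\ket{\alpha_{ij}}\bra{\alpha_{ij}}=I$ for each fixed $i$, while ``every column of $\mathcal{L}$ is an orthonormal basis'' means $\braket{\beta_{i'j}}{\beta_{ij}}=\delta_{ii'}$ for each fixed $j$.

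Next I would expand the four-party state. Since $U\ket{ij}=\ket{\psi_{ij}}=\ket{\alpha_{ij}}_{A}\ket{\beta_{ij}}_{B}$ by Eq.~(\ref{eq:Uijdef}), Eq.~(\ref{eq:4st}) gives $\ket{\psi}_{ABCD}=\tfrac{1}{d}\sum_{i,j}\ket{\alpha_{ij}}_{A}\ket{\beta_{ij}}_{B}\ket{i}_{C}\ket{j}_{D}$. Tracing out $D$ forces $j=j'$ in the resulting double sum over $(i,j)$ and $(i',j')$, and the subsequent trace over $B$ contributes the factor $\braket{\beta_{i'j}}{\beta_{ij}}$, which collapses to $\delta_{ii'}$ by the column condition on $\mathcal{L}$. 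What survives is $\rho_{AC}=\tfrac{1}{d^{2}}\sum_{i}\big(\sum_{j}\ket{\alpha_{ij}}\bra{\alpha_{ij}}\big)_{A}\otimes\ket{i}\bra{i}_{C}$, and the row condition on $\mathcal{K}$ turns the inner sum into the identity on $A$, leaving $\rho_{AC}=\tfrac{1}{d^{2}}I_{A}\otimes I_{C}$. By Eq.~(\ref{eq:redst}) this says $U^{R_{2}}U^{R_{2}\dagger}=I$; since $U^{R_{2}}$ is a square $d^{2}\times d^{2}$ matrix it is therefore unitary, and $U^{R_{1}}$ has the same singular values, so $U^{R}$ is unitary in either convention and $U$ is dual unitary by Definition~\ref{def:dual}.

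As a cross-check (and an alternative, purely matrix-element proof) one can instead use the expansion of Eq.~(\ref{eq:umat}) with $\bra{pq}U\ket{ij}=\braket{p}{\alpha_{ij}}\braket{q}{\beta_{ij}}$, apply the $R_{2}$ reshuffling of Section~\ref{sec:matshap}, and compute $(U^{R_{2}}U^{R_{2}\dagger})_{(p,i),(p',i')}=\sum_{j}\braket{p}{\alpha_{ij}}\braket{\alpha_{i'j}}{p'}\braket{\beta_{i'j}}{\beta_{ij}}$, which the same two orthonormality conditions collapse to $\delta_{pp'}\delta_{ii'}$. There is no genuine obstacle here; the only point worth a remark is that the conditions on $\mathcal{K}$ and $\mathcal{L}$ do not by themselves force $\{\ket{\psi_{ij}}\}$ to be an orthonormal basis — two product states drawn from different rows/columns may overlap — so the unitarity of $U$ is used as the standing assumption already in force in the paragraph containing Eq.~(\ref{eq:Uijdef}). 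Granting that, the computation above is elementary, and is simply the operator-level analogue of the fact that arbitrary orthonormal bases on the two factors assemble into a maximally entangled state.
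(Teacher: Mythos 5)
Your proof is correct and takes essentially the same route as the paper's: the paper directly computes $U^R U^{R\dagger}=\bigl(\sum_{j}\ket{\alpha_{ij}}\bra{\alpha_{ij}}\bigr)\otimes\bigl(\sum_{i}\ket{i}\bra{i}\bigr)=I_{d^2}$ from exactly the two orthonormality conditions you invoke, which is the same object as your $d^{2}\rho_{AC}$ (and your ``cross-check'' is precisely the paper's computation in matrix-element form). Your closing remark --- that the row/column conditions alone do not force $\{\ket{\psi_{ij}}\}$ to be orthonormal, so the unitarity of $U$ is a separate standing hypothesis --- is a correct observation that the paper leaves implicit.
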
 

\begin{proof} 
The orthonormality condition on the vectors in every row of $\mathcal{K}$ and every column of $\mathcal{L}$ imply that  $
\langle \alpha_{ij}|\alpha_{ij'}\rangle=\delta_{jj'},\;\sum_{j=1}^d\ket{\alpha_{ij}}\bra{\alpha_{ij}}=I_d, \; \forall\:i $ and $\langle \beta_{ij}|\beta_{i'j}\rangle=\delta_{ii'},\;\sum_{i=1}^d\ket{\beta_{ij}}\bra{\beta_{ij}}=I_d, \; \forall\:j $. 
Using these conditions, it follows,
\begin{align*}
U^RU^{R \dagger} & = \left(\sum_{i,j=1}^d \ket{\alpha_{ij} i} \bra{\beta_{ij}j}\right)\left(\sum_{i',j'=1}^d \ket{\beta_{i'j'} j'} \bra{\alpha_{i'j'}i'}\right),\\
& = \left(\sum_{j=1}^d \ket{\alpha_{ij}} \bra{\alpha_{ij}}\right) \otimes \left(\sum_{i=1}^d\ket{i}\bra{i}\right),\\
& =I_d \otimes I_d= I_{d^2}.
\end{align*}
It is similarly shown that $U^{R \dagger} U^R=I_{d^2}$, and hence unitary $U$ is dual-unitary.
\end{proof}

The conditions on $\mathcal{K}$, $\mathcal{L}$ for $U$ to be dual-unitary are generalizations of $K$ and $L$ corresponding to dual-unitary permutations. In $K$, $L$ the notion of symbols being {\em different} in row or column is replaced by it's quantum analog, the {\em orthogonality} of vectors (quantum states). In fact such a generalization is known for Latin square and OLS called as quantum Latin square (QLS) \cite{MV16} and orthogonal quantum Latin square (OQLS)  \cite{GRMZ_2018,MV19} respectively. A quantum Latin  square is a $d \times d$ array of $d$-dimensional vectors such that each row and each column forms an orthonormal basis in $\mathcal{H}_d$. Two quantum Latin squares are orthogonal if together they form an orthonormal basis in $\mathcal{H}_d \otimes \mathcal{H}_d$. If $\mathcal{K}$, $\mathcal{L}$ defined above are quantum Latin squares then $U$ is a 2-unitary operator \cite{GRMZ_2018}. 

The fact that there are no repetitions of symbols in a Latin square in any row or column translates into orthogonality of vectors in each row and column in the corresponding QLS. The ``quantumness" and equivalence between quantum Latin squares was defined in Ref.~\cite{Paczos_2021} in terms of the number of distinct basis vectors (up to phases), known as the {\em cardinality}. For QLS constructed from classical Latin squares, simply by replacing the symbol $k$ by a basis vector $\ket{k}$ in a $d$ dimensional space, the cardinality is $d$ and is said to be {\em classical}.
 
Quantum Latin square with cardinality more than $d$ cannot be obtained from classical Latin square using unitary transformations of the basis vectors and is referred to as {\em genuinely quantum} \cite{Paczos_2021}. Quantum Latin squares with cardinality equal to $d^2$, the maximum possible value, for general $d$ and their relation to quantum sudoku is discussed in Refs.~\cite{Paczos_2021,Nechita_qsudoku}.

For dual-unitary or, T-dual unitary operators $\mathcal{K}$ and $\mathcal{L}$ are not quantum Latin squares in general. We define r-quantum Latin square (c-quantum Latin square) denoted by r-QLS (c-QLS) as a $d \times d$ array of $d$-dimensional vectors if it satisfies conditions of a quantum Latin square only along rows (columns). Note that the quantum Latin square is both r-QLS as well as c-QLS. For dual-unitary operators $\mathcal{K}$ is r-QLS and $\mathcal{L}$ is c-QLS while for T-dual operators $\mathcal{K}$ is c-QLS  and $\mathcal{L}$ is r-QLS.

Two such less constrained QLS are said to be orthogonal if together they form an orthonormal basis in $\mathcal{H}_d \otimes \mathcal{H}_d $. In analogy with cardinality of a quantum Latin square, we define cardinality of $\mathcal{K}$ or $\mathcal{L}$ as the number of distinct basis vectors (up to phases) they contain. An r-QLS or c-QLS of size $d$ is \emph{classical} if it contains $d$ distinct basis vectors and \emph{genuinely quantum} if it contains more than $d$ distinct basis vectors. For dual-unitary permutations cardinality of $\mathcal{K}$, $\mathcal{L}$ is always equal to $d$ and are thus classical. An example of a pair of genuine r-QLS and c-QLS of size $3$ are respectively,
\beq
\mathcal{K}:
\begin{array}{|c c c|}
\hline
\ket{1} & \ket{2} & \ket{3} \\
\hline
\ket{1} & \ket{2} & \ket{3} \\
\hline
\frac{1}{\sqrt{2}}\left(\ket{1}+\ket{2}\right) & \frac{1}{\sqrt{2}}\left(\ket{1}-\ket{2}\right) & \ket{3}\\
\hline
\end{array}\;,
\eeq
\beq
\mathcal{L}:
\begin{array}{|c|c|c|}
\hline
\ket{1} & -\ket{1} & \frac{1}{\sqrt{2}}\left(\ket{1}+\ket{2}\right) \\
\ket{2} & \ket{2} & \ket{3} \\
\ket{3} & \ket{3} & \frac{1}{\sqrt{2}}\left(\ket{1}-\ket{2}\right)\\
\hline
\end{array}\;.
\eeq
Note that both $\mathcal{K}$ (r-QLS) and $\mathcal{L}$ (c-QLS) contain five distinct basis vectors (quantum states), across two different orthonormal bases, and are thus genuinely quantum. Together $\mathcal{K}$ and $\mathcal{L}$ form an orthonormal basis in $\mathcal{H}_3 \otimes \mathcal{H}_3$ arranged in $d \times d $ array as,
\beq
\small{
\begin{array}{|c | c | c|}
\hline
\ket{1} \otimes \ket{1} & -\ket{2} \otimes \ket{1} & \ket{3} \otimes \frac{1}{\sqrt{2}}\left(\ket{1}+\ket{2}\right)  \\
\hline
\ket{1} \otimes \ket{2} & \ket{2} \otimes \ket{2} & \ket{3} \otimes \ket{3}  \\
\hline
\frac{1}{\sqrt{2}}\left(\ket{1}+\ket{2}\right) \otimes \ket{3} & \frac{1}{\sqrt{2}}\left(\ket{1}-\ket{2}\right) \otimes \ket{3}  & \ket{3} \otimes \frac{1}{\sqrt{2}}\left(\ket{1}-\ket{2}\right)\\
\hline
\end{array}\;
}.
\eeq
The dual-unitary gate corresponding to the above arrangement of size 9 is,
\beq
U_9=\left(\begin{array}{ccccccccc}
1 & 0 & 0 & 0 & 0 & 0 & 0 & 0 & 0 \\
0 & 0 & 0 & 1 & 0 & 0 & 0 & 0 & 0 \\
0 & 0 & 0 & 0 & 0 & 0 & \frac{1}{\sqrt{2}} & \frac{1}{\sqrt{2}} & 0 \\
0 & -1 & 0 & 0 & 0 & 0 & 0 & 0 & 0 \\
0 & 0 & 0 & 0 & 1 & 0 & 0 & 0 & 0 \\
0 & 0 & 0 & 0 & 0 & 0 & \frac{1}{\sqrt{2}} & -\frac{1}{\sqrt{2}} & 0 \\
0 & 0 &  \frac{1}{\sqrt{2}} & 0 & 0 & 0 & 0 & 0 &  \frac{1}{\sqrt{2}} \\
0 & 0 &  \frac{1}{\sqrt{2}} & 0 & 0 & 0 & 0 & 0 &  -\frac{1}{\sqrt{2}} \\
0 & 0 & 0 & 0 & 0 & 1 & 0 & 0 & 0 
\end{array}\right),
\eeq
with $(e_p(U_9),g_t(U_9))=(3/4,5/8)$. This dual unitary is not locally equivalent to any dual unitary permutation matrix (corresponding $\mathcal{K}$ and $\mathcal{L}$ contain only three distinct vectors) with the same entangling power and gate-typicality. We obtained dual-unitary $U_9$ using the $\mathcal{M}_R$ map (see Sec.~(\ref{subsec:MRmap}). This is one of the nice properties of the map that it yields structured dual-unitaries by choosing appropriate seed unitaries like permutations. For $e_p(U)<1$, it is relatively easier to construct dual unitaries which are LU inequivalent to dual unitary permutations with the same entangling power. However for $e_p(U)=1$ {\em i.e.}, 2-unitaries this is not the case as they satisfy additional constraints which we discuss in the next section.

\subsection{Combinatorial structures of known families of dual unitaries}
\subsubsection{Diagonal ensemble}
Dual unitaries have one-to-one correspondence with T-dual unitary operators which are easier to construct. Simplest ensemble of T-dual unitaries one can think of is that of diagonal unitaries with arbitrary phases, denoted $D_1$. A $d^2$ parameter subset of dual unitaries can be obtained by (pre- or post-) multiplying diagonal unitaries with the {\sc swap} gate $S$ \cite{claeys2020ergodic,ASA_2021}. It is easy to see that for dual unitaries of the form $U=D_1S$ obtained from the diagonal ensemble,
\beq
U(\ket{k}\otimes \ket{l})=D_1S(\ket{k}\otimes \ket{l})=\exp(i\,\theta_{lk})(\ket{l}\otimes \ket{k}).
\eeq
Thus, the corresponding $\mathcal{K}$ and $\mathcal{L}$ are same as that of the {\sc swap} gate (up to phases) and hence are classical. 
\subsubsection{Block-diagonal ensemble}
A more general $d^3$ parameter family of dual unitary gates, $U=D_dS$, can be obtained from block-diagonal unitaries \cite{ASA_2021,prosen2021many,borsi2022remarks}, given by   
\beq
D_d=\sum_{i=1}^d \ket{i}\bra{i} \otimes u_i,\;u_i \in \mathcal{U}(d).
\label{eq:controlunitary}
\eeq
This is a controlled unitary from first subsystem to the second. For this family of dual unitaries the combinatorial structures are given by
\beq
\mathcal{K}:
\begin{array}{|c c c c|}
\hline
\ket{1} & \ket{2} & \cdots & \ket{d}\\
\hline
\ket{1} & \ket{2} & \cdots  & \ket{d}\\
\hline
\vdots & \vdots & \vdots  & \vdots\\
\hline
\ket{1} & \ket{2} & \cdots & \ket{d}\\
\hline
\end{array}\;\;,\;
\mathcal{L}:
\begin{array}{|c|c|c|c|}
\hline
u_1\ket{1} & u_2\ket{1} & \cdots & u_d\ket{1}\\

u_1\ket{2} & u_2\ket{2} & \cdots  & u_d\ket{2}\\

\vdots & \vdots & \vdots  & \vdots\\

u_1\ket{d} & u_2\ket{d} & \cdots & u_d\ket{d}\\
\hline
\end{array}\;,
\eeq
where $u_i$'s are related to the dual unitary $U=D_dS$ by Eq.~(\ref{eq:controlunitary}). Note the orthonormality along the columns in $\mathcal{L}$ is ensured by the identical unitary transformation of each basis vector. Although $\mathcal{K}$ contains only $d$ distinct vectors and is classical, $\mathcal{L}$ contains in general (the maximum possible) $d^2$ number of distinct vectors and hence is genuinely quantum.

The quantum designs considered so far are mostly unentangled, such as the ones above. Generalizations to entangled designs are needed to describe for example the recently found 2-unitary operator behind the AME$(4,6)$ state \cite{SRatherAME46}. 
Although one can write necessary and sufficient conditions for $U$ to be 2-unitary; see Appendix (\ref{app:general_comb_design}), in terms of reduced density matrices of bipartite states defined in Eq.~(\ref{eq:Uijdef}) but the orthogonality relations in the corresponding OQLS are harder to interpret than in OLS. 

An unitary gate $U$ on $\mathcal{H}_d \otimes \mathcal{H}_d$
 is an {\em universal entangler} if $U(\ket{\alpha_i} \otimes \ket{\beta_i})$ is {\em always} entangled for any chioce of the product state $\ket{\alpha_i} \otimes \ket{\beta_i}$. It is known that universal entanglers do not exist for $d=2$ and $3$ i.e., there is no two-qubit or two-qutrit unitary gate which maps every product state to an entangled state \cite{ChenDuan2007}. It is easy to see that all columns of a universal entangler must be entangled, however this condition is necessary but not sufficient \cite{MENDES2015}. Those dual-unitary and 2-unitary gates which are universal entanglers will have genuinely entangled quantum designs. Unfortunately there are no known constructions of universal entangler and conditions under which they are obtained are not known.

\section{Local unitary equivalence of 2-unitary operators \label{Sec:spldual}}

\subsection{A necessary criterion \label{subsec:LUequiv}}

Given any two bipartite unitary operators $U$ and $U'$, as far as we know, there is no procedure to determine if they are LU equivalent, $U\overset{\text{LU}}{\sim}U'$, or not denoted by $U\overset{\text{LU}}{\nsim}U'$. Namely if Eq.~(\ref{eq:U1U2equi}) is satisfied for some local operators $u_i$ and $v_i$. The problem is exacerbated for the case of 2-unitary operators as the singular values of $U^{R}$ and $U^{\Gamma}$, which are LUI, are all equal, and hence maximize the standard invariants such as $E(U)$ and $E(US)$. 

Here we propose a necessary criterion to investigate the LU equivalence between unitary operators based on the distributions of the entanglement they produce when applied on an ensemble of uniformly generated product states. Action of a bipartite unitary operator $U$ on product states generically results in entangled states,
\beq
\ket{\psi_{AB}}=U(\ket{\phi_A} \otimes \ket{\phi_B}).
\label{eq:stateUAB}
\eeq
Let $\mathcal{E}(\ket{\psi}_{AB})$ be any measure of entanglement, and let $\phi_A$ and $\phi_B$ be sampled from the Haar measure on the subspaces. Then the resulting distribution $p(x;U)$ of the entanglement is 
\beq
p(x;U)=\int \delta\left( x-\mathcal{E}[U(\ket{\phi_A} \otimes \ket{\phi_B})] \right)d\mu(\phi_A) d\mu(\phi_B). 
\eeq
It is clear that if $U$ is left multiplied by local unitaries $p(x; U)$ is unchanged as entangled measures are invariant under such operations. 
If $U'=U (u_A \otimes u_B)$, then 
$p(x; U')=p(x; U)$, as $d \mu( u_A^{\dagger} \ket{\phi_A})=d \mu(\ket{\phi_A})$, which is a property of the Haar measure.
Thus if $U\overset{\text{LU}}{\sim}U'$ then $p(x; U')=p(x; U)$.
Conversely if $p(x; U') \neq p(x; U)$ this implies that $U\overset{\text{LU}}{\nsim}U'$.

However, if the distributions are {\em indistinguishable}, {\em i.e.}, $p(x; U') =p(x; U)$, then $U$ and $U'$ {\em may} or, {\em may not} be LU equivalent. To see that the criterion is necessary but not sufficient, consider two LU inequivalent operators $U$ and $U'=US$, where $S$ is the {\sc swap} gate. Although $U$ and $U'$ are LU inequivalent they generate identical entanglement distributions, $p(x; U)=p(x; U')$. Note that $U$ and $U'$ have the same entangling power; $e_p(U)=e_p(U')$, but have different gate typicalities; $g_t(U)\neq g_t(U')$, and are thus LU inequivalent. 

We enlarge local equivalence between $U$ and $U'$ to include multiplication by {\sc swap} gates on either or both sides,  denoted by $U' \stackrel{\text{LUS}}{\sim}U$ as
\beq
U'=(u_1 \otimes v_1)\, S^a\, U\, S^b \,(u_2 \otimes v_2),
\label{eq:LUSequiv}
\eeq
where $u_i$'s and $v_i$'s are single qudit gates, and $a,b$ takes values $0$ or $1$. Any operator in the LUS equivalence class of $U$ will produce the same entanglement distribution, $p(x; U)$.

\subsection{2-unitaries in $d=3$}
\subsubsection{Permutations}

2-unitary permutations of order $d^2$ maximize the entangling power and are in one-to-one correspondence with orthogonal Latin squares of size $d$; OLS(d) \cite{Clarisse2005}. In general, 2-unitary operators are in one-to-one correspondence with AME states of four qudits \cite{Goyeneche2015}. Under this mapping 2-unitary permutation matrices correspond to AME states with minimal support \cite{Goyeneche2015} i.e., these contain minimal possible terms equal to $d^2$ when written in the computational basis. A complete enumeration of all possible 2-unitary permutations  of size $d^2$ boils down to the possible number of OLS($d$) which is known for $d \leq 9$ (see A072377 , Ref. \cite{OEIS}). For $d=3$ there are 72 possible 2-unitary permutations of size 9. We find by a direct numerical exhaustive search over local permutation matrices of size 3 that \emph{all} 72 possible 2-unitary permutations are LU equivalent. This observation leads to the following proposition.

\begin{prop}
\label{prop:P9LUclass}
There is only one LU class of 2-unitary permutations of order 9.
\end{prop}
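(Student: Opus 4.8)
The plan is to reduce the claim to a finite, fully checkable computation, and to organize that computation so that it is actually tractable. The key objects are the $72$ ordered pairs of orthogonal Latin squares of order $3$ (equivalently, the $72$ two-unitary permutations $P$ of order $9$), and the group $G$ generated by local permutations $\pi_1\otimes\pi_2$ together with, if convenient, the {\sc swap} $S$ and left/right multiplication; one wants to show $G$ acts transitively on these $72$ objects. First I would recall from Definition~\ref{def:2-uni} and the classical-design discussion that each such $P$ is specified by a pair $(K,L)$ of orthogonal Latin squares of order $3$, and that the group of ``paratopies'' (row permutations, column permutations, symbol permutations of $K$ and of $L$, and the transpose/conjugation moves) is realized at the level of $P$ precisely by the local-permutation moves $P\mapsto (\pi_1\otimes\pi_2)P(\sigma_1\otimes\sigma_2)$ — this is the dictionary of Ref.~\cite{Clarisse2005}. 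So the statement to prove becomes the purely combinatorial fact that all $72$ ordered OLS$(3)$ lie in a single orbit under this group.

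The cleanest route is: (i) fix a canonical representative, e.g.\ the OLS$(3)$ displayed in Eq.~(\ref{eq:OLS3}), which corresponds to the cyclic-group construction $K_{ij}=i+j,\ L_{ij}=i+2j \pmod 3$; (ii) observe that there is, up to isotopy, a \emph{unique} Latin square of order $3$, namely the addition table of $\mathbb{Z}_3$, so any OLS$(3)$ has both members isotopic to that single square; (iii) use the isotopy freedom to bring the first square $K$ to the canonical cyclic form, which pins down the remaining freedom to the ``autotopism group'' of the cyclic square (of order $3!\cdot 3 = $ a small known number); (iv) enumerate how the companion square $L$ can sit orthogonally to this fixed $K$, showing the orbit of the pair $(K,L)$ under the residual symmetries exhausts all valid companions. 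Steps (ii)–(iv) are small enough to be done by hand or, more honestly, by the exhaustive numerical search the paper already invokes: one simply loops over all $72$ pairs, acts with all $(3!)^2=36$ local permutations on each side (and, if needed, the realignment/transpose moves and $S$), and checks that the resulting orbit of the canonical $P$ has size $72$. Finally, invoking Proposition~\ref{prop:loc_orbit_map} or, more directly, the fact that LU equivalence is an equivalence relation, transitivity of the $G$-action on a generating set yields a single LU class, establishing the proposition.

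The main obstacle is not conceptual but bookkeeping: one must be careful that every paratopy move on the OLS pair $(K,L)$ genuinely corresponds to an \emph{honest local-unitary} (indeed local-permutation) conjugation of $P$ on $\mathcal{H}_3\otimes\mathcal{H}_3$ — in particular that symbol permutations of $K$ and of $L$ act as $\pi_1\otimes\pi_2$ on the output side, row/column permutations act on the input side, and the transpose/``$R$'' moves are handled either by allowing $S$ (hence working up to the LUS equivalence of Eq.~(\ref{eq:LUSequiv})) or by checking they are not needed. A secondary subtlety is ensuring that the $72$ count is the count of \emph{ordered} OLS pairs and matching it against the orbit size produced by the search, so that transitivity — and not merely ``few classes'' — is what is demonstrated. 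Once the dictionary is set up correctly, the verification is a short finite check, and I would present it as such, deferring the explicit enumeration to a short computation.
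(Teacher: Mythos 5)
Your proposal is correct and takes essentially the same route as the paper: the paper's justification is precisely a direct exhaustive search, generating the orbit of the representative $P_9$ of Eq.~(\ref{eq:P9}) under all $(3!)^4=1296$ local permutations $(p_1\otimes p_2)P_9(p_3\otimes p_4)$ and observing that it contains all $72$ 2-unitary permutations (each appearing $18$ times), which suffices since local permutations are local unitaries. Your additional care about the OLS/paratopy dictionary and about not needing the {\sc swap} or realignment moves is sound but not required once the pure local-permutation orbit is already transitive.
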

We choose the following 2-unitary permutation as a representative of the LU equivalent class of 2-unitary permutations of order 9,
\beq
P_9=\left(
\def\arraystretch{0.7}
\begin{array}{ccc|ccc|ccc}
1 & . & . & . & . & . & . & . & .\\
. & . & . & . & 1 & . & . & . & .\\
. & . & . & . & . & . & . & . & 1\\
\hline
. & . & . & . & . & 1 & . & . & .\\
. & . & . & . & . & . & 1 & . & .\\
. & 1 & . & . & . & . & . & . & .\\
\hline
. & . & . & . & . & . & . & 1 & .\\
. & . & 1 & . & . & . & . & . & .\\
. & . & . & 1 & . & . & . & . & .
\end{array}
\right)
\label{eq:P9}
\eeq
An easy way to obtain all 72 possible 2-unitary permutations is by searching over $(3!)^4=1296$ local permutations $p_i$ of size 3 in 
\beq
\label{eq:plocP}
P'=(p_1 \otimes p_2) P_9 (p_3 \otimes p_4).
\eeq
Although this is not an efficient way as each 2-unitary permutation is repeated $18$ times but all $1296/18=72$ possible permutations can be obtained. 

An equivalent statement in terms of LU equivalence of AME($4,3$) states with minimal support is known, see Ref.~\cite{Adam_SLOCC_2020}. An AME($4,3$) state with minimal support considered in Ref.~\cite{Adam_SLOCC_2020} contains arbitrary phases and is equivalent to an enphased 2-unitary permutation {\em i.e.}, 2-unitary permutation multiplied by a diagonal unitary. This is a special property of 2-unitary permutations that these remain 2-unitary upon multiplication by diagonal unitaries with arbitrary phases owing to their special combinatorial structure. Indeed one can show that in $d=3$ that all enphased permutations are LU equivalent to $P_9$. Local dimension $d=3$ is special in the sense that number of phases, $d^2-1=8$, exactly matches the number of phases one can absorb using four enphased local permutations each containing $d-1$ phases; $4(d-1)=8$. Note that $d^2-1=4(d-1)$ has a solution only for $d=3$ and thus such results about LU equivalence about enphased 2-unitary permutations in $d=3$ do not hold for $d>3$.

\subsubsection{LU equivalence of 2-unitaries in $d=3$}
Dynamical maps are very efficient in yielding 2-unitaries for local Hilbert space dimension $d=3,4$ from random seed unitaries. The 2-unitaries so obtained do not have an evident simple structure as 2-unitary permutations have. It is natural to ask if these are LU equivalent to each other. For the purposes of LU equivalence we compare the entanglement distributions of 2-unitaries obtained from the map, $p(x;U)$ with that of the 2-unitary permutation matrix $p(x; P_9)$. We find that the von Neumann entropy is a good measure to highlight the differences in the distributions, especially it performs better than the linear entropy and hence we use this. Von Neumann entropy of the single qudit reduced density matrix of $\ket{\psi}_{AB}$ (see Eq.~(\ref{eq:stateUAB})) is defined as
$$\mathcal{E}_{v}\left(\rho_A\right)=-\text{tr}(\rho_A \log \rho_A).$$

The distribution $p(x;P_9)$ and $p(x;U_9)$ from a 2-unitary $U_9$ are shown in Fig.~(\ref{fig:ent_dist_d_3}). The matrix $U_9$ has been obtained using a random seed in the dynamical map $\mathcal{M}_{\Gamma R}$.  
\begin{figure}
\centering
\includegraphics[scale=0.65]{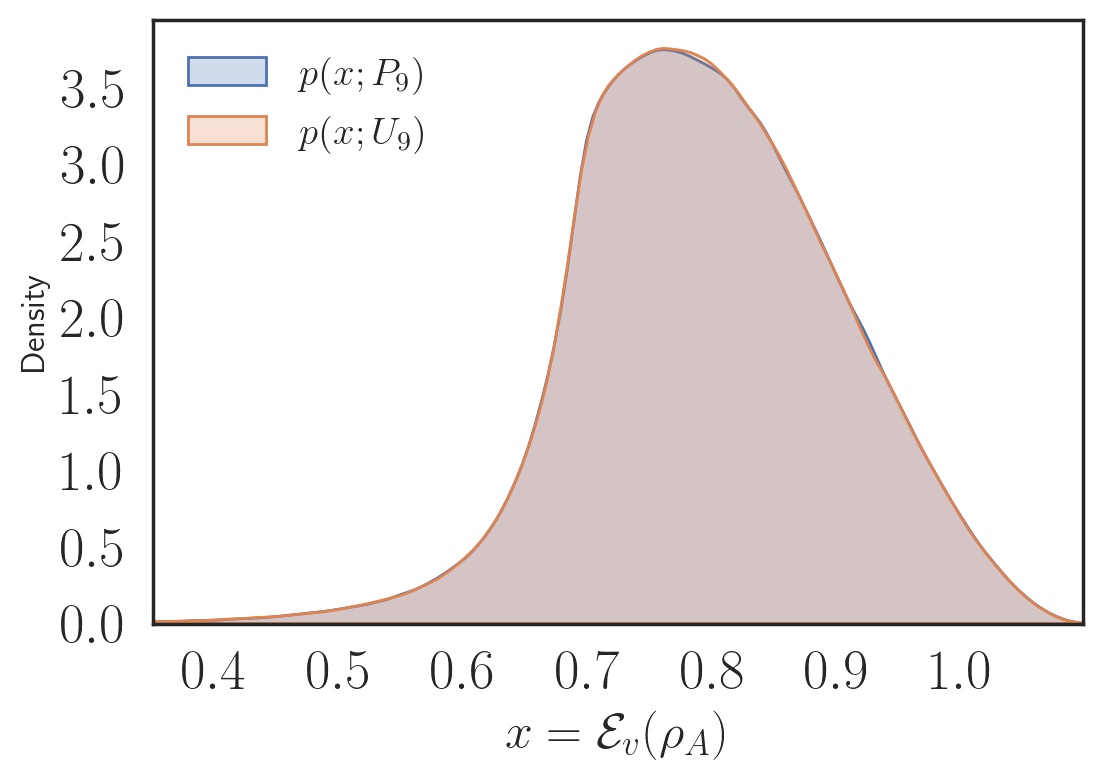}
\caption{Two-qutrit case, $d^2=9$. Distributions $p(x;U)$ of entanglement obtained from the action of  $9 \times 9$ unitaries $U$ on  Haar distributed product states. 
Shown are two distributions, one corresponding to the 2-unitary permutation matrix $P_{9}$, Eq.~(\ref{eq:P9}) and other a 2-unitary $U_9$ obtained from the map with a random seed. These two are numerically indistinguishable, and we could not find a different distribution than the one shown here for $O(10^3)$ number of 2-unitaries obtained from the map.}
\label{fig:ent_dist_d_3}
\end{figure}
Surprisingly, both the  distributions are indistinguishable for these 2-unitaries, although their origins and forms are very different. We have checked entanglement distributions for $O(10^3)$ 2-unitaries obtained from the map but could not find a different distribution from that of the 2-unitary permutation matrix. In fact in most cases we could transform 2-unitaries obtained from the map, using random seeds, into 2-unitary permutation matrices using appropriate local transformations in $\mathcal{U}(3) \otimes \mathcal{U}(3)$. Based on overwhelming numerical evidence we propose the following conjecture:
\begin{conj}
All 2-unitaries of order 9 are LU equivalent to $P_9$.
\end{conj}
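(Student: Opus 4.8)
The plan is to turn the continuum problem into a finite one by exploiting a rigidity feature special to $d=3$ — the non-existence of universal entanglers on $\mathcal{H}_3\otimes\mathcal{H}_3$ — and then to close the finitely many residual cases using Proposition~\ref{prop:P9LUclass} together with the phase count $d^2-1=4(d-1)$ that holds only at $d=3$.

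First I would fix a convenient canonical form. Local unitary conjugation preserves 2-unitarity, since the singular values of $U^{R}$ and $U^{\Gamma}$ are LU invariants, so it suffices to work within a single LU orbit. By the result of Ref.~\cite{ChenDuan2007} there is no two-qutrit universal entangler, hence every 2-unitary $U$ of order $9$ sends at least one product state to a product state, $U(\ket{a}\otimes\ket{b})=\ket{c}\otimes\ket{d}$. Choosing single-qutrit unitaries that rotate $\ket{0}\mapsto\ket{a}$, $\ket{0}\mapsto\ket{b}$ on the right and $\ket{c}\mapsto\ket{0}$, $\ket{d}\mapsto\ket{0}$ on the left, we may assume $\mel{00}{U}{00}=1$; since $U$ is unitary this forces $U\ket{00}=\ket{00}$ and $\bra{00}U=\bra{00}$, i.e. $U=\op{00}{00}\oplus U_\perp$ with $U_\perp$ unitary on the eight-dimensional complement of $\ket{00}$.

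Next I would propagate this constraint through the reshaping operations. The identities $\mel{00}{U^{R}}{00}=\mel{00}{U^{\Gamma}}{00}=\mel{00}{U}{00}=1$ together with unitarity of $U^{R}$ and $U^{\Gamma}$ show that $U^{R}$ and $U^{\Gamma}$ also fix $\ket{00}$; rewriting $U^{R}\ket{00}=\ket{00}$, $U^{\Gamma}\ket{00}=\ket{00}$ (and their adjoint versions) in terms of the matrix elements of $U$ kills a large block of entries and maps the subspaces $\mathrm{span}\{\ket{k0}:k\neq0\}$ and $\mathrm{span}\{\ket{0l}:l\neq0\}$ into prescribed complementary subspaces. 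Iterating this reasoning on the residual $8\times8$ unitary — and using the analogous relations from the composite reshaping $\Gamma R$ (Sec.~\ref{sec:dynamicalmaps}) — should pin down the full support pattern of $U$ up to local permutations, i.e. force $U$, up to LU equivalence, to be an enphased 2-unitary permutation whose sparsity pattern is an $\mathrm{OLS}(3)$. At that stage Proposition~\ref{prop:P9LUclass} identifies all such supports, and the matching $d^2-1=4(d-1)=8$ lets the four single-qutrit diagonal factors in $(u_1\otimes u_2)U(v_1\otimes v_2)$ absorb every phase, so that $U\overset{\text{LU}}{\sim}P_9$. As independent corroboration one can compute $\dim_{\mathbb{R}}$ of the real algebraic variety $\mathcal{T}_3$ of order-$9$ 2-unitaries and compare it with $\dim\mathrm{U}(3)^{\times4}$ minus the dimension of the stabiliser of $P_9$ (controlled by the enphasing symmetry of 2-unitary permutations): equality would make the orbit of $P_9$ open in $\mathcal{T}_3$, and since that orbit is also compact hence closed, a proof that $\mathcal{T}_3$ is connected would finish.

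The main obstacle is precisely the step where the cascade must be shown to terminate: there is no general algorithm deciding LU equivalence, and one has to rule out a positive-dimensional family of mutually LU-inequivalent 2-unitaries surviving the reduction. This is not a formality — it genuinely fails for $d\geq4$, where universal entanglers exist and the explicit AME$(4,4)$ example constructed above is LU-inequivalent to every 2-unitary permutation of the same size. Any correct proof must therefore use the $d=3$ non-existence of universal entanglers (or an equivalent $d=3$-specific rigidity) in an essential way, and the crux is verifying that after forcing $\mel{00}{U}{00}=1$ the sparsity constraints leave no continuous moduli beyond local phases.
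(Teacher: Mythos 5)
You should first be aware that the paper does not prove this statement: it appears as a \emph{Conjecture}, supported only by numerical evidence --- the indistinguishability of the entanglement distributions $p(x;U)$ from $p(x;P_9)$ for $O(10^3)$ map-generated 2-unitaries, and the fact that most of them could be explicitly transformed into 2-unitary permutations by local unitaries. So there is no proof in the paper to compare yours against; a complete argument would be a new result, and your text is a programme rather than a proof.

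That said, the verified portion of your programme is sound. The non-existence of two-qutrit universal entanglers \cite{ChenDuan2007} does supply a product state mapped to a product state; normalizing to $\mel{00}{U}{00}=1$ by local unitaries is legitimate; and since the $(00,00)$ entry is invariant under both $R$ and $\Gamma$, unitarity of $U$, $U^{R}$ and $U^{\Gamma}$ does force $U$ to fix $\ket{00}$ and to map the four-dimensional subspaces $\mathrm{span}\{\ket{01},\ket{02},\ket{10},\ket{20}\}$ and $\mathrm{span}\{\ket{11},\ket{12},\ket{21},\ket{22}\}$ isometrically onto one another. The genuine gap is exactly the step you flag yourself: nothing shown forces the two residual $4\times4$ unitary blocks onto the sparsity pattern of an $\mathrm{OLS}(3)$ up to local unitaries. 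The universal-entangler argument cannot simply be iterated, because those blocks are not two-qutrit unitaries and carry no tensor-product structure of their own, so a second product-to-product vector is not guaranteed; and a positive-dimensional family of non-permutation solutions to the remaining dual- and T-dual-unitarity constraints has not been excluded. The paper's own $U_9$ example (a sparse, structured dual unitary built from a genuinely quantum design) shows that duality plus heavy sparsity does not by itself force a classical design, so the full strength of 2-unitarity must be exploited in that reduction, and this is precisely the part that is missing. Until it is carried out, Proposition~\ref{prop:P9LUclass} and the phase count $d^2-1=4(d-1)$ close only the final and easiest step, and the statement remains, as in the paper, a conjecture.
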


\subsection{2-unitaries in $d=4$}
\subsubsection{Permutations}
The total number of 2-unitary permutations of size $16$ is $2\times 3456=6912$. By performing a direct exhaustive search over local permutations, quite remarkably even in this case it turns out that {\em all} $6912$ 2-unitary permutations are LU equivalent and thus lead to the following proposition.

\begin{prop}
\label{prop:P16LUclass}
There is only one LU class of 2-unitary permutations of order 16.
\end{prop}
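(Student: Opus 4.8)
The plan is to prove the proposition by a finite exhaustive computation that parallels the treatment of Proposition~\ref{prop:P9LUclass} for $d=3$. The first step is to enumerate the full set $\mathcal{P}_{16}$ of $2$-unitary permutations of order $16$. By the permutation criterion of Sec.~\ref{sec:combi}, a permutation $P\ket{ij}=\ket{k_{ij}l_{ij}}$ is $2$-unitary precisely when both arrays $K=(k_{ij})$ and $L=(l_{ij})$ are Latin squares of order $4$ and the superposition $(k_{ij},l_{ij})$ runs over all $16$ ordered pairs, i.e.\ when $(K,L)$ is an ordered pair of orthogonal Latin squares OLS$(4)$. Thus $|\mathcal{P}_{16}|$ is the number of ordered OLS$(4)$ pairs; I would generate these directly by looping over the $576$ Latin squares of order $4$ and retaining the orthogonal ones, and cross-check the resulting count against the tabulated value $6912=2\times 3456$ (OEIS A072377, Ref.~\cite{OEIS}), the factor $2$ reflecting the two orderings of the pair.

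Next I would fix an explicit representative $P_{16}\in\mathcal{P}_{16}$ built from a complete set of MOLS$(4)$, in the same spirit as the representative $P_9$ of Eq.~(\ref{eq:P9}), and compute its orbit under the group of local permutations,
\beq
\mathcal{O}(P_{16})=\bigl\{\,(p_1\otimes p_2)\,P_{16}\,(p_3\otimes p_4)\ :\ p_1,p_2,p_3,p_4\in S_4\,\bigr\}.
\eeq
Conjugating by a permutation of the product basis $\ket{ij}$ acts on the pair $(K,L)$ by simultaneously permuting its rows (through $p_3$) and its columns (through $p_4$) and by independently relabelling the symbols of $K$ (through $p_1$) and of $L$ (through $p_2$); each of these operations preserves both the Latin-square property and the orthogonality, so $\mathcal{O}(P_{16})\subseteq\mathcal{P}_{16}$. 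The decisive step is the numerical verification that $|\mathcal{O}(P_{16})|=6912$; equivalently, that the stabiliser of $P_{16}$ in $S_4^{\times 4}$ has order $24^4/6912=48$. Since $\mathcal{O}(P_{16})\subseteq\mathcal{P}_{16}$ and the two sets have the same cardinality, they coincide, so every $2$-unitary permutation of order $16$ is obtained from $P_{16}$ by local permutations, a fortiori by local unitaries, which is the claim.

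The main obstacle is not conceptual but the rigour of the search: one must be sure the OLS$(4)$ enumeration is exhaustive (guaranteed by matching the known count) and that the $24^4\approx 3.3\times 10^5$ orbit evaluations lose no elements, which is best ensured by manipulating integer permutation data (pairs of $4\times4$ symbol arrays) rather than $16\times16$ matrices, thereby avoiding any floating-point ambiguity. As an independent consistency check that does not itself enter the proof, one may verify through the necessary criterion of Sec.~\ref{subsec:LUequiv} that all members of $\mathcal{P}_{16}$ produce the same entanglement distribution $p(x;P_{16})$; this is expected but, being only necessary for LU equivalence, cannot replace the orbit computation. Finally I would note that the same calculation re-derives the classical fact that OLS$(4)$ is essentially unique, and flag the contrast with $d\ge5$, where the analogous orbit of a minimal-support $2$-unitary splits into more than one LU class, as shown below.
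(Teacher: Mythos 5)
Your proposal is correct and follows essentially the same route as the paper: fix a representative $P_{16}$, compute its orbit under the $(4!)^4=331{,}776$ local permutations, observe that each element is obtained $48$ times so the orbit has size $331{,}776/48=6912$, and conclude by matching this against the known total count $2\times 3456=6912$ of $2$-unitary permutations (ordered OLS$(4)$ pairs). The additional framing via Latin-square operations and the stabiliser, and the entanglement-distribution consistency check, are sensible elaborations but do not change the argument.
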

We choose the following 2-unitary permutation matrix as a representative of the LU equivalent class of 2-unitary permutations of order 16,
\beq
P_{16}=
\begin{pmatrix}
\def\arraystretch{0.7}
\begin{array}{cccc|cccc|cccc|cccc}
1 & . & . & . & . & . & . & . & . & . & . & . & . & . & . & .\\
. & . & . & . & . & 1 & . & . & . & . & . & . & . & . & . & .\\
. & . & . & . & . & . & . & . & . & . & 1 & . & . & . & . & .\\ 
. & . & . & . & . & . & . & . & . & . & . & . & . & . & . & 1\\
\hline
. & . & . & . & . & . & . & 1 & . & . & . & . & . & . & . & .\\
. & . & 1 & . & . & . & . & . & . & . & . & . & . & . & . & .\\
. & . & . & . & . & . & . & . & . & . & . & . & . & 1 & . & .\\
. & . & . & . & . & . & . & . & 1 & . & . & . & . & . & . & .\\
\hline
. & . & . & . & . & . & . & . & . & 1 & . & . & . & . & . & .\\ 
. & . & . & . & . & . & . & . & . & . & . & . & 1 & . & . & .\\
. & . & . & 1 & . & . & . & . & . & . & . & . & . & . & . & .\\
. & . & . & . & . & . & 1 & . & . & . & . & . & . & . & . & .\\
\hline
. & . & . & . & . & . & . & . & . & . & . & . & . & . & 1 & .\\
. & . & . & . & . & . & . & . & . & . & . & 1 & . & . & . & .\\
. & . & . & . & 1 & . & . & . & . & . & . & . & . & . & . & .\\
. & 1 & . & . & . & . & . & . & . & . & . & . & . & . & . & .
\end{array}
\end{pmatrix}
\label{eq:P16}
\eeq

Using $(4!)^4=3,31,776$ possible local permutations, each 2-unitary permutation is obtained $48$ times and therefore all $3,31,776/48=6912$ are taken into account. 

\subsubsection{ Entangled OLS of size $4$: A new example of AME($4,4$)}
Although there is only one LU class of 2-unitary permutations of order 16, we give an explict example of a 2-unitary orthogonal matrix which is not LU equivalent to any 2-unitary permutation. This is obtained via the nonlinear map $\mathcal{M}_{\Gamma R}$ given in Eq.~(\ref{eq:MTRmap}) with a permutation seed, and is given by

\begin{widetext}
\beq
O_{16}=\frac{1}{2}
\begin{pmatrix}
\begin{array}{cccc|cccc|cccc|cccc}
1 & . & . & . & . & 1 & . & . & . & . & -1 & . & . & . & . & -1 \\
. & 1 & . & . & -1 & . & . & . & . & . & . & -1 & . & . &-1 & . \\
. & . & -1 & . & . & . & . & 1 & -1 & . & . & . & . & -1 & . & . \\
. & . & . & -1 & . & . & 1 & . & . & 1 & . & . & 1 & . & . & .  \\
\hline
. & 1 & . & . & -1 & . & . & . & . & . & . & 1 & . & . & 1 & . \\
-1 & . & . & . & . & 1 & . & . & . & . & 1 & . & . & . & . & -1 \\
. & . & . & -1 & . & . & 1 & . & . & -1 & . & . & -1 & . & . & . \\
. & . & -1 & . & . & . & . & -1 & 1 & . & . & . & . & -1 & . & . \\
\hline
. & . & -1 & . & . & . & . & -1 & -1 & . & . & . & . & 1 & . & . \\
. & . & . & 1 & . & . & 1 & . & . & -1 & . & . & 1 & . & . & . \\
-1 & . & . & . & . & -1 & . & . & . & . & -1 & . & . & . & . & -1 \\
. & -1 & . & . & -1 & . & . & . & . & . & . & -1 & . & . & 1 & . \\
\hline
. & . & . & -1 & . & . & -1 & . & . & -1 & . & . & 1 & . & . & . \\
. & . & 1 & . & . & . & . & -1 & -1 & . & . & . & . & -1 & . & . \\
. & 1 & . & . & 1 & . & . & . & . & . & . & -1 & . & . & 1 & . \\
1 & . & . & . & . & -1 & . & . & . & . & 1 & . & . & . & . & -1
\end{array}
\end{pmatrix}
\label{eq:O16mat}
\eeq 
\end{widetext}

This matrix can be written in a compact form as
\beq
O_{16}=P_{16}^T D_4 P_{16},
\label{eq:O16}
\eeq
where $D_4$ is a block-diagonal matrix consisting of four $4 \times 4$ Hadamard matrices. 
Each row or column of $O_{16}$ contains four non-zero entries being equal to either $1/2$ or, $-1/2$ and is such that its eighth power is equal to the Identity, $O_{16}^8=\mathbb{I}$. 
To show that $O_{16}$ is indeed not LU equivalent to $P_{16}$, we compare the entanglement distributions $p(x;O_{16})$ and $p(x;P_{16})$. The distributions, shown in Fig.~(\ref{fig:ent_dist_d_4})
 are clearly distinguishable, $p(x;O_{16}) \neq p(x;P_{16})$ and thus $O_{16} \stackrel{LU}{\nsim} P_{16}$. Entanglement distributions for a large number of generic 2-unitaries obtained by applying the dynamical map $\mathcal{M}_{\Gamma R}$ on random seed unitaries did not result in any other distinguishable distributions other than the ones shown in Fig.~(\ref{fig:ent_dist_d_4}). This suggests that there are at least three LU classes of 2-unitaries in $d^2=16$. The representatives of these 3 LU classes are: (i) $P_{16}$ given by Eq.~(\ref{eq:P16}), (ii) enphased $P_{16}$; $P_{16}^{'}=D_1\,P_{16}\,D_2$, where $D_1$ and $D_2$ are diagonal unitaries with arbitrary phases, and (iii) $O_{16}$ given by Eq.~(\ref{eq:O16mat}). Note that $P_{16}$ and $P_{16}^{'}$ are not LU equivalent in general and thus the corresponding AME states of minimal support are not LU equivalent.

Each row or, column of $O_{16}$ treated as a pure state in $\mathcal{H}_4 \otimes \mathcal{H}_4$ is maximally entangled and thus the underlying combinatorial design corresponding to $O_{16}$ does not factor into separable structures $\mathcal{K}$ and $\mathcal{L}$ defined in Eq.~(\ref{eq:KLmat}). Also note that each $4 \times 4$ block in Eq.~(\ref{eq:O16mat}) is unitary up to a scale factor and thus rows or columns of $O_{16}^R$ are also maximally entangled states in $\mathcal{H}_4 \otimes \mathcal{H}_4$. Similar entangled combinatorial structure corresponding to 2-unitary of size $36$ has been referred to as entangled orthogonal quantum Latin squares (OQLS) in Ref.~\cite{SRatherAME46} in which entangled OQLS of size six was found. Based on our discussion in previous section on 2-unitaries of size $9$ and the known fact that there are no 2-unitaries of size $4$,  $d=4$ seems to be the smallest possible dimension in which entangled OLS exists.

This allows us to construct a new kind of AME state of four ququads; AME($4,4$), which is not LU equivalent to AME state of minimal support constructed from $P_{16}$. The corresponding AME state  written in computational basis is given by,
\beq
\ket{\text{AME}(4,4)}=\sum_{i,j,k,l=1}^4 (O_{16})_{ij,kl}\ket{ijkl}.
\label{eq:ketO16}
\eeq
The tensor $T_{ijkl}=(O_{16})_{ij,kl}$ is a  perfect tensor \cite{Pastawski2015} whose non-zero entries are given by Eq.~(\ref{eq:O16mat}). To our knowledge this is the simplest AME state that is not derived from a classical design or is equivalent to one, for equivalence among AME states, see for example \cite{Adam_SLOCC_2020,Acin_AME_2020}. Thus it qualifies as a younger cousin of the AME(4,6) which is a genuine quantum orthogonal Latin square \cite{SRatherAME46}. However, unlike the golden state AME(4,6), this is purely real.   Earlier constructions of ququad AME states have much larger number of particles \cite{Acin_AME_2020}.

We performed several local unitary transformations on $O_{16}$ and reduced the numer of it's non-zero entries or, equivalently the support of AME state given by Eq.~(\ref{eq:ketO16}), from $64$ to $42$, although the transformed matrix has entries other than $\pm 1/2$. The transformed matrix has two unentangled columns and theferore $O_{16}$ is not a universal entangler.

\begin{figure}
\centering
\includegraphics[scale=0.65]{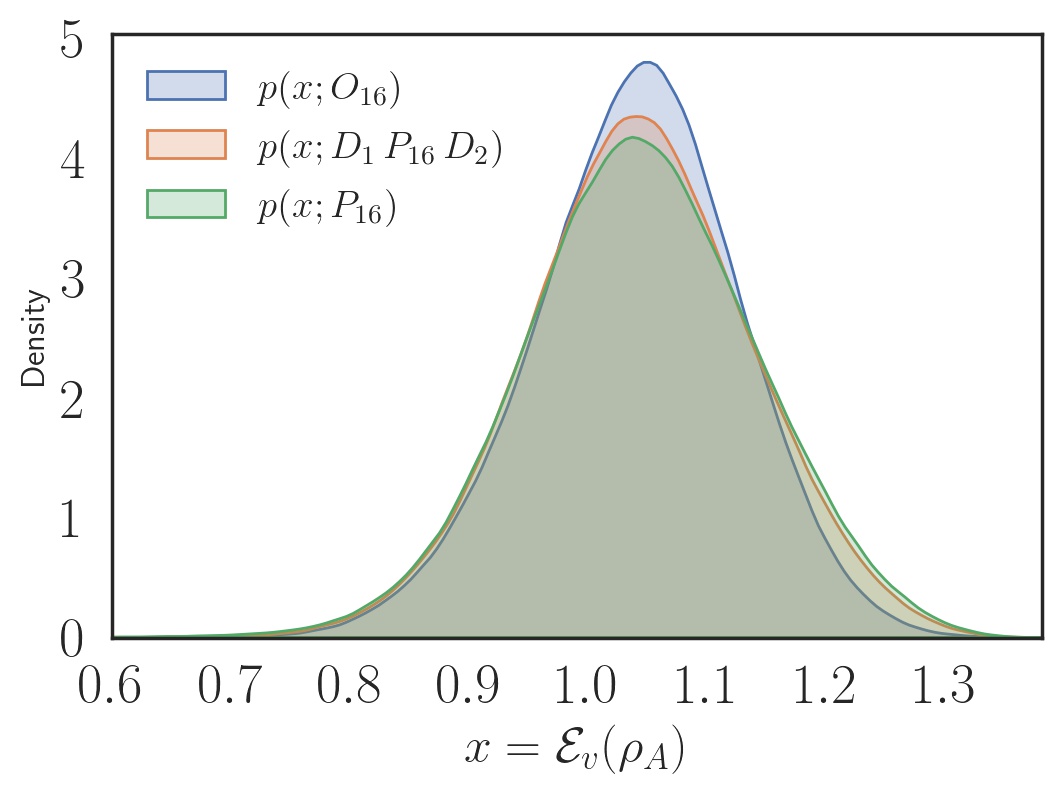}
\caption{Two-ququad case, $d^2=16$. Distributions $p(x;U)$ of entanglement corresponding to 2-unitary permutation matrix $U=P_{16}$ (Eq.~(\ref{eq:P16})) and the 2-unitary orthogonal matrix $U=O_{16}$ (Eq.~(\ref{eq:O16mat})). The distributions are clearly distinguishable and shows that $P_{16}$ and $O_{16}$ are not LU equivalent. Shown also is the entanglement distribution corresponding to enphased $P_{16}$; $D_1 P_{16}D_2$ where $D_1$ and $D_2$ are diagonal unitaries, which is not LU equivalent to either $P_{16}$ or $O_{16}$.}
\label{fig:ent_dist_d_4}
\end{figure}


\section{Entangling properties of dual and T-dual permutation matrices \label{Sec:permu}}
Permutation matrices form an important class of entangling unitary operators \cite{Clarisse2005}. In this section we study the entangling properties of dual and T-dual permutation matrices on $\mathcal{H}_d \otimes \mathcal{H}_d$ which are special subsets of the permutation group $\mathrm{P}(d^2)$. Dual unitary permutation matrices have been recently explored in \cite{borsi2022remarks} and used as building blocks of quantum circuits with interesting dynamical behaviour \cite{ASA_2021}. Two permutation matrices $P_1$ and $P_2$ have been defined in \cite{Clarisse2005} to belong to the same \emph{entangling class} if they have the same entangling power, $e_p(P_1)=e_p(P_2)$. Two LU equivalent permutation matrices always belong to the same entangling class but permutation matrices belonging to the same entangling power need not be LU equivalent. For the sake of convenience we write the permutation matrix in terms of the column number of the only non-zero entry in each row. For example in this notation $P_9$ given by Eq.~(\ref{eq:P9}) is written as  $P_9=\left\lbrace 1,5,9,6,7,2,8,3,4 \right\rbrace$ corresponding to the permutation, $\pi:\left\lbrace 1,2,3,4,5,6,7,8,9\right\rbrace \rightarrow \left\lbrace \pi(1)=1,\pi(2)=5,\cdots,\pi(8)=3,\pi(9)=4 \right\rbrace.$

\subsection{Dual-unitary and T-dual permutations in $d=2$ and $d=3$}

We list all possible entangling classes for dual-unitary permutations in $\mathrm{P}(d^2)$ for $d=2,3$.

\emph{Two-qubit case}: In this case the corresponding permutation group is $\mathrm{P}(4)$. The projection of $\mathrm{P}(4)$ on $e_p$-$g_t$ plane is shown in Fig.~(\ref{fig:epgtP4}). There are $4$ distinct points corresponding to $24$ possible permutations of order $4$, treated as two-qubit gates, on the $e_p$-$g_t$ plane and every permutation matrix is either dual or T-dual. Number of entangling classes is only $2$ which are listed in Table \ref{Tab:epgtP4}.
\begin{table}[htbp]
\centering
	\caption{Entangling and LU classes of dual (equivalently T-dual) permutations in $\mathrm{P}(4)$.}
	\begin{tabular}{||c|c|c||}
		\hline
		S.No. & $e_p(P)$ & \# LU classes\\
		\hline
		\hline
		1 & 0 & 1 \\
		\hline
		2 & $\frac{2}{3}$ & 1 \\
		\hline
	\end{tabular}
\label{Tab:epgtP4}
\end{table}

\begin{figure}[h]
\includegraphics[scale=0.6]{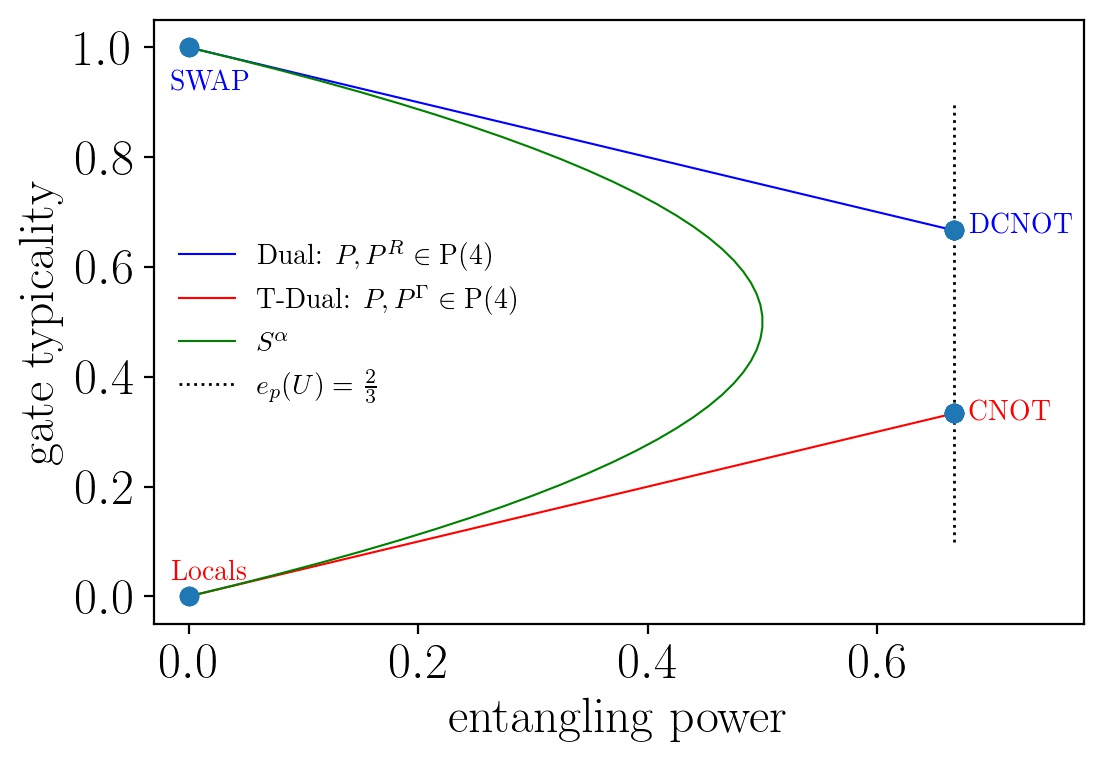}
\caption{Two-qubit case, $d^2=4$. The entangling power {\it vs} gate-typicality, of all permutations $\mathrm{P}(4)$, treated as two-qubit gates. Number of entangling classes, those with different entangling powers, is $2$.}
\label{fig:epgtP4}
\end{figure}

\emph{Two-qutrit case}: In this case the corresponding permutation group is $\mathrm{P}(9)$. The projection of $\mathrm{P}(9)$, treated as two-qutrit gates, on the $e_p$-$g_t$ plane is shown in Fig.~(\ref{fig:epgtP9}). There are only $60$ distinct points on the $e_p$-$g_t$ plane from the $9!=3,62,000$ possible permutation matrices of order $9$.
\begin{figure}[h]
\includegraphics[scale=0.6]{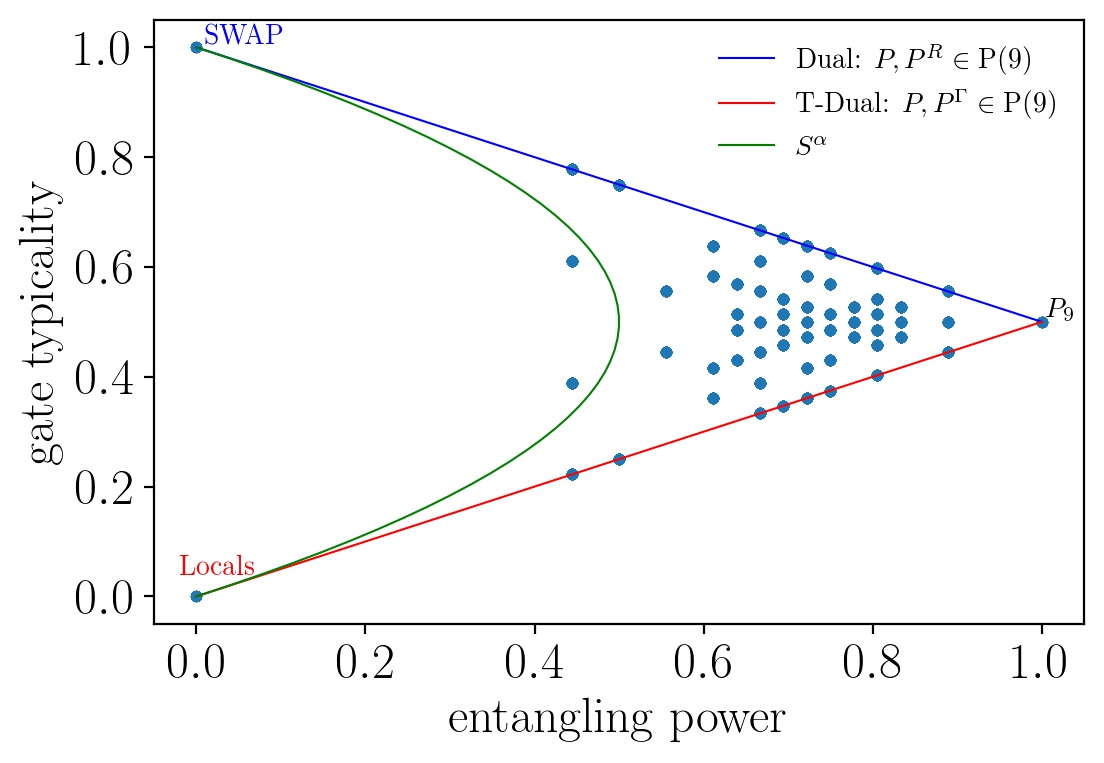}
\caption{Two-qutrit case, $d^2=9$. The entangling power {\it vs}  gate-typicality, of all $9!$ permutations $\mathrm{P}(9)$, treated as two-qutrit gates.  Number of entangling classes corresponding to dual (and equivalently T-dual) permutation matrices is $10$.}
\label{fig:epgtP9}
\end{figure}

It was shown in Ref. \cite{krishnan1996biunitary} that there are 18 LU classes of T-dual (or equivalently dual-unitary) permutation matrices. A representative permutation from each LU class are also listed therein. These LU classes are listed in the Table \ref{Tab:epgtP9}
along with their entangling powers. Therefore the number of entangling classes corresponding to dual unitary (and equivalently T-dual) permutation matrices is $10$.

Except for 3 entangling classes (corresponding to $e_p(P)=0,\,8/9,$ and $1$), there are more than one LU classes. Taking permutations from two different LU classes with the same entangling power (say $e_p(P)=1/2$), we observed that these produce the same entanglement distributions $p(x;U)$. This suggests that these LU inequivalent permutations with the same entangling power and gate-typicality might be connected by the {\sc swap} gate. Indeed we found that according to the LUS classification defined in Eq.~(\ref{eq:LUSequiv}) with $a=b=1$ that there are only $11$ LUS classes {\em i.e.}, two permutations $P$ and $P'$  belonging to the same entangling class but different LU classes are related (up to local permutations) as $P'=S P S$, where $S$ is the {\sc swap} gate.  This is the case for all entangling classes in the Table \ref{Tab:epgtP9} with more than one LU classes except for the entangling class $e_p(U)=2/3$.

\begin{table}
\setlength{\tabcolsep}{10pt}
\caption{Entangling, LU and LUS classes of dual (equivalently T-dual) permutations in $\mathrm{P}(9)$.}

\begin{tabular}{||c|c|c|c||}
\hline
S.No. & $e_p(P)$ & \# LU classes & \# LUS classes \\
\hline
\hline
1 & 0 & 1 & 1 \\
\hline
2 & $\frac{4}{9}$ & 2 & 1 \\
\hline
3 & $\frac{1}{2}$ & 2 & 1 \\
\hline
4 & $\frac{2}{3}$ & 3 & {\bf 2} \\
\hline
5 & $\frac{25}{36}$ & 2  & 1 \\
\hline
6 & $\frac{13}{18}$ & 2 & 1  \\
\hline
7 & $\frac{3}{4}$ & 2 & 1  \\
\hline
8 & $\frac{29}{36}$ & 2 & 1  \\
\hline
9 & $\frac{8}{9}$ & 1 & 1  \\
\hline
10 & 1 & 1 & 1 \\
\hline
\hline
Total & {} &  18 & 11 \\
\hline
\end{tabular}
\label{Tab:epgtP9}
\end{table}

The entangling class $e_p(U)=2/3$ is special and has two LUS classes. Representative permutations written in compact form as $P=\left\lbrace 1,4,8,2,5,7,6,3,9  \right\rbrace$ and $P'=\left\lbrace 1,4,9,2,5,8,6,3,7  \right\rbrace$ from both LUS classes produce distinguishable entanglement distributions shown in Fig.~(\ref{fig:LUSP9}). The LU inequivalence between $P$ and $P'$ can also be seen via the singular values of $P^{\Gamma}$ and $P^{' \Gamma}$ which are LUI's (see Eq.~(\ref{eq:EUS})); singular values of $P^{\Gamma}$ and $P^{' \Gamma}$ are $\left\lbrace 2,2,1 \right\rbrace$ and $\left\lbrace\sqrt{5},\sqrt{2},\sqrt{2}\right\rbrace$ respectively. This leads us to the strong suspicion that the equality of entanglement distributions may be a sufficient condition for LUS equivalence, {\em i.e.}, $p(x;U)=p(x;U')$ iff $U \stackrel{LUS}{\sim} U'$. 

An interesting fact we observed is that with von Neumann entropy as a measure of entanglement the average of distributions obtained for these permutations differ slightly; for $P$, $\overline{\mathcal{E}_v (\rho_A)} \approx 0.57$ while for $P'$ $\overline{\mathcal{E}_v (\rho_A)} \approx 0.55$, taking into acccount $10^6$ realisations of product states in both cases. Note that if the linear entropy is taken as a measure then the averages must be equal according to the definition of the entangling power \cite{Zanardi2001}. This suggests the role of other unknown LU invariants besides $E(U)$ and $E(US)$ which determine the average of the entanglement distribution when von Neumann entropy is taken as a measure of entanglement.

\begin{figure}
\centering
\includegraphics[scale=0.6]{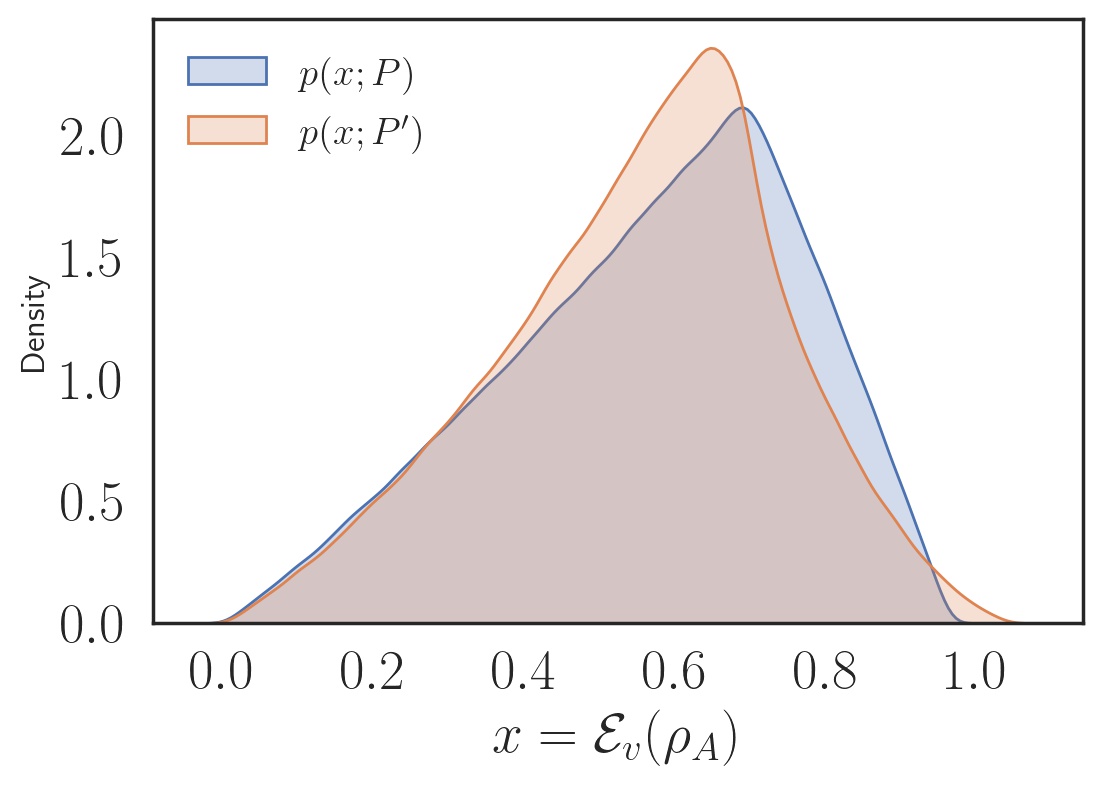}
\caption{Entanglement distributions obtained for $P$ and $P'$ permutation matrices of size $9$ belonging to two LUS classes, see text, corresponding to the entangling class with $e_p(U)=2/3$.  Distinguishabiliity of the distributions imply that $P$ and $P'$ are not LU equivalent.}
\label{fig:LUSP9}
\end{figure}

It is to be noted that out of 18 possible LU classes {\em only} one corresponds to the entangling class $e_p(P)=1$ of 2-unitary permutations. As a consequence of this all 72 possible 2-unitary permutations of order $9$ are locally equivalent consistent with Proposition \ref{prop:P9LUclass}.

\subsection{Numerical results for $d>3$}

Total possible number of LU classes of dual or, T-dual permutations in $\mathrm{P}(d^2)$ for $d>3$ is not known. The number of entangling classes is also not known, as an exhaustive enumeration of such permuatations is prohibitively large. To get a lower bound on the possible number of entangling classes corresponding to dual-unitary permutations of size 16 we numerically search over permutations in the vicinity of different permutations like {\sc swap} and 2-unitary gates. Results obtained from such a search over around $1.2 \times 10^7$ permutations of size $16$ (out of a possible $16!\sim 10^{13}$) is shown in Fig.~(\ref{fig:epgtP16}). We obtain $56$ entangling classes corresponding to dual or equivalently T-dual permutations. This provides a weak lower bound on the number of LU inequivalent classes for dual-unitary permutations of size $16$. Note that one of the entangling classes is $e_p(U)=1$ corresponding to 2-unitary permutations for which there is only one LU equivalence class (see Proposition \ref{prop:P16LUclass}).

\begin{figure}
\centering
\includegraphics[scale=0.6]{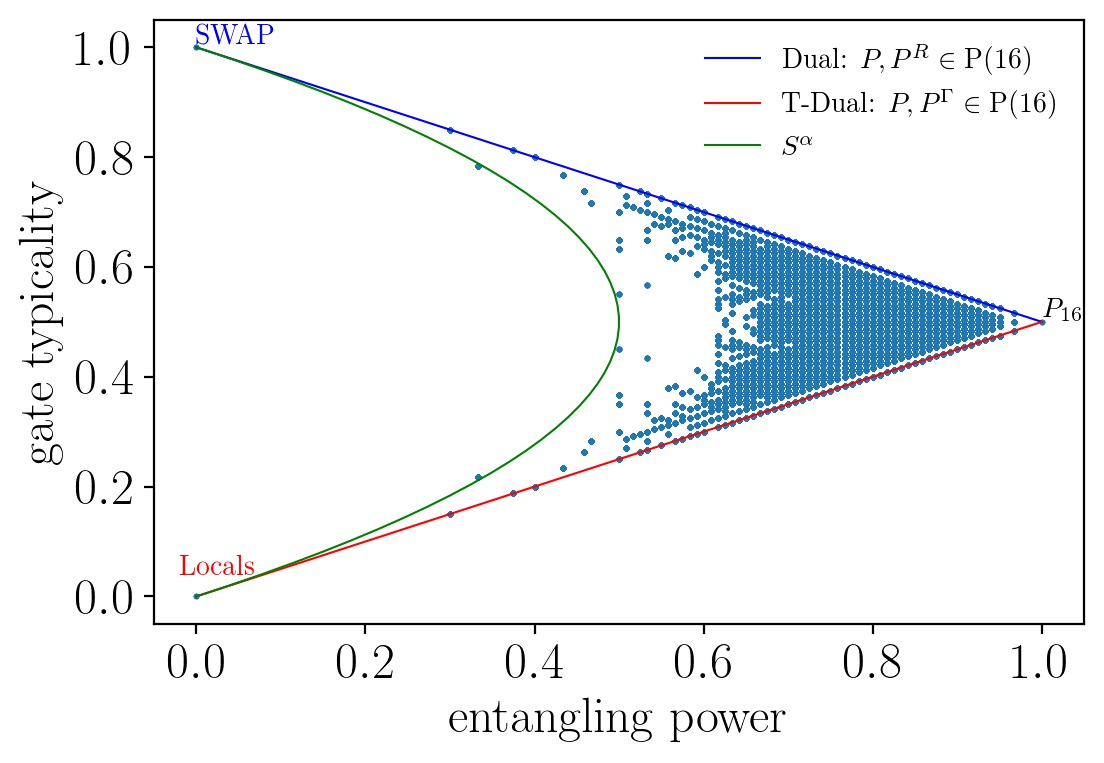}
\caption{Two-ququad case, $d^2=16$. The entangling power {\it vs}  gate-typicality of $1.2 \times 10^7$ permutations (out of total $16! \approx 10^{13}$ permutations), treated as two-ququad gates. Number of entangling classes corresponding to dual (and equivalently T-dual) permutation matrices obtained from our numerical search is found to be $56$.}
\label{fig:epgtP16}
\end{figure}

We end this section by showing that there exist more than one LU classes for 2-unitary permutations in $d>4$. An easy way to see this is by comparing entanglement distributions of 2-unitary permutation $P \in \mathrm{P}(d^2)$ and its realignment $P^R \in \mathrm{P}(d^2)$. This is shown in Fig.~(\ref{fig:entdistP25}) for a 2-unitary permutation in $d=5$ given by
 
\begin{widetext}
\beq
\begin{split}
P_{25}&=\left\lbrace  1,  7, 13, 19, 25, 22,  3,  9, 15, 16, 18, 24,  5,  6, 12, 14, 20, 21,  2,  8, 10, 11, 17, 23,  4 \right\rbrace,\\
P_{25}^R&=\left\lbrace  1,  7, 13, 19, 25,  8, 14, 20, 21,  2, 15, 16, 22,  3,  9, 17, 23,
        4, 10, 11, 24,  5,  6, 12, 18 \right\rbrace.     
\end{split}
\label{eq:P25}
\eeq
\end{widetext}
Entanglement distributions are different; $p(x;P_{25}) \neq p(x;P_{25}^R)$ and thus establishes that they are not LU equivalent. Recall that one can cannot jusify LU inequivalence between 2-unitaries based on the singular values of their reshuffled and partially transposed rearrangements as they are all maximized being equal to $1$. We checked entanglement distributions for $100$ 2-unitary permutations of size $25$ together with their different rearrangements but found only two different distributions shown in Fig.~(\ref{fig:entdistP25}). Thus total number of LU and LUS classes for 2-unitary permutations in $d=5$ remains unknown but it is certainly greater than $1$. Similarly in $d=7,8,$ and $9$ we observed only two different distributions corresponding to 2-unitary permutations and their rearrangements. 

A consequence of having more than one LU classes of 2-unitary permutations in $d>3$ results in minimal support AME states which are not LU equivalent. Our results thus contradict the {\em Conjecture 2} in Ref.~\cite{Adam_SLOCC_2020} which in particular for four party states implies that there is only one LU class of AME states of minimal support. As we have illustrated in Fig.~(\ref{fig:ent_dist_d_4}) and Fig.~(\ref{fig:entdistP25}) there exist more than one LU classes of AME states of minimal support for $d=4$ and $5$.

Assuming that 2-unitary permutations belonging to the same LU class are always related by local permutations one can argue that there are more than one LU classes of 2-unitary permutations for $d=7,8,$ and $9$ as the number of possible  orthogonal Latin squares (see A072377 , Ref. \cite{OEIS}) exceeds $(d!)^4$ number of orthogonal Latin squares obtained from 2-unitary permutation $P$ using local permutations $p_i$'s of size $d$ on both sides; $(p_1 \otimes p_2) P (p_3 \otimes p_4)$. Note that all $(d!)^4$ number of orthogonal Latin squares so obtained are not all different but even with redundancy this is less than the number of possible  orthogonal Latin squares in $d>5$. Interestingly in $d=5$ we found that although $(d!)^4$ exceeds the number of possible orthogonal Latin squares but it is not a factor of it which is the case for $d=3$ and $4$.

\begin{figure}[h]
\includegraphics[scale=0.65]{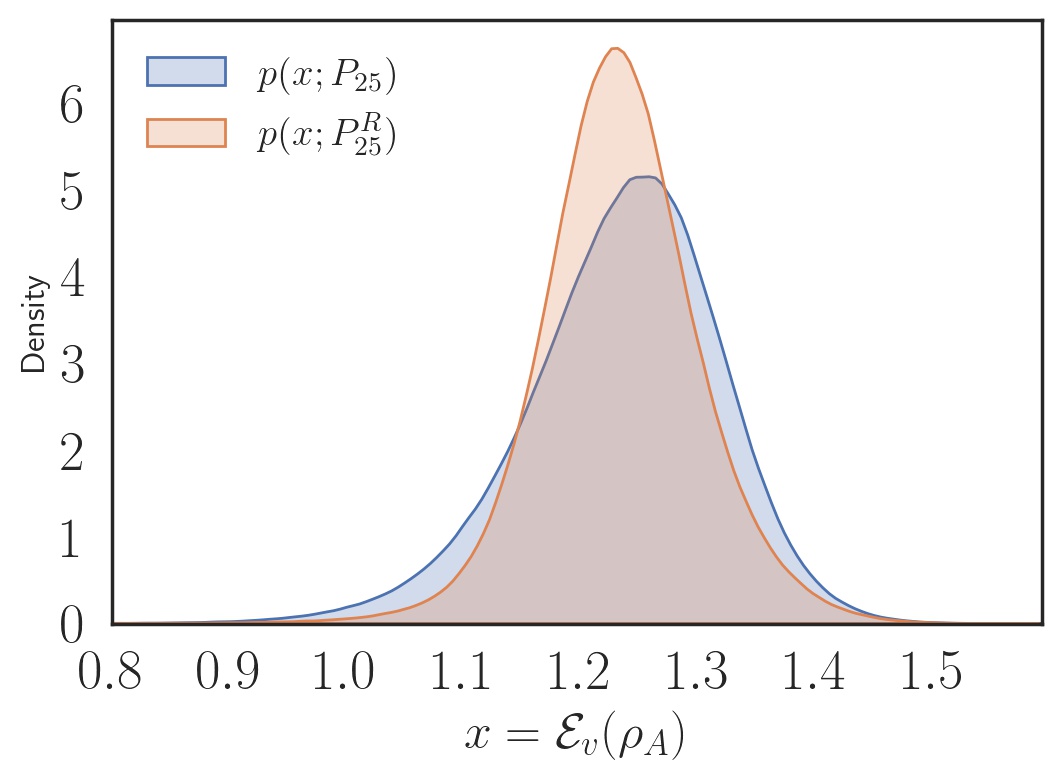}
\caption{$d^2=25$. Entanglement distributions of 2-unitary permutation $P_{25}$, Eq.~(\ref{eq:P25}), and it's realignment $P_{25}^R$ which is also a 2-unitary. The distributions are clearly distinguishable; $p(x;P_{25}) \neq p(x;P_{25}^R)$, and thus shows that $P_{25} \stackrel{LU} {\nsim} P_{25}^R$.}
\label{fig:entdistP25}
\end{figure}

\section{Summary and discussions\label{sec:conc}}

Despite several constructions of the set of dual-unitary operators, the complete characterization for arbitrary local Hilbert space dimension remains an open problem. In Ref.~\cite{SAA2020}, we proposed a non-linear iterative map that produces dual-unitary operators from arbitrary seed unitaries. The map acts as a dynamical system in the space of bipartite unitary operators. In this work, we have studied the period-two fixed points of the map which are dual-unitary operators and provided a stochastic generalization of the map which produces structured fixed points which are dual unitaries. Complete characterization of the fixed points of all orders remains to be understood and makes the map a novel dynamical system in its own right. For two-qubit gates, using the canonical or Cartan decomposition we analytically study the convergence rates for various initial conditions. However, convergence of the map in local Hilbert space dimension $d>2$ remains an unsolved problem. 

The subset of dual-unitary operators having maximum entangling power is that of 2-unitary operators. The 2-unitary permutation operators can be constructed from combinatorial designs called orthogonal Latin squares (OLS). The non-existence of OLS of size 6 motivates to look for general quantum combinatorial designs corresponding to 2-unitary operators as was recently found in Ref.~\cite{SRatherAME46} for local dimension $d=6$. The problem of finding such quantum combinatorial designs reduces to finding the 2-unitary operators which are not LU equivalent to any 2-unitary permutation matrix. From our extensive numerical searches using the dynamical map and known constructions of 2-unitaries, we could not find any such quantum design for local dimension $d=3$. All 2-unitary permutation operators of size $9$ are local unitarily (LU) equivalent to each other. Based on these results we conjecture that {\em all} 2-unitary operators of size $9$, not just permutations, are LU equivalent to each other. If true,  this implies that there is just {\em one} 2-unitary two-qutrit gate up to LU equivalence. 

Methods to ascertain local unitary (LU) equivalence between bipartite unitary operators is not known in general. For unitary operators with identical values of known local unitary invariants (LUI's) like the entangling power and the gate-typicality, the problem of LU equivalence becomes harder. In this paper we have proposed a necessary criterion for distinguishing LU inequivalent 2-unitary operators based on the entanglement distribution these produce. Using the iterative map we found a 2-unitary operator for local dimension $d=4$ which is LU inequivalent to any 2-unitary permutation of the same size. Thus this qualifies as a genuine 2-unitary quantum design in the lowest possible dimension, as they do not exist for $d=2$ and as far as we know, for $d=3$. This also implies that we have displayed an explicit example of an AME($4,4$) state that is not LU equivalent to AME($4,4$) of minimal support. We have shown that for $d=5$ there are at least two LU classes of 2-unitary permutations and thus there are two LU inequivalent AME states of minimal support. Consequences of these new examples of AME states for quantum error-correction is an interesting direction and is left for future studies. Stochastic local operations and classical communication (SLOCC) equivalence of LU inequivalent four party AME states found in this work for $d>3$ is an interesting problem and is left for future studies.

 {\em Note:} After this paper was written, we came to know of Ref. \cite{VijayK} wherein a criterion for determining local unitary equivalence of operators is presented that involves an exponential (in local dimension) set of invariants.

\begin{acknowledgments}
We are grateful to Bal\'azs Pozsgay for discussions on dual unitarity, and its connections to biunitarity. We thank Karol \.Zyczkowski and Adam Burchardt for comments on a preliminary version, and for illuminating remarks on the issue of LU equivalence. We are thankful to Vijay Kodiyalam for pointing out \cite{VijayK} and 
discussions around it. 
Rohan Narayan and Shrigyan Brahmachari's inputs and questions were much appreciated. This work was partially funded
by the Center for Quantum Information Theory in Matter and Spacetime, IIT Madras, 
and the Department of Science and Technology, Govt. of India, under Grant No.
DST/ICPS/QuST/Theme-3/2019/Q69. SA acknowledges the Institute postdoctoral fellowship
program of IIT Madras for funding during the initial stages of this work.
\end{acknowledgments}

\bibliography{ent_local}

\begin{thebibliography}{96}%
\makeatletter
\providecommand \@ifxundefined [1]{%
 \@ifx{#1\undefined}
}%
\providecommand \@ifnum [1]{%
 \ifnum #1\expandafter \@firstoftwo
 \else \expandafter \@secondoftwo
 \fi
}%
\providecommand \@ifx [1]{%
 \ifx #1\expandafter \@firstoftwo
 \else \expandafter \@secondoftwo
 \fi
}%
\providecommand \natexlab [1]{#1}%
\providecommand \enquote  [1]{``#1''}%
\providecommand \bibnamefont  [1]{#1}%
\providecommand \bibfnamefont [1]{#1}%
\providecommand \citenamefont [1]{#1}%
\providecommand \href@noop [0]{\@secondoftwo}%
\providecommand \href [0]{\begingroup \@sanitize@url \@href}%
\providecommand \@href[1]{\@@startlink{#1}\@@href}%
\providecommand \@@href[1]{\endgroup#1\@@endlink}%
\providecommand \@sanitize@url [0]{\catcode `\\12\catcode `\$12\catcode
  `\&12\catcode `\#12\catcode `\^12\catcode `\_12\catcode `\%12\relax}%
\providecommand \@@startlink[1]{}%
\providecommand \@@endlink[0]{}%
\providecommand \url  [0]{\begingroup\@sanitize@url \@url }%
\providecommand \@url [1]{\endgroup\@href {#1}{\urlprefix }}%
\providecommand \urlprefix  [0]{URL }%
\providecommand \Eprint [0]{\href }%
\providecommand \doibase [0]{http://dx.doi.org/}%
\providecommand \selectlanguage [0]{\@gobble}%
\providecommand \bibinfo  [0]{\@secondoftwo}%
\providecommand \bibfield  [0]{\@secondoftwo}%
\providecommand \translation [1]{[#1]}%
\providecommand \BibitemOpen [0]{}%
\providecommand \bibitemStop [0]{}%
\providecommand \bibitemNoStop [0]{.\EOS\space}%
\providecommand \EOS [0]{\spacefactor3000\relax}%
\providecommand \BibitemShut  [1]{\csname bibitem#1\endcsname}%
\let\auto@bib@innerbib\@empty
\bibitem [{\citenamefont {Amico}\ \emph {et~al.}(2008)\citenamefont {Amico},
  \citenamefont {Fazio}, \citenamefont {Osterloh},\ and\ \citenamefont
  {Vedral}}]{amico2008entanglement}%
  \BibitemOpen
  \bibfield  {author} {\bibinfo {author} {\bibfnamefont {Luigi}\ \bibnamefont
  {Amico}}, \bibinfo {author} {\bibfnamefont {Rosario}\ \bibnamefont {Fazio}},
  \bibinfo {author} {\bibfnamefont {Andreas}\ \bibnamefont {Osterloh}}, \ and\
  \bibinfo {author} {\bibfnamefont {Vlatko}\ \bibnamefont {Vedral}},\
  }\bibfield  {title} {\enquote {\bibinfo {title} {Entanglement in many-body
  systems},}\ }\href {\doibase 10.1103/RevModPhys.80.517} {\bibfield  {journal}
  {\bibinfo  {journal} {Rev. Mod. Phys.}\ }\textbf {\bibinfo {volume} {80}},\
  \bibinfo {pages} {517--576} (\bibinfo {year} {2008})}\BibitemShut {NoStop}%
\bibitem [{\citenamefont {Zeng}\ \emph {et~al.}(2015)\citenamefont {Zeng},
  \citenamefont {Chen}, \citenamefont {Zhou},\ and\ \citenamefont
  {Wen}}]{zeng2015quantum}%
  \BibitemOpen
  \bibfield  {author} {\bibinfo {author} {\bibfnamefont {Bei}\ \bibnamefont
  {Zeng}}, \bibinfo {author} {\bibfnamefont {Xie}\ \bibnamefont {Chen}},
  \bibinfo {author} {\bibfnamefont {Duan-Lu}\ \bibnamefont {Zhou}}, \ and\
  \bibinfo {author} {\bibfnamefont {Xiao-Gang}\ \bibnamefont {Wen}},\
  }\bibfield  {title} {\enquote {\bibinfo {title} {Quantum information meets
  quantum matter -- from quantum entanglement to topological phase in many-body
  systems},}\ }\href {\doibase 10.48550/arXiv.1508.02595} {\bibfield  {journal}
  {\bibinfo  {journal} {arXiv:1508.02595}\ } (\bibinfo {year} {2015}),\
  10.48550/arXiv.1508.02595}\BibitemShut {NoStop}%
\bibitem [{\citenamefont {Jahn}\ and\ \citenamefont
  {Eisert}(2021)}]{jahn2021holographic}%
  \BibitemOpen
  \bibfield  {author} {\bibinfo {author} {\bibfnamefont {Alexander}\
  \bibnamefont {Jahn}}\ and\ \bibinfo {author} {\bibfnamefont {Jens}\
  \bibnamefont {Eisert}},\ }\bibfield  {title} {\enquote {\bibinfo {title}
  {Holographic tensor network models and quantum error correction: a topical
  review},}\ }\href {\doibase 10.1088/2058-9565/ac0293} {\bibfield  {journal}
  {\bibinfo  {journal} {Quantum Science and Technology}\ }\textbf {\bibinfo
  {volume} {6}},\ \bibinfo {pages} {033002} (\bibinfo {year}
  {2021})}\BibitemShut {NoStop}%
\bibitem [{\citenamefont {Kibe}\ \emph {et~al.}(2022)\citenamefont {Kibe},
  \citenamefont {Mandayam},\ and\ \citenamefont
  {Mukhopadhyay}}]{kibe2021holographic}%
  \BibitemOpen
  \bibfield  {author} {\bibinfo {author} {\bibfnamefont {Tanay}\ \bibnamefont
  {Kibe}}, \bibinfo {author} {\bibfnamefont {Prabha}\ \bibnamefont {Mandayam}},
  \ and\ \bibinfo {author} {\bibfnamefont {Ayan}\ \bibnamefont
  {Mukhopadhyay}},\ }\bibfield  {title} {\enquote {\bibinfo {title}
  {Holographic spacetime, black holes and quantum error correcting codes: a
  review},}\ }\href {\doibase 10.1140/epjc/s10052-022-10382-1} {\bibfield
  {journal} {\bibinfo  {journal} {The European Physical Journal C}\ }\textbf
  {\bibinfo {volume} {82}} (\bibinfo {year} {2022}),\
  10.1140/epjc/s10052-022-10382-1}\BibitemShut {NoStop}%
\bibitem [{\citenamefont {Feynman}(1982)}]{feynman2018simulating}%
  \BibitemOpen
  \bibfield  {author} {\bibinfo {author} {\bibfnamefont {Richard~P}\
  \bibnamefont {Feynman}},\ }\bibfield  {title} {\enquote {\bibinfo {title}
  {Simulating physics with computers},}\ }\href {\doibase 10.1007/BF02650179}
  {\bibfield  {journal} {\bibinfo  {journal} {International Journal of
  Theoretical Physics}\ }\textbf {\bibinfo {volume} {21}},\ \bibinfo {pages}
  {467–488} (\bibinfo {year} {1982})}\BibitemShut {NoStop}%
\bibitem [{\citenamefont {Georgescu}\ \emph {et~al.}(2014)\citenamefont
  {Georgescu}, \citenamefont {Ashhab},\ and\ \citenamefont
  {Nori}}]{georgescu2014quantum}%
  \BibitemOpen
  \bibfield  {author} {\bibinfo {author} {\bibfnamefont {I.~M.}\ \bibnamefont
  {Georgescu}}, \bibinfo {author} {\bibfnamefont {S.}~\bibnamefont {Ashhab}}, \
  and\ \bibinfo {author} {\bibfnamefont {Franco}\ \bibnamefont {Nori}},\
  }\bibfield  {title} {\enquote {\bibinfo {title} {Quantum simulation},}\
  }\href {\doibase 10.1103/RevModPhys.86.153} {\bibfield  {journal} {\bibinfo
  {journal} {Rev. Mod. Phys.}\ }\textbf {\bibinfo {volume} {86}},\ \bibinfo
  {pages} {153--185} (\bibinfo {year} {2014})}\BibitemShut {NoStop}%
\bibitem [{\citenamefont {Preskill}(2018)}]{preskill2018quantum}%
  \BibitemOpen
  \bibfield  {author} {\bibinfo {author} {\bibfnamefont {John}\ \bibnamefont
  {Preskill}},\ }\bibfield  {title} {\enquote {\bibinfo {title} {Quantum
  {C}omputing in the {NISQ} era and beyond},}\ }\href {\doibase
  10.22331/q-2018-08-06-79} {\bibfield  {journal} {\bibinfo  {journal}
  {{Quantum}}\ }\textbf {\bibinfo {volume} {2}},\ \bibinfo {pages} {79}
  (\bibinfo {year} {2018})}\BibitemShut {NoStop}%
\bibitem [{\citenamefont {Ippoliti}\ \emph {et~al.}(2021)\citenamefont
  {Ippoliti}, \citenamefont {Kechedzhi}, \citenamefont {Moessner},
  \citenamefont {Sondhi},\ and\ \citenamefont {Khemani}}]{ippoliti2021many}%
  \BibitemOpen
  \bibfield  {author} {\bibinfo {author} {\bibfnamefont {Matteo}\ \bibnamefont
  {Ippoliti}}, \bibinfo {author} {\bibfnamefont {Kostyantyn}\ \bibnamefont
  {Kechedzhi}}, \bibinfo {author} {\bibfnamefont {Roderich}\ \bibnamefont
  {Moessner}}, \bibinfo {author} {\bibfnamefont {S.L.}\ \bibnamefont {Sondhi}},
  \ and\ \bibinfo {author} {\bibfnamefont {Vedika}\ \bibnamefont {Khemani}},\
  }\bibfield  {title} {\enquote {\bibinfo {title} {Many-body physics in the
  nisq era: Quantum programming a discrete time crystal},}\ }\href {\doibase
  10.1103/PRXQuantum.2.030346} {\bibfield  {journal} {\bibinfo  {journal} {PRX
  Quantum}\ }\textbf {\bibinfo {volume} {2}},\ \bibinfo {pages} {030346}
  (\bibinfo {year} {2021})}\BibitemShut {NoStop}%
\bibitem [{\citenamefont {Arute}\ \emph {et~al.}(2019)\citenamefont {Arute},
  \citenamefont {Arya}, \citenamefont {Babbush}, \citenamefont {Bacon} \emph
  {et~al.}}]{QSupreme}%
  \BibitemOpen
  \bibfield  {author} {\bibinfo {author} {\bibfnamefont {Frank}\ \bibnamefont
  {Arute}}, \bibinfo {author} {\bibfnamefont {Kunal}\ \bibnamefont {Arya}},
  \bibinfo {author} {\bibfnamefont {Ryan}\ \bibnamefont {Babbush}}, \bibinfo
  {author} {\bibfnamefont {Dave}\ \bibnamefont {Bacon}},  \emph {et~al.},\
  }\bibfield  {title} {\enquote {\bibinfo {title} {Quantum supremacy using a
  programmable superconducting processor},}\ }\href {\doibase
  10.1038/s41586-019-1666-5} {\bibfield  {journal} {\bibinfo  {journal}
  {Nature}\ }\textbf {\bibinfo {volume} {574}},\ \bibinfo {pages} {505--510}
  (\bibinfo {year} {2019})}\BibitemShut {NoStop}%
\bibitem [{\citenamefont {Zhu}\ \emph {et~al.}(2022)\citenamefont {Zhu},
  \citenamefont {Cao}, \citenamefont {Chen}, \citenamefont {Chen} \emph
  {et~al.}}]{zhu2021quantum}%
  \BibitemOpen
  \bibfield  {author} {\bibinfo {author} {\bibfnamefont {Qingling}\
  \bibnamefont {Zhu}}, \bibinfo {author} {\bibfnamefont {Sirui}\ \bibnamefont
  {Cao}}, \bibinfo {author} {\bibfnamefont {Fusheng}\ \bibnamefont {Chen}},
  \bibinfo {author} {\bibfnamefont {Ming-Cheng}\ \bibnamefont {Chen}},  \emph
  {et~al.},\ }\bibfield  {title} {\enquote {\bibinfo {title} {Quantum
  computational advantage via 60-qubit 24-cycle random circuit sampling},}\
  }\href {\doibase https://doi.org/10.1016/j.scib.2021.10.017} {\bibfield
  {journal} {\bibinfo  {journal} {Science Bulletin}\ }\textbf {\bibinfo
  {volume} {67}},\ \bibinfo {pages} {240--245} (\bibinfo {year}
  {2022})}\BibitemShut {NoStop}%
\bibitem [{\citenamefont {Nahum}\ \emph {et~al.}(2018)\citenamefont {Nahum},
  \citenamefont {Vijay},\ and\ \citenamefont {Haah}}]{Nahum2018}%
  \BibitemOpen
  \bibfield  {author} {\bibinfo {author} {\bibfnamefont {Adam}\ \bibnamefont
  {Nahum}}, \bibinfo {author} {\bibfnamefont {Sagar}\ \bibnamefont {Vijay}}, \
  and\ \bibinfo {author} {\bibfnamefont {Jeongwan}\ \bibnamefont {Haah}},\
  }\bibfield  {title} {\enquote {\bibinfo {title} {Operator spreading in random
  unitary circuits},}\ }\href {\doibase 10.1103/PhysRevX.8.021014} {\bibfield
  {journal} {\bibinfo  {journal} {Phys. Rev. X}\ }\textbf {\bibinfo {volume}
  {8}},\ \bibinfo {pages} {021014} (\bibinfo {year} {2018})}\BibitemShut
  {NoStop}%
\bibitem [{\citenamefont {Nahum}\ \emph {et~al.}(2017)\citenamefont {Nahum},
  \citenamefont {Ruhman}, \citenamefont {Vijay},\ and\ \citenamefont
  {Haah}}]{nahum2017quantum}%
  \BibitemOpen
  \bibfield  {author} {\bibinfo {author} {\bibfnamefont {Adam}\ \bibnamefont
  {Nahum}}, \bibinfo {author} {\bibfnamefont {Jonathan}\ \bibnamefont
  {Ruhman}}, \bibinfo {author} {\bibfnamefont {Sagar}\ \bibnamefont {Vijay}}, \
  and\ \bibinfo {author} {\bibfnamefont {Jeongwan}\ \bibnamefont {Haah}},\
  }\bibfield  {title} {\enquote {\bibinfo {title} {Quantum entanglement growth
  under random unitary dynamics},}\ }\href {\doibase 10.1103/PhysRevX.7.031016}
  {\bibfield  {journal} {\bibinfo  {journal} {Phys. Rev. X}\ }\textbf {\bibinfo
  {volume} {7}},\ \bibinfo {pages} {031016} (\bibinfo {year}
  {2017})}\BibitemShut {NoStop}%
\bibitem [{\citenamefont {Khemani}\ \emph {et~al.}(2018)\citenamefont
  {Khemani}, \citenamefont {Vishwanath},\ and\ \citenamefont
  {Huse}}]{khemani2018operator}%
  \BibitemOpen
  \bibfield  {author} {\bibinfo {author} {\bibfnamefont {Vedika}\ \bibnamefont
  {Khemani}}, \bibinfo {author} {\bibfnamefont {Ashvin}\ \bibnamefont
  {Vishwanath}}, \ and\ \bibinfo {author} {\bibfnamefont {David~A.}\
  \bibnamefont {Huse}},\ }\bibfield  {title} {\enquote {\bibinfo {title}
  {Operator spreading and the emergence of dissipative hydrodynamics under
  unitary evolution with conservation laws},}\ }\href {\doibase
  10.1103/PhysRevX.8.031057} {\bibfield  {journal} {\bibinfo  {journal} {Phys.
  Rev. X}\ }\textbf {\bibinfo {volume} {8}},\ \bibinfo {pages} {031057}
  (\bibinfo {year} {2018})}\BibitemShut {NoStop}%
\bibitem [{\citenamefont {von Keyserlingk}\ \emph {et~al.}(2018)\citenamefont
  {von Keyserlingk}, \citenamefont {Rakovszky}, \citenamefont {Pollmann},\ and\
  \citenamefont {Sondhi}}]{von2018operator}%
  \BibitemOpen
  \bibfield  {author} {\bibinfo {author} {\bibfnamefont {C.~W.}\ \bibnamefont
  {von Keyserlingk}}, \bibinfo {author} {\bibfnamefont {Tibor}\ \bibnamefont
  {Rakovszky}}, \bibinfo {author} {\bibfnamefont {Frank}\ \bibnamefont
  {Pollmann}}, \ and\ \bibinfo {author} {\bibfnamefont {S.~L.}\ \bibnamefont
  {Sondhi}},\ }\bibfield  {title} {\enquote {\bibinfo {title} {Operator
  hydrodynamics, otocs, and entanglement growth in systems without conservation
  laws},}\ }\href {\doibase 10.1103/PhysRevX.8.021013} {\bibfield  {journal}
  {\bibinfo  {journal} {Phys. Rev. X}\ }\textbf {\bibinfo {volume} {8}},\
  \bibinfo {pages} {021013} (\bibinfo {year} {2018})}\BibitemShut {NoStop}%
\bibitem [{\citenamefont {Chan}\ \emph {et~al.}(2018)\citenamefont {Chan},
  \citenamefont {De~Luca},\ and\ \citenamefont {Chalker}}]{chan2018solution}%
  \BibitemOpen
  \bibfield  {author} {\bibinfo {author} {\bibfnamefont {Amos}\ \bibnamefont
  {Chan}}, \bibinfo {author} {\bibfnamefont {Andrea}\ \bibnamefont {De~Luca}},
  \ and\ \bibinfo {author} {\bibfnamefont {J.~T.}\ \bibnamefont {Chalker}},\
  }\bibfield  {title} {\enquote {\bibinfo {title} {Solution of a minimal model
  for many-body quantum chaos},}\ }\href {\doibase 10.1103/PhysRevX.8.041019}
  {\bibfield  {journal} {\bibinfo  {journal} {Phys. Rev. X}\ }\textbf {\bibinfo
  {volume} {8}},\ \bibinfo {pages} {041019} (\bibinfo {year}
  {2018})}\BibitemShut {NoStop}%
\bibitem [{\citenamefont {Akila}\ \emph {et~al.}(2016)\citenamefont {Akila},
  \citenamefont {Waltner}, \citenamefont {Gutkin},\ and\ \citenamefont
  {Guhr}}]{Akila2016}%
  \BibitemOpen
  \bibfield  {author} {\bibinfo {author} {\bibfnamefont {M}~\bibnamefont
  {Akila}}, \bibinfo {author} {\bibfnamefont {D}~\bibnamefont {Waltner}},
  \bibinfo {author} {\bibfnamefont {B}~\bibnamefont {Gutkin}}, \ and\ \bibinfo
  {author} {\bibfnamefont {T}~\bibnamefont {Guhr}},\ }\bibfield  {title}
  {\enquote {\bibinfo {title} {Particle-time duality in the kicked {I}sing spin
  chain},}\ }\href {\doibase 10.1088/1751-8113/49/37/375101} {\bibfield
  {journal} {\bibinfo  {journal} {Journal of Physics A: Mathematical and
  Theoretical}\ }\textbf {\bibinfo {volume} {49}},\ \bibinfo {pages} {375101}
  (\bibinfo {year} {2016})}\BibitemShut {NoStop}%
\bibitem [{\citenamefont {Bertini}\ \emph
  {et~al.}(2019{\natexlab{a}})\citenamefont {Bertini}, \citenamefont {Kos},\
  and\ \citenamefont {Prosen}}]{Bertini2019}%
  \BibitemOpen
  \bibfield  {author} {\bibinfo {author} {\bibfnamefont {Bruno}\ \bibnamefont
  {Bertini}}, \bibinfo {author} {\bibfnamefont {Pavel}\ \bibnamefont {Kos}}, \
  and\ \bibinfo {author} {\bibfnamefont {Toma{\v{z}}}\ \bibnamefont {Prosen}},\
  }\bibfield  {title} {\enquote {\bibinfo {title} {Exact correlation functions
  for dual-unitary lattice models in $1+1$ dimensions},}\ }\href {\doibase
  10.1103/PhysRevLett.123.210601} {\bibfield  {journal} {\bibinfo  {journal}
  {Phys. Rev. Lett.}\ }\textbf {\bibinfo {volume} {123}},\ \bibinfo {pages}
  {210601} (\bibinfo {year} {2019}{\natexlab{a}})}\BibitemShut {NoStop}%
\bibitem [{\citenamefont {Gutkin}\ \emph {et~al.}(2020)\citenamefont {Gutkin},
  \citenamefont {Braun}, \citenamefont {Akila}, \citenamefont {Waltner},\ and\
  \citenamefont {Guhr}}]{gutkin2020exact}%
  \BibitemOpen
  \bibfield  {author} {\bibinfo {author} {\bibfnamefont {Boris}\ \bibnamefont
  {Gutkin}}, \bibinfo {author} {\bibfnamefont {Petr}\ \bibnamefont {Braun}},
  \bibinfo {author} {\bibfnamefont {Maram}\ \bibnamefont {Akila}}, \bibinfo
  {author} {\bibfnamefont {Daniel}\ \bibnamefont {Waltner}}, \ and\ \bibinfo
  {author} {\bibfnamefont {Thomas}\ \bibnamefont {Guhr}},\ }\bibfield  {title}
  {\enquote {\bibinfo {title} {Exact local correlations in kicked chains},}\
  }\href {\doibase 10.1103/PhysRevB.102.174307} {\bibfield  {journal} {\bibinfo
   {journal} {Phys. Rev. B}\ }\textbf {\bibinfo {volume} {102}},\ \bibinfo
  {pages} {174307} (\bibinfo {year} {2020})}\BibitemShut {NoStop}%
\bibitem [{\citenamefont {Braun}\ \emph {et~al.}(2020)\citenamefont {Braun},
  \citenamefont {Waltner}, \citenamefont {Akila}, \citenamefont {Gutkin},\ and\
  \citenamefont {Guhr}}]{BraunPRE2020}%
  \BibitemOpen
  \bibfield  {author} {\bibinfo {author} {\bibfnamefont {Petr}\ \bibnamefont
  {Braun}}, \bibinfo {author} {\bibfnamefont {Daniel}\ \bibnamefont {Waltner}},
  \bibinfo {author} {\bibfnamefont {Maram}\ \bibnamefont {Akila}}, \bibinfo
  {author} {\bibfnamefont {Boris}\ \bibnamefont {Gutkin}}, \ and\ \bibinfo
  {author} {\bibfnamefont {Thomas}\ \bibnamefont {Guhr}},\ }\bibfield  {title}
  {\enquote {\bibinfo {title} {Transition from quantum chaos to localization in
  spin chains},}\ }\href {\doibase 10.1103/PhysRevE.101.052201} {\bibfield
  {journal} {\bibinfo  {journal} {Phys. Rev. E}\ }\textbf {\bibinfo {volume}
  {101}},\ \bibinfo {pages} {052201} (\bibinfo {year} {2020})}\BibitemShut
  {NoStop}%
\bibitem [{\citenamefont {Bertini}\ \emph {et~al.}(2018)\citenamefont
  {Bertini}, \citenamefont {Kos},\ and\ \citenamefont {Prosen}}]{Bertini2018}%
  \BibitemOpen
  \bibfield  {author} {\bibinfo {author} {\bibfnamefont {Bruno}\ \bibnamefont
  {Bertini}}, \bibinfo {author} {\bibfnamefont {Pavel}\ \bibnamefont {Kos}}, \
  and\ \bibinfo {author} {\bibfnamefont {T.}~\bibnamefont {Prosen}},\
  }\bibfield  {title} {\enquote {\bibinfo {title} {Exact spectral form factor
  in a minimal model of many-body quantum chaos},}\ }\href {\doibase
  10.1103/PhysRevLett.121.264101} {\bibfield  {journal} {\bibinfo  {journal}
  {Phys. Rev. Lett.}\ }\textbf {\bibinfo {volume} {121}},\ \bibinfo {pages}
  {264101} (\bibinfo {year} {2018})}\BibitemShut {NoStop}%
\bibitem [{\citenamefont {Bertini}\ \emph
  {et~al.}(2019{\natexlab{b}})\citenamefont {Bertini}, \citenamefont {Kos},\
  and\ \citenamefont {Prosen}}]{bertini2019entanglement}%
  \BibitemOpen
  \bibfield  {author} {\bibinfo {author} {\bibfnamefont {Bruno}\ \bibnamefont
  {Bertini}}, \bibinfo {author} {\bibfnamefont {Pavel}\ \bibnamefont {Kos}}, \
  and\ \bibinfo {author} {\bibfnamefont {T.}~\bibnamefont {Prosen}},\
  }\bibfield  {title} {\enquote {\bibinfo {title} {Entanglement spreading in a
  minimal model of maximal many-body quantum chaos},}\ }\href {\doibase
  10.1103/PhysRevX.9.021033} {\bibfield  {journal} {\bibinfo  {journal} {Phys.
  Rev. X}\ }\textbf {\bibinfo {volume} {9}},\ \bibinfo {pages} {021033}
  (\bibinfo {year} {2019}{\natexlab{b}})}\BibitemShut {NoStop}%
\bibitem [{\citenamefont {Bertini}\ \emph
  {et~al.}(2020{\natexlab{a}})\citenamefont {Bertini}, \citenamefont {Kos},\
  and\ \citenamefont {Prosen}}]{bertini2019operator}%
  \BibitemOpen
  \bibfield  {author} {\bibinfo {author} {\bibfnamefont {Bruno}\ \bibnamefont
  {Bertini}}, \bibinfo {author} {\bibfnamefont {Pavel}\ \bibnamefont {Kos}}, \
  and\ \bibinfo {author} {\bibfnamefont {Toma{\v z}}\ \bibnamefont {Prosen}},\
  }\bibfield  {title} {\enquote {\bibinfo {title} {{Operator Entanglement in
  Local Quantum Circuits I: Chaotic Dual-Unitary Circuits}},}\ }\href {\doibase
  10.21468/SciPostPhys.8.4.067} {\bibfield  {journal} {\bibinfo  {journal}
  {SciPost Phys.}\ }\textbf {\bibinfo {volume} {8}},\ \bibinfo {pages} {067}
  (\bibinfo {year} {2020}{\natexlab{a}})}\BibitemShut {NoStop}%
\bibitem [{\citenamefont {Kos}\ \emph {et~al.}(2021)\citenamefont {Kos},
  \citenamefont {Bertini},\ and\ \citenamefont {Prosen}}]{kos2020chaos}%
  \BibitemOpen
  \bibfield  {author} {\bibinfo {author} {\bibfnamefont {Pavel}\ \bibnamefont
  {Kos}}, \bibinfo {author} {\bibfnamefont {Bruno}\ \bibnamefont {Bertini}}, \
  and\ \bibinfo {author} {\bibfnamefont {Toma{\v z}}\ \bibnamefont {Prosen}},\
  }\bibfield  {title} {\enquote {\bibinfo {title} {Chaos and ergodicity in
  extended quantum systems with noisy driving},}\ }\href {\doibase
  10.1103/PhysRevLett.126.190601} {\bibfield  {journal} {\bibinfo  {journal}
  {Phys. Rev. Lett.}\ }\textbf {\bibinfo {volume} {126}},\ \bibinfo {pages}
  {190601} (\bibinfo {year} {2021})}\BibitemShut {NoStop}%
\bibitem [{\citenamefont {Lerose}\ \emph {et~al.}(2021)\citenamefont {Lerose},
  \citenamefont {Sonner},\ and\ \citenamefont {Abanin}}]{lerose2020influence}%
  \BibitemOpen
  \bibfield  {author} {\bibinfo {author} {\bibfnamefont {Alessio}\ \bibnamefont
  {Lerose}}, \bibinfo {author} {\bibfnamefont {Michael}\ \bibnamefont
  {Sonner}}, \ and\ \bibinfo {author} {\bibfnamefont {Dmitry~A.}\ \bibnamefont
  {Abanin}},\ }\bibfield  {title} {\enquote {\bibinfo {title} {Influence matrix
  approach to many-body floquet dynamics},}\ }\href {\doibase
  10.1103/PhysRevX.11.021040} {\bibfield  {journal} {\bibinfo  {journal} {Phys.
  Rev. X}\ }\textbf {\bibinfo {volume} {11}},\ \bibinfo {pages} {021040}
  (\bibinfo {year} {2021})}\BibitemShut {NoStop}%
\bibitem [{\citenamefont {Garratt}\ and\ \citenamefont
  {Chalker}(2021)}]{garratt2020many}%
  \BibitemOpen
  \bibfield  {author} {\bibinfo {author} {\bibfnamefont {S.~J.}\ \bibnamefont
  {Garratt}}\ and\ \bibinfo {author} {\bibfnamefont {J.~T.}\ \bibnamefont
  {Chalker}},\ }\bibfield  {title} {\enquote {\bibinfo {title} {Local pairing
  of {F}eynman histories in many-body floquet models},}\ }\href {\doibase
  10.1103/PhysRevX.11.021051} {\bibfield  {journal} {\bibinfo  {journal} {Phys.
  Rev. X}\ }\textbf {\bibinfo {volume} {11}},\ \bibinfo {pages} {021051}
  (\bibinfo {year} {2021})}\BibitemShut {NoStop}%
\bibitem [{\citenamefont {Flack}\ \emph {et~al.}(2020)\citenamefont {Flack},
  \citenamefont {Bertini},\ and\ \citenamefont {Prosen}}]{flack2020statistics}%
  \BibitemOpen
  \bibfield  {author} {\bibinfo {author} {\bibfnamefont {Ana}\ \bibnamefont
  {Flack}}, \bibinfo {author} {\bibfnamefont {Bruno}\ \bibnamefont {Bertini}},
  \ and\ \bibinfo {author} {\bibfnamefont {Toma{\v z}}\ \bibnamefont
  {Prosen}},\ }\bibfield  {title} {\enquote {\bibinfo {title} {Statistics of
  the spectral form factor in the self-dual kicked ising model},}\ }\href
  {\doibase 10.1103/PhysRevResearch.2.043403} {\bibfield  {journal} {\bibinfo
  {journal} {Phys. Rev. Research}\ }\textbf {\bibinfo {volume} {2}},\ \bibinfo
  {pages} {043403} (\bibinfo {year} {2020})}\BibitemShut {NoStop}%
\bibitem [{\citenamefont {Bertini}\ \emph
  {et~al.}(2020{\natexlab{b}})\citenamefont {Bertini}, \citenamefont {Kos},\
  and\ \citenamefont {Prosen}}]{bertini2019operatorII}%
  \BibitemOpen
  \bibfield  {author} {\bibinfo {author} {\bibfnamefont {Bruno}\ \bibnamefont
  {Bertini}}, \bibinfo {author} {\bibfnamefont {Pavel}\ \bibnamefont {Kos}}, \
  and\ \bibinfo {author} {\bibfnamefont {Tomaz}\ \bibnamefont {Prosen}},\
  }\bibfield  {title} {\enquote {\bibinfo {title} {{Operator Entanglement in
  Local Quantum Circuits II: Solitons in Chains of Qubits}},}\ }\href {\doibase
  10.21468/SciPostPhys.8.4.068} {\bibfield  {journal} {\bibinfo  {journal}
  {SciPost Phys.}\ }\textbf {\bibinfo {volume} {8}},\ \bibinfo {pages} {068}
  (\bibinfo {year} {2020}{\natexlab{b}})}\BibitemShut {NoStop}%
\bibitem [{\citenamefont {Piroli}\ \emph {et~al.}(2020)\citenamefont {Piroli},
  \citenamefont {Bertini}, \citenamefont {Cirac},\ and\ \citenamefont
  {Prosen}}]{piroli2020exact}%
  \BibitemOpen
  \bibfield  {author} {\bibinfo {author} {\bibfnamefont {Lorenzo}\ \bibnamefont
  {Piroli}}, \bibinfo {author} {\bibfnamefont {Bruno}\ \bibnamefont {Bertini}},
  \bibinfo {author} {\bibfnamefont {J.~Ignacio}\ \bibnamefont {Cirac}}, \ and\
  \bibinfo {author} {\bibfnamefont {Toma{\v z}}\ \bibnamefont {Prosen}},\
  }\bibfield  {title} {\enquote {\bibinfo {title} {Exact dynamics in
  dual-unitary quantum circuits},}\ }\href {\doibase
  10.1103/PhysRevB.101.094304} {\bibfield  {journal} {\bibinfo  {journal}
  {Phys. Rev. B}\ }\textbf {\bibinfo {volume} {101}},\ \bibinfo {pages}
  {094304} (\bibinfo {year} {2020})}\BibitemShut {NoStop}%
\bibitem [{\citenamefont {Claeys}\ and\ \citenamefont
  {Lamacraft}(2020)}]{claeys2020maximum}%
  \BibitemOpen
  \bibfield  {author} {\bibinfo {author} {\bibfnamefont {Pieter~W.}\
  \bibnamefont {Claeys}}\ and\ \bibinfo {author} {\bibfnamefont {Austen}\
  \bibnamefont {Lamacraft}},\ }\bibfield  {title} {\enquote {\bibinfo {title}
  {Maximum velocity quantum circuits},}\ }\href {\doibase
  10.1103/PhysRevResearch.2.033032} {\bibfield  {journal} {\bibinfo  {journal}
  {Phys. Rev. Research}\ }\textbf {\bibinfo {volume} {2}},\ \bibinfo {pages}
  {033032} (\bibinfo {year} {2020})}\BibitemShut {NoStop}%
\bibitem [{\citenamefont {Klobas}\ \emph {et~al.}(2021)\citenamefont {Klobas},
  \citenamefont {Bertini},\ and\ \citenamefont {Piroli}}]{klobas2021exact}%
  \BibitemOpen
  \bibfield  {author} {\bibinfo {author} {\bibfnamefont {Katja}\ \bibnamefont
  {Klobas}}, \bibinfo {author} {\bibfnamefont {Bruno}\ \bibnamefont {Bertini}},
  \ and\ \bibinfo {author} {\bibfnamefont {Lorenzo}\ \bibnamefont {Piroli}},\
  }\bibfield  {title} {\enquote {\bibinfo {title} {Exact thermalization
  dynamics in the ``rule 54'' quantum cellular automaton},}\ }\href {\doibase
  10.1103/PhysRevLett.126.160602} {\bibfield  {journal} {\bibinfo  {journal}
  {Phys. Rev. Lett.}\ }\textbf {\bibinfo {volume} {126}},\ \bibinfo {pages}
  {160602} (\bibinfo {year} {2021})}\BibitemShut {NoStop}%
\bibitem [{\citenamefont {Ippoliti}\ \emph {et~al.}(2022)\citenamefont
  {Ippoliti}, \citenamefont {Rakovszky},\ and\ \citenamefont
  {Khemani}}]{ippoliti2022fractal}%
  \BibitemOpen
  \bibfield  {author} {\bibinfo {author} {\bibfnamefont {Matteo}\ \bibnamefont
  {Ippoliti}}, \bibinfo {author} {\bibfnamefont {Tibor}\ \bibnamefont
  {Rakovszky}}, \ and\ \bibinfo {author} {\bibfnamefont {Vedika}\ \bibnamefont
  {Khemani}},\ }\bibfield  {title} {\enquote {\bibinfo {title} {Fractal,
  logarithmic, and volume-law entangled nonthermal steady states via spacetime
  duality},}\ }\href {\doibase 10.1103/PhysRevX.12.011045} {\bibfield
  {journal} {\bibinfo  {journal} {Phys. Rev. X}\ }\textbf {\bibinfo {volume}
  {12}},\ \bibinfo {pages} {011045} (\bibinfo {year} {2022})}\BibitemShut
  {NoStop}%
\bibitem [{\citenamefont {Ippoliti}\ and\ \citenamefont
  {Khemani}(2021)}]{ippoliti2021postselection}%
  \BibitemOpen
  \bibfield  {author} {\bibinfo {author} {\bibfnamefont {Matteo}\ \bibnamefont
  {Ippoliti}}\ and\ \bibinfo {author} {\bibfnamefont {Vedika}\ \bibnamefont
  {Khemani}},\ }\bibfield  {title} {\enquote {\bibinfo {title}
  {Postselection-free entanglement dynamics via spacetime duality},}\ }\href
  {\doibase 10.1103/PhysRevLett.126.060501} {\bibfield  {journal} {\bibinfo
  {journal} {Phys. Rev. Lett.}\ }\textbf {\bibinfo {volume} {126}},\ \bibinfo
  {pages} {060501} (\bibinfo {year} {2021})}\BibitemShut {NoStop}%
\bibitem [{\citenamefont {Klobas}\ and\ \citenamefont
  {Bertini}(2021)}]{klobas2021entanglement}%
  \BibitemOpen
  \bibfield  {author} {\bibinfo {author} {\bibfnamefont {Katja}\ \bibnamefont
  {Klobas}}\ and\ \bibinfo {author} {\bibfnamefont {Bruno}\ \bibnamefont
  {Bertini}},\ }\bibfield  {title} {\enquote {\bibinfo {title} {{Entanglement
  dynamics in Rule 54: Exact results and quasiparticle picture}},}\ }\href
  {\doibase 10.21468/SciPostPhys.11.6.107} {\bibfield  {journal} {\bibinfo
  {journal} {SciPost Phys.}\ }\textbf {\bibinfo {volume} {11}},\ \bibinfo
  {pages} {107} (\bibinfo {year} {2021})}\BibitemShut {NoStop}%
\bibitem [{\citenamefont {Claeys}\ and\ \citenamefont
  {Lamacraft}(2022)}]{claeys2022emergent}%
  \BibitemOpen
  \bibfield  {author} {\bibinfo {author} {\bibfnamefont {Pieter~W.}\
  \bibnamefont {Claeys}}\ and\ \bibinfo {author} {\bibfnamefont {Austen}\
  \bibnamefont {Lamacraft}},\ }\bibfield  {title} {\enquote {\bibinfo {title}
  {Emergent quantum state designs and biunitarity in dual-unitary circuit
  dynamics},}\ }\href {\doibase 10.22331/q-2022-06-15-738} {\bibfield
  {journal} {\bibinfo  {journal} {{Quantum}}\ }\textbf {\bibinfo {volume}
  {6}},\ \bibinfo {pages} {738} (\bibinfo {year} {2022})}\BibitemShut {NoStop}%
\bibitem [{\citenamefont {Zhou}\ and\ \citenamefont
  {Harrow}(2022)}]{zhou2022maximal}%
  \BibitemOpen
  \bibfield  {author} {\bibinfo {author} {\bibfnamefont {Tianci}\ \bibnamefont
  {Zhou}}\ and\ \bibinfo {author} {\bibfnamefont {Aram~W.}\ \bibnamefont
  {Harrow}},\ }\bibfield  {title} {\enquote {\bibinfo {title} {Maximal
  entanglement velocity implies dual unitarity},}\ }\href {\doibase
  10.1103/PhysRevB.106.L201104} {\bibfield  {journal} {\bibinfo  {journal}
  {Phys. Rev. B}\ }\textbf {\bibinfo {volume} {106}},\ \bibinfo {pages}
  {L201104} (\bibinfo {year} {2022})}\BibitemShut {NoStop}%
\bibitem [{\citenamefont {Zanardi}\ \emph {et~al.}(2000)\citenamefont
  {Zanardi}, \citenamefont {Zalka},\ and\ \citenamefont {Faoro}}]{Zanardi2000}%
  \BibitemOpen
  \bibfield  {author} {\bibinfo {author} {\bibfnamefont {Paolo}\ \bibnamefont
  {Zanardi}}, \bibinfo {author} {\bibfnamefont {Christof}\ \bibnamefont
  {Zalka}}, \ and\ \bibinfo {author} {\bibfnamefont {Lara}\ \bibnamefont
  {Faoro}},\ }\bibfield  {title} {\enquote {\bibinfo {title} {Entangling power
  of quantum evolutions},}\ }\href {\doibase 10.1103/PhysRevA.62.030301}
  {\bibfield  {journal} {\bibinfo  {journal} {Phys. Rev. A}\ }\textbf {\bibinfo
  {volume} {62}},\ \bibinfo {pages} {030301} (\bibinfo {year}
  {2000})}\BibitemShut {NoStop}%
\bibitem [{\citenamefont {Zanardi}(2001)}]{Zanardi2001}%
  \BibitemOpen
  \bibfield  {author} {\bibinfo {author} {\bibfnamefont {Paolo}\ \bibnamefont
  {Zanardi}},\ }\bibfield  {title} {\enquote {\bibinfo {title} {Entanglement of
  quantum evolutions},}\ }\href {\doibase 10.1103/PhysRevA.63.040304}
  {\bibfield  {journal} {\bibinfo  {journal} {Phys. Rev. A}\ }\textbf {\bibinfo
  {volume} {63}},\ \bibinfo {pages} {040304} (\bibinfo {year}
  {2001})}\BibitemShut {NoStop}%
\bibitem [{\citenamefont {Wang}\ and\ \citenamefont
  {Zanardi}(2002)}]{Wang2002}%
  \BibitemOpen
  \bibfield  {author} {\bibinfo {author} {\bibfnamefont {Xiaoguang}\
  \bibnamefont {Wang}}\ and\ \bibinfo {author} {\bibfnamefont {Paolo}\
  \bibnamefont {Zanardi}},\ }\bibfield  {title} {\enquote {\bibinfo {title}
  {Quantum entanglement of unitary operators on bipartite systems},}\ }\href
  {\doibase 10.1103/PhysRevA.66.044303} {\bibfield  {journal} {\bibinfo
  {journal} {Phys. Rev. A}\ }\textbf {\bibinfo {volume} {66}},\ \bibinfo
  {pages} {044303} (\bibinfo {year} {2002})}\BibitemShut {NoStop}%
\bibitem [{\citenamefont {Wang}\ \emph {et~al.}(2003)\citenamefont {Wang},
  \citenamefont {Sanders},\ and\ \citenamefont {Berry}}]{Wang2003}%
  \BibitemOpen
  \bibfield  {author} {\bibinfo {author} {\bibfnamefont {Xiaoguang}\
  \bibnamefont {Wang}}, \bibinfo {author} {\bibfnamefont {Barry~C.}\
  \bibnamefont {Sanders}}, \ and\ \bibinfo {author} {\bibfnamefont
  {Dominic~W.}\ \bibnamefont {Berry}},\ }\bibfield  {title} {\enquote {\bibinfo
  {title} {Entangling power and operator entanglement in qudit systems},}\
  }\href {\doibase 10.1103/PhysRevA.67.042323} {\bibfield  {journal} {\bibinfo
  {journal} {Phys. Rev. A}\ }\textbf {\bibinfo {volume} {67}},\ \bibinfo
  {pages} {042323} (\bibinfo {year} {2003})}\BibitemShut {NoStop}%
\bibitem [{\citenamefont {Nielsen}\ \emph {et~al.}(2003)\citenamefont
  {Nielsen}, \citenamefont {Dawson}, \citenamefont {Dodd}, \citenamefont
  {Gilchrist}, \citenamefont {Mortimer}, \citenamefont {Osborne}, \citenamefont
  {Bremner}, \citenamefont {Harrow},\ and\ \citenamefont
  {Hines}}]{Nielsen2003}%
  \BibitemOpen
  \bibfield  {author} {\bibinfo {author} {\bibfnamefont {Michael~A.}\
  \bibnamefont {Nielsen}}, \bibinfo {author} {\bibfnamefont {Christopher~M.}\
  \bibnamefont {Dawson}}, \bibinfo {author} {\bibfnamefont {Jennifer~L.}\
  \bibnamefont {Dodd}}, \bibinfo {author} {\bibfnamefont {Alexei}\ \bibnamefont
  {Gilchrist}}, \bibinfo {author} {\bibfnamefont {Duncan}\ \bibnamefont
  {Mortimer}}, \bibinfo {author} {\bibfnamefont {Tobias~J.}\ \bibnamefont
  {Osborne}}, \bibinfo {author} {\bibfnamefont {Michael~J.}\ \bibnamefont
  {Bremner}}, \bibinfo {author} {\bibfnamefont {Aram~W.}\ \bibnamefont
  {Harrow}}, \ and\ \bibinfo {author} {\bibfnamefont {Andrew}\ \bibnamefont
  {Hines}},\ }\bibfield  {title} {\enquote {\bibinfo {title} {Quantum dynamics
  as a physical resource},}\ }\href {\doibase 10.1103/PhysRevA.67.052301}
  {\bibfield  {journal} {\bibinfo  {journal} {Phys. Rev. A}\ }\textbf {\bibinfo
  {volume} {67}},\ \bibinfo {pages} {052301} (\bibinfo {year}
  {2003})}\BibitemShut {NoStop}%
\bibitem [{\citenamefont {Vidal}\ and\ \citenamefont
  {Cirac}(2002)}]{Vidal2002}%
  \BibitemOpen
  \bibfield  {author} {\bibinfo {author} {\bibfnamefont {G.}~\bibnamefont
  {Vidal}}\ and\ \bibinfo {author} {\bibfnamefont {J.~I.}\ \bibnamefont
  {Cirac}},\ }\bibfield  {title} {\enquote {\bibinfo {title} {Catalysis in
  nonlocal quantum operations},}\ }\href {\doibase
  10.1103/PhysRevLett.88.167903} {\bibfield  {journal} {\bibinfo  {journal}
  {Phys. Rev. Lett.}\ }\textbf {\bibinfo {volume} {88}},\ \bibinfo {pages}
  {167903} (\bibinfo {year} {2002})}\BibitemShut {NoStop}%
\bibitem [{\citenamefont {Hammerer}\ \emph {et~al.}(2002)\citenamefont
  {Hammerer}, \citenamefont {Vidal},\ and\ \citenamefont
  {Cirac}}]{Hammerer2002}%
  \BibitemOpen
  \bibfield  {author} {\bibinfo {author} {\bibfnamefont {K.}~\bibnamefont
  {Hammerer}}, \bibinfo {author} {\bibfnamefont {G.}~\bibnamefont {Vidal}}, \
  and\ \bibinfo {author} {\bibfnamefont {J.~I.}\ \bibnamefont {Cirac}},\
  }\bibfield  {title} {\enquote {\bibinfo {title} {Characterization of nonlocal
  gates},}\ }\href {\doibase 10.1103/PhysRevA.66.062321} {\bibfield  {journal}
  {\bibinfo  {journal} {Phys. Rev. A}\ }\textbf {\bibinfo {volume} {66}},\
  \bibinfo {pages} {062321} (\bibinfo {year} {2002})}\BibitemShut {NoStop}%
\bibitem [{\citenamefont {Collins}\ \emph {et~al.}(2001)\citenamefont
  {Collins}, \citenamefont {Linden},\ and\ \citenamefont
  {Popescu}}]{Collins2001}%
  \BibitemOpen
  \bibfield  {author} {\bibinfo {author} {\bibfnamefont {Daniel}\ \bibnamefont
  {Collins}}, \bibinfo {author} {\bibfnamefont {Noah}\ \bibnamefont {Linden}},
  \ and\ \bibinfo {author} {\bibfnamefont {Sandu}\ \bibnamefont {Popescu}},\
  }\bibfield  {title} {\enquote {\bibinfo {title} {Nonlocal content of quantum
  operations},}\ }\href {\doibase 10.1103/PhysRevA.64.032302} {\bibfield
  {journal} {\bibinfo  {journal} {Phys. Rev. A}\ }\textbf {\bibinfo {volume}
  {64}},\ \bibinfo {pages} {032302} (\bibinfo {year} {2001})}\BibitemShut
  {NoStop}%
\bibitem [{\citenamefont {Eisert}\ \emph {et~al.}(2000)\citenamefont {Eisert},
  \citenamefont {Jacobs}, \citenamefont {Papadopoulos},\ and\ \citenamefont
  {Plenio}}]{Eisert2000}%
  \BibitemOpen
  \bibfield  {author} {\bibinfo {author} {\bibfnamefont {J.}~\bibnamefont
  {Eisert}}, \bibinfo {author} {\bibfnamefont {K.}~\bibnamefont {Jacobs}},
  \bibinfo {author} {\bibfnamefont {P.}~\bibnamefont {Papadopoulos}}, \ and\
  \bibinfo {author} {\bibfnamefont {M.~B.}\ \bibnamefont {Plenio}},\ }\bibfield
   {title} {\enquote {\bibinfo {title} {Optimal local implementation of
  nonlocal quantum gates},}\ }\href {\doibase 10.1103/PhysRevA.62.052317}
  {\bibfield  {journal} {\bibinfo  {journal} {Phys. Rev. A}\ }\textbf {\bibinfo
  {volume} {62}},\ \bibinfo {pages} {052317} (\bibinfo {year}
  {2000})}\BibitemShut {NoStop}%
\bibitem [{\citenamefont {Cirac}\ \emph {et~al.}(2001)\citenamefont {Cirac},
  \citenamefont {D\"ur}, \citenamefont {Kraus},\ and\ \citenamefont
  {Lewenstein}}]{cirac2001}%
  \BibitemOpen
  \bibfield  {author} {\bibinfo {author} {\bibfnamefont {J.~I.}\ \bibnamefont
  {Cirac}}, \bibinfo {author} {\bibfnamefont {W.}~\bibnamefont {D\"ur}},
  \bibinfo {author} {\bibfnamefont {B.}~\bibnamefont {Kraus}}, \ and\ \bibinfo
  {author} {\bibfnamefont {M.}~\bibnamefont {Lewenstein}},\ }\bibfield  {title}
  {\enquote {\bibinfo {title} {Entangling operations and their implementation
  using a small amount of entanglement},}\ }\href {\doibase
  10.1103/PhysRevLett.86.544} {\bibfield  {journal} {\bibinfo  {journal} {Phys.
  Rev. Lett.}\ }\textbf {\bibinfo {volume} {86}},\ \bibinfo {pages} {544--547}
  (\bibinfo {year} {2001})}\BibitemShut {NoStop}%
\bibitem [{\citenamefont {Nielsen}\ \emph {et~al.}(2006)\citenamefont
  {Nielsen}, \citenamefont {Dowling}, \citenamefont {Gu},\ and\ \citenamefont
  {Doherty}}]{dowling2008geometry}%
  \BibitemOpen
  \bibfield  {author} {\bibinfo {author} {\bibfnamefont {Michael~A.}\
  \bibnamefont {Nielsen}}, \bibinfo {author} {\bibfnamefont {Mark~R.}\
  \bibnamefont {Dowling}}, \bibinfo {author} {\bibfnamefont {Mile}\
  \bibnamefont {Gu}}, \ and\ \bibinfo {author} {\bibfnamefont {Andrew~C.}\
  \bibnamefont {Doherty}},\ }\bibfield  {title} {\enquote {\bibinfo {title}
  {Quantum computation as geometry},}\ }\href {\doibase
  10.1126/science.1121541} {\bibfield  {journal} {\bibinfo  {journal}
  {Science}\ }\textbf {\bibinfo {volume} {311}},\ \bibinfo {pages} {1133--1135}
  (\bibinfo {year} {2006})}\BibitemShut {NoStop}%
\bibitem [{\citenamefont {Rather}\ \emph {et~al.}(2020)\citenamefont {Rather},
  \citenamefont {Aravinda},\ and\ \citenamefont {Lakshminarayan}}]{SAA2020}%
  \BibitemOpen
  \bibfield  {author} {\bibinfo {author} {\bibfnamefont {Suhail~Ahmad}\
  \bibnamefont {Rather}}, \bibinfo {author} {\bibfnamefont {S.}~\bibnamefont
  {Aravinda}}, \ and\ \bibinfo {author} {\bibfnamefont {Arul}\ \bibnamefont
  {Lakshminarayan}},\ }\bibfield  {title} {\enquote {\bibinfo {title} {Creating
  ensembles of dual unitary and maximally entangling quantum evolutions},}\
  }\href {\doibase 10.1103/PhysRevLett.125.070501} {\bibfield  {journal}
  {\bibinfo  {journal} {Phys. Rev. Lett.}\ }\textbf {\bibinfo {volume} {125}},\
  \bibinfo {pages} {070501} (\bibinfo {year} {2020})}\BibitemShut {NoStop}%
\bibitem [{\citenamefont {Goyeneche}\ \emph {et~al.}(2015)\citenamefont
  {Goyeneche}, \citenamefont {Alsina}, \citenamefont {Latorre}, \citenamefont
  {Riera},\ and\ \citenamefont {\ifmmode~\dot{Z}\else
  \.{Z}\fi{}yczkowski}}]{Goyeneche2015}%
  \BibitemOpen
  \bibfield  {author} {\bibinfo {author} {\bibfnamefont {Dardo}\ \bibnamefont
  {Goyeneche}}, \bibinfo {author} {\bibfnamefont {Daniel}\ \bibnamefont
  {Alsina}}, \bibinfo {author} {\bibfnamefont {Jos\'e~I.}\ \bibnamefont
  {Latorre}}, \bibinfo {author} {\bibfnamefont {Arnau}\ \bibnamefont {Riera}},
  \ and\ \bibinfo {author} {\bibfnamefont {Karol}\ \bibnamefont
  {\ifmmode~\dot{Z}\else \.{Z}\fi{}yczkowski}},\ }\bibfield  {title} {\enquote
  {\bibinfo {title} {Absolutely maximally entangled states, combinatorial
  designs, and multiunitary matrices},}\ }\href {\doibase
  10.1103/PhysRevA.92.032316} {\bibfield  {journal} {\bibinfo  {journal} {Phys.
  Rev. A}\ }\textbf {\bibinfo {volume} {92}},\ \bibinfo {pages} {032316}
  (\bibinfo {year} {2015})}\BibitemShut {NoStop}%
\bibitem [{\citenamefont {Pastawski}\ \emph {et~al.}(2015)\citenamefont
  {Pastawski}, \citenamefont {Yoshida}, \citenamefont {Harlow},\ and\
  \citenamefont {Preskill}}]{Pastawski2015}%
  \BibitemOpen
  \bibfield  {author} {\bibinfo {author} {\bibfnamefont {Fernando}\
  \bibnamefont {Pastawski}}, \bibinfo {author} {\bibfnamefont {Beni}\
  \bibnamefont {Yoshida}}, \bibinfo {author} {\bibfnamefont {Daniel}\
  \bibnamefont {Harlow}}, \ and\ \bibinfo {author} {\bibfnamefont {John}\
  \bibnamefont {Preskill}},\ }\bibfield  {title} {\enquote {\bibinfo {title}
  {Holographic quantum error-correcting codes: toy models for the bulk/boundary
  correspondence},}\ }\href {\doibase 10.1007/JHEP06(2015)149} {\bibfield
  {journal} {\bibinfo  {journal} {Journal of High Energy Physics}\ }\textbf
  {\bibinfo {volume} {2015}},\ \bibinfo {pages} {149} (\bibinfo {year}
  {2015})}\BibitemShut {NoStop}%
\bibitem [{\citenamefont {Aravinda}\ \emph {et~al.}(2021)\citenamefont
  {Aravinda}, \citenamefont {Rather},\ and\ \citenamefont
  {Lakshminarayan}}]{ASA_2021}%
  \BibitemOpen
  \bibfield  {author} {\bibinfo {author} {\bibfnamefont {S.}~\bibnamefont
  {Aravinda}}, \bibinfo {author} {\bibfnamefont {Suhail~Ahmad}\ \bibnamefont
  {Rather}}, \ and\ \bibinfo {author} {\bibfnamefont {Arul}\ \bibnamefont
  {Lakshminarayan}},\ }\bibfield  {title} {\enquote {\bibinfo {title} {From
  dual-unitary to quantum {B}ernoulli circuits: Role of the entangling power in
  constructing a quantum ergodic hierarchy},}\ }\href {\doibase
  10.1103/PhysRevResearch.3.043034} {\bibfield  {journal} {\bibinfo  {journal}
  {Phys. Rev. Research}\ }\textbf {\bibinfo {volume} {3}},\ \bibinfo {pages}
  {043034} (\bibinfo {year} {2021})}\BibitemShut {NoStop}%
\bibitem [{\citenamefont {Tyson}(2003{\natexlab{a}})}]{Tyson_2003}%
  \BibitemOpen
  \bibfield  {author} {\bibinfo {author} {\bibfnamefont {Jon~E}\ \bibnamefont
  {Tyson}},\ }\bibfield  {title} {\enquote {\bibinfo {title} {Operator-schmidt
  decompositions and the fourier transform, with applications to the
  operator-schmidt numbers of unitaries},}\ }\href {\doibase
  10.1088/0305-4470/36/39/309} {\bibfield  {journal} {\bibinfo  {journal}
  {Journal of Physics A: Mathematical and General}\ }\textbf {\bibinfo {volume}
  {36}},\ \bibinfo {pages} {10101--10114} (\bibinfo {year}
  {2003}{\natexlab{a}})}\BibitemShut {NoStop}%
\bibitem [{\citenamefont {Tyson}(2003{\natexlab{b}})}]{Tyson2003}%
  \BibitemOpen
  \bibfield  {author} {\bibinfo {author} {\bibfnamefont {Jon}\ \bibnamefont
  {Tyson}},\ }\bibfield  {title} {\enquote {\bibinfo {title} {Operator-schmidt
  decomposition of the quantum fourier transform on $\mathbb{C}^{N_1} \otimes
  \mathbb{C}^{N_2}$},}\ }\href {\doibase 10.1088/0305-4470/36/24/317}
  {\bibfield  {journal} {\bibinfo  {journal} {Journal of Physics A:
  Mathematical and General}\ }\textbf {\bibinfo {volume} {36}},\ \bibinfo
  {pages} {6813--6819} (\bibinfo {year} {2003}{\natexlab{b}})}\BibitemShut
  {NoStop}%
\bibitem [{\citenamefont {Jonnadula}\ \emph {et~al.}(2020)\citenamefont
  {Jonnadula}, \citenamefont {Mandayam}, \citenamefont {{\.Z}yczkowski},\ and\
  \citenamefont {Lakshminarayan}}]{Bhargavi2019}%
  \BibitemOpen
  \bibfield  {author} {\bibinfo {author} {\bibfnamefont {Bhargavi}\
  \bibnamefont {Jonnadula}}, \bibinfo {author} {\bibfnamefont {Prabha}\
  \bibnamefont {Mandayam}}, \bibinfo {author} {\bibfnamefont {Karol}\
  \bibnamefont {{\.Z}yczkowski}}, \ and\ \bibinfo {author} {\bibfnamefont
  {Arul}\ \bibnamefont {Lakshminarayan}},\ }\bibfield  {title} {\enquote
  {\bibinfo {title} {Entanglement measures of bipartite quantum gates and their
  thermalization under arbitrary interaction strength},}\ }\href {\doibase
  10.1103/physrevresearch.2.043126} {\bibfield  {journal} {\bibinfo  {journal}
  {Physical Review Research}\ }\textbf {\bibinfo {volume} {2}} (\bibinfo {year}
  {2020}),\ 10.1103/physrevresearch.2.043126}\BibitemShut {NoStop}%
\bibitem [{\citenamefont {Claeys}\ and\ \citenamefont
  {Lamacraft}(2021)}]{claeys2020ergodic}%
  \BibitemOpen
  \bibfield  {author} {\bibinfo {author} {\bibfnamefont {Pieter~W.}\
  \bibnamefont {Claeys}}\ and\ \bibinfo {author} {\bibfnamefont {Austen}\
  \bibnamefont {Lamacraft}},\ }\bibfield  {title} {\enquote {\bibinfo {title}
  {Ergodic and nonergodic dual-unitary quantum circuits with arbitrary local
  hilbert space dimension},}\ }\href {\doibase 10.1103/PhysRevLett.126.100603}
  {\bibfield  {journal} {\bibinfo  {journal} {Phys. Rev. Lett.}\ }\textbf
  {\bibinfo {volume} {126}},\ \bibinfo {pages} {100603} (\bibinfo {year}
  {2021})}\BibitemShut {NoStop}%
\bibitem [{\citenamefont {Singh}\ and\ \citenamefont
  {Nechita}(2022)}]{singh2021diagonal}%
  \BibitemOpen
  \bibfield  {author} {\bibinfo {author} {\bibfnamefont {Satvik}\ \bibnamefont
  {Singh}}\ and\ \bibinfo {author} {\bibfnamefont {Ion}\ \bibnamefont
  {Nechita}},\ }\bibfield  {title} {\enquote {\bibinfo {title} {Diagonal
  unitary and orthogonal symmetries in quantum theory: {II}. evolution
  operators},}\ }\href {\doibase 10.1088/1751-8121/ac7017} {\bibfield
  {journal} {\bibinfo  {journal} {Journal of Physics A: Mathematical and
  Theoretical}\ }\textbf {\bibinfo {volume} {55}},\ \bibinfo {pages} {255302}
  (\bibinfo {year} {2022})}\BibitemShut {NoStop}%
\bibitem [{\citenamefont {Rather}\ \emph {et~al.}(2022)\citenamefont {Rather},
  \citenamefont {Burchardt}, \citenamefont {Bruzda}, \citenamefont
  {Rajchel-Mieldzio{\'c}}, \citenamefont {Lakshminarayan},\ and\ \citenamefont
  {{\.Z}yczkowski}}]{SRatherAME46}%
  \BibitemOpen
  \bibfield  {author} {\bibinfo {author} {\bibfnamefont {Suhail~Ahmad}\
  \bibnamefont {Rather}}, \bibinfo {author} {\bibfnamefont {Adam}\ \bibnamefont
  {Burchardt}}, \bibinfo {author} {\bibfnamefont {Wojciech}\ \bibnamefont
  {Bruzda}}, \bibinfo {author} {\bibfnamefont {Grzegorz}\ \bibnamefont
  {Rajchel-Mieldzio{\'c}}}, \bibinfo {author} {\bibfnamefont {Arul}\
  \bibnamefont {Lakshminarayan}}, \ and\ \bibinfo {author} {\bibfnamefont
  {Karol}\ \bibnamefont {{\.Z}yczkowski}},\ }\bibfield  {title} {\enquote
  {\bibinfo {title} {Thirty-six entangled officers of {E}uler: Quantum solution
  to a classically impossible problem},}\ }\href {\doibase
  10.1103/PhysRevLett.128.080507} {\bibfield  {journal} {\bibinfo  {journal}
  {Phys. Rev. Lett.}\ }\textbf {\bibinfo {volume} {128}},\ \bibinfo {pages}
  {080507} (\bibinfo {year} {2022})}\BibitemShut {NoStop}%
\bibitem [{\citenamefont {{\.Z}yczkowski}\ \emph {et~al.}(2022)\citenamefont
  {{\.Z}yczkowski}, \citenamefont {Bruzda}, \citenamefont
  {Rajchel-Mieldzio\'c}, \citenamefont {Burchardt}, \citenamefont {Rather},\
  and\ \citenamefont {Lakshminarayan}}]{AME46_conf}%
  \BibitemOpen
  \bibfield  {author} {\bibinfo {author} {\bibfnamefont {Karol}\ \bibnamefont
  {{\.Z}yczkowski}}, \bibinfo {author} {\bibfnamefont {Wojciech}\ \bibnamefont
  {Bruzda}}, \bibinfo {author} {\bibfnamefont {Grzegorz}\ \bibnamefont
  {Rajchel-Mieldzio\'c}}, \bibinfo {author} {\bibfnamefont {Adam}\ \bibnamefont
  {Burchardt}}, \bibinfo {author} {\bibfnamefont {Suhail~Ahmad}\ \bibnamefont
  {Rather}}, \ and\ \bibinfo {author} {\bibfnamefont {Arul}\ \bibnamefont
  {Lakshminarayan}},\ }\href {\doibase 10.48550/ARXIV.2204.06800} {\enquote
  {\bibinfo {title} {9 $\times$ 4 = 6 $\times$ 6: Understanding the quantum
  solution to the {E}uler's problem of 36 officers},}\ } (\bibinfo {year}
  {2022})\BibitemShut {NoStop}%
\bibitem [{\citenamefont {Helwig}\ \emph {et~al.}(2012)\citenamefont {Helwig},
  \citenamefont {Cui}, \citenamefont {Latorre}, \citenamefont {Riera},\ and\
  \citenamefont {Lo}}]{Helwig_2012}%
  \BibitemOpen
  \bibfield  {author} {\bibinfo {author} {\bibfnamefont {Wolfram}\ \bibnamefont
  {Helwig}}, \bibinfo {author} {\bibfnamefont {Wei}\ \bibnamefont {Cui}},
  \bibinfo {author} {\bibfnamefont {Jos\'e~Ignacio}\ \bibnamefont {Latorre}},
  \bibinfo {author} {\bibfnamefont {Arnau}\ \bibnamefont {Riera}}, \ and\
  \bibinfo {author} {\bibfnamefont {Hoi-Kwong}\ \bibnamefont {Lo}},\ }\bibfield
   {title} {\enquote {\bibinfo {title} {Absolute maximal entanglement and
  quantum secret sharing},}\ }\href {\doibase 10.1103/PhysRevA.86.052335}
  {\bibfield  {journal} {\bibinfo  {journal} {Phys. Rev. A}\ }\textbf {\bibinfo
  {volume} {86}},\ \bibinfo {pages} {052335} (\bibinfo {year}
  {2012})}\BibitemShut {NoStop}%
\bibitem [{\citenamefont {Goyeneche}\ \emph {et~al.}(2018)\citenamefont
  {Goyeneche}, \citenamefont {Raissi}, \citenamefont {Di~Martino},\ and\
  \citenamefont {\ifmmode~\dot{Z}\else \.{Z}\fi{}yczkowski}}]{GRMZ_2018}%
  \BibitemOpen
  \bibfield  {author} {\bibinfo {author} {\bibfnamefont {Dardo}\ \bibnamefont
  {Goyeneche}}, \bibinfo {author} {\bibfnamefont {Zahra}\ \bibnamefont
  {Raissi}}, \bibinfo {author} {\bibfnamefont {Sara}\ \bibnamefont
  {Di~Martino}}, \ and\ \bibinfo {author} {\bibfnamefont {Karol}\ \bibnamefont
  {\ifmmode~\dot{Z}\else \.{Z}\fi{}yczkowski}},\ }\bibfield  {title} {\enquote
  {\bibinfo {title} {Entanglement and quantum combinatorial designs},}\ }\href
  {\doibase 10.1103/PhysRevA.97.062326} {\bibfield  {journal} {\bibinfo
  {journal} {Phys. Rev. A}\ }\textbf {\bibinfo {volume} {97}},\ \bibinfo
  {pages} {062326} (\bibinfo {year} {2018})}\BibitemShut {NoStop}%
\bibitem [{\citenamefont {Clarisse}\ \emph {et~al.}(2005)\citenamefont
  {Clarisse}, \citenamefont {Ghosh}, \citenamefont {Severini},\ and\
  \citenamefont {Sudbery}}]{Clarisse2005}%
  \BibitemOpen
  \bibfield  {author} {\bibinfo {author} {\bibfnamefont {Lieven}\ \bibnamefont
  {Clarisse}}, \bibinfo {author} {\bibfnamefont {Sibasish}\ \bibnamefont
  {Ghosh}}, \bibinfo {author} {\bibfnamefont {Simone}\ \bibnamefont
  {Severini}}, \ and\ \bibinfo {author} {\bibfnamefont {Anthony}\ \bibnamefont
  {Sudbery}},\ }\bibfield  {title} {\enquote {\bibinfo {title} {Entangling
  power of permutations},}\ }\href {\doibase 10.1103/PhysRevA.72.012314}
  {\bibfield  {journal} {\bibinfo  {journal} {Phys. Rev. A}\ }\textbf {\bibinfo
  {volume} {72}},\ \bibinfo {pages} {012314} (\bibinfo {year}
  {2005})}\BibitemShut {NoStop}%
\bibitem [{\citenamefont {Burchardt}\ and\ \citenamefont
  {Raissi}(2020)}]{Adam_SLOCC_2020}%
  \BibitemOpen
  \bibfield  {author} {\bibinfo {author} {\bibfnamefont {Adam}\ \bibnamefont
  {Burchardt}}\ and\ \bibinfo {author} {\bibfnamefont {Zahra}\ \bibnamefont
  {Raissi}},\ }\bibfield  {title} {\enquote {\bibinfo {title} {Stochastic local
  operations with classical communication of absolutely maximally entangled
  states},}\ }\href {\doibase 10.1103/PhysRevA.102.022413} {\bibfield
  {journal} {\bibinfo  {journal} {Phys. Rev. A}\ }\textbf {\bibinfo {volume}
  {102}},\ \bibinfo {pages} {022413} (\bibinfo {year} {2020})}\BibitemShut
  {NoStop}%
\bibitem [{\citenamefont {Jonnadula}\ \emph {et~al.}(2017)\citenamefont
  {Jonnadula}, \citenamefont {Mandayam}, \citenamefont {\ifmmode~\dot{Z}\else
  \.{Z}\fi{}yczkowski},\ and\ \citenamefont {Lakshminarayan}}]{Bhargavi2017}%
  \BibitemOpen
  \bibfield  {author} {\bibinfo {author} {\bibfnamefont {Bhargavi}\
  \bibnamefont {Jonnadula}}, \bibinfo {author} {\bibfnamefont {Prabha}\
  \bibnamefont {Mandayam}}, \bibinfo {author} {\bibfnamefont {Karol}\
  \bibnamefont {\ifmmode~\dot{Z}\else \.{Z}\fi{}yczkowski}}, \ and\ \bibinfo
  {author} {\bibfnamefont {Arul}\ \bibnamefont {Lakshminarayan}},\ }\bibfield
  {title} {\enquote {\bibinfo {title} {Impact of local dynamics on entangling
  power},}\ }\href {\doibase 10.1103/PhysRevA.95.040302} {\bibfield  {journal}
  {\bibinfo  {journal} {Phys. Rev. A}\ }\textbf {\bibinfo {volume} {95}},\
  \bibinfo {pages} {040302} (\bibinfo {year} {2017})}\BibitemShut {NoStop}%
\bibitem [{\citenamefont {Ocneanu}(1988)}]{ocneanu1988quantized}%
  \BibitemOpen
  \bibfield  {author} {\bibinfo {author} {\bibfnamefont {Adrian}\ \bibnamefont
  {Ocneanu}},\ }\bibfield  {title} {\enquote {\bibinfo {title} {Quantized
  groups, string algebras and galois theory for algebras},}\ }\href@noop {}
  {\bibfield  {journal} {\bibinfo  {journal} {Operator algebras and
  applications}\ }\textbf {\bibinfo {volume} {2}},\ \bibinfo {pages} {119--172}
  (\bibinfo {year} {1988})}\BibitemShut {NoStop}%
\bibitem [{\citenamefont {Krishnan}\ and\ \citenamefont
  {Sunder}(1996)}]{krishnan1996biunitary}%
  \BibitemOpen
  \bibfield  {author} {\bibinfo {author} {\bibfnamefont {Uma}\ \bibnamefont
  {Krishnan}}\ and\ \bibinfo {author} {\bibfnamefont {V.~S.}\ \bibnamefont
  {Sunder}},\ }\bibfield  {title} {\enquote {\bibinfo {title} {On biunitary
  permutation matrices and some subfactors of index 9},}\ }\href
  {http://www.jstor.org/stable/2155367} {\bibfield  {journal} {\bibinfo
  {journal} {Transactions of the American Mathematical Society}\ }\textbf
  {\bibinfo {volume} {348}},\ \bibinfo {pages} {4691--4736} (\bibinfo {year}
  {1996})}\BibitemShut {NoStop}%
\bibitem [{\citenamefont {Jones}()}]{jones1999planar}%
  \BibitemOpen
  \bibfield  {author} {\bibinfo {author} {\bibfnamefont {Vaughan F.~R.}\
  \bibnamefont {Jones}},\ }\href@noop {} {\enquote {\bibinfo {title} {Planar
  algebras, {I}},}\ }\Eprint
  {http://arxiv.org/abs/https://arxiv.org/abs/math/9909027}
  {https://arxiv.org/abs/math/9909027} \BibitemShut {NoStop}%
\bibitem [{\citenamefont {Reutter}\ and\ \citenamefont
  {Vicary}(2019)}]{reutter2016biunitary}%
  \BibitemOpen
  \bibfield  {author} {\bibinfo {author} {\bibfnamefont {David~J.}\
  \bibnamefont {Reutter}}\ and\ \bibinfo {author} {\bibfnamefont {Jamie}\
  \bibnamefont {Vicary}},\ }\bibfield  {title} {\enquote {\bibinfo {title}
  {Biunitary constructions in quantum information},}\ }\href@noop {} {\bibfield
   {journal} {\bibinfo  {journal} {Higher Structures}\ }\textbf {\bibinfo
  {volume} {3}},\ \bibinfo {pages} {109--154} (\bibinfo {year} {2019})},\
  \Eprint {http://arxiv.org/abs/https://arxiv.org/abs/1609.07775}
  {https://arxiv.org/abs/1609.07775} \BibitemShut {NoStop}%
\bibitem [{\citenamefont {Benoist}\ and\ \citenamefont
  {Nechita}(2017{\natexlab{a}})}]{Nechita2017}%
  \BibitemOpen
  \bibfield  {author} {\bibinfo {author} {\bibfnamefont {Tristan}\ \bibnamefont
  {Benoist}}\ and\ \bibinfo {author} {\bibfnamefont {Ion}\ \bibnamefont
  {Nechita}},\ }\bibfield  {title} {\enquote {\bibinfo {title} {On bipartite
  unitary matrices generating subalgebra-preserving quantum operations},}\
  }\href {\doibase 10.1016/j.laa.2017.01.020} {\bibfield  {journal} {\bibinfo
  {journal} {Linear Algebra and its Applications}\ }\textbf {\bibinfo {volume}
  {521}},\ \bibinfo {pages} {70–103} (\bibinfo {year}
  {2017}{\natexlab{a}})}\BibitemShut {NoStop}%
\bibitem [{\citenamefont {Kodiyalam}\ \emph {et~al.}(2020)\citenamefont
  {Kodiyalam}, \citenamefont {Sruthymurali},\ and\ \citenamefont
  {Sunder}}]{kodiyalam2020planar}%
  \BibitemOpen
  \bibfield  {author} {\bibinfo {author} {\bibfnamefont {Vijay}\ \bibnamefont
  {Kodiyalam}}, \bibinfo {author} {\bibnamefont {Sruthymurali}}, \ and\
  \bibinfo {author} {\bibfnamefont {V.~S.}\ \bibnamefont {Sunder}},\ }\bibfield
   {title} {\enquote {\bibinfo {title} {Planar algebras, quantum information
  theory and subfactors},}\ }\href {\doibase 10.1142/S0129167X20501244}
  {\bibfield  {journal} {\bibinfo  {journal} {International Journal of
  Mathematics}\ }\textbf {\bibinfo {volume} {31}},\ \bibinfo {pages} {2050124}
  (\bibinfo {year} {2020})}\BibitemShut {NoStop}%
\bibitem [{\citenamefont {Nechita}\ \emph {et~al.}(2021)\citenamefont
  {Nechita}, \citenamefont {Schmidt},\ and\ \citenamefont
  {Weber}}]{nechita2021sinkhorn}%
  \BibitemOpen
  \bibfield  {author} {\bibinfo {author} {\bibfnamefont {Ion}\ \bibnamefont
  {Nechita}}, \bibinfo {author} {\bibfnamefont {Simon}\ \bibnamefont
  {Schmidt}}, \ and\ \bibinfo {author} {\bibfnamefont {Moritz}\ \bibnamefont
  {Weber}},\ }\bibfield  {title} {\enquote {\bibinfo {title} {Sinkhorn
  algorithm for quantum permutation groups},}\ }\href {\doibase
  10.1080/10586458.2021.1926005} {\bibfield  {journal} {\bibinfo  {journal}
  {Experimental Mathematics}\ }\textbf {\bibinfo {volume} {0}},\ \bibinfo
  {pages} {1--13} (\bibinfo {year} {2021})}\BibitemShut {NoStop}%
\bibitem [{\citenamefont {Gutkin}\ and\ \citenamefont
  {Osipov}(2016)}]{gutkin2016classical}%
  \BibitemOpen
  \bibfield  {author} {\bibinfo {author} {\bibfnamefont {Boris}\ \bibnamefont
  {Gutkin}}\ and\ \bibinfo {author} {\bibfnamefont {Vladimir}\ \bibnamefont
  {Osipov}},\ }\bibfield  {title} {\enquote {\bibinfo {title} {Classical
  foundations of many-particle quantum chaos},}\ }\href {\doibase
  10.1088/0951-7715/29/2/325} {\bibfield  {journal} {\bibinfo  {journal}
  {Nonlinearity}\ }\textbf {\bibinfo {volume} {29}},\ \bibinfo {pages} {325}
  (\bibinfo {year} {2016})}\BibitemShut {NoStop}%
\bibitem [{\citenamefont {Fan}\ and\ \citenamefont {Hoffman}(1955)}]{Fan1955}%
  \BibitemOpen
  \bibfield  {author} {\bibinfo {author} {\bibfnamefont {Ky}~\bibnamefont
  {Fan}}\ and\ \bibinfo {author} {\bibfnamefont {A.~J.}\ \bibnamefont
  {Hoffman}},\ }\bibfield  {title} {\enquote {\bibinfo {title} {Some metric
  inequalities in the space of matrices},}\ }\href
  {http://www.jstor.org/stable/2032662} {\bibfield  {journal} {\bibinfo
  {journal} {Proceedings of the American Mathematical Society}\ }\textbf
  {\bibinfo {volume} {6}},\ \bibinfo {pages} {111--116} (\bibinfo {year}
  {1955})}\BibitemShut {NoStop}%
\bibitem [{\citenamefont {Keller}(1975)}]{Keller1975}%
  \BibitemOpen
  \bibfield  {author} {\bibinfo {author} {\bibfnamefont {Joseph~B.}\
  \bibnamefont {Keller}},\ }\bibfield  {title} {\enquote {\bibinfo {title}
  {Closest unitary, orthogonal and hermitian operators to a given operator},}\
  }\href {http://www.jstor.org/stable/2690338} {\bibfield  {journal} {\bibinfo
  {journal} {Mathematics Magazine}\ }\textbf {\bibinfo {volume} {48}},\
  \bibinfo {pages} {192--197} (\bibinfo {year} {1975})}\BibitemShut {NoStop}%
\bibitem [{\citenamefont {Benoist}\ and\ \citenamefont
  {Nechita}(2017{\natexlab{b}})}]{Nechita_2017}%
  \BibitemOpen
  \bibfield  {author} {\bibinfo {author} {\bibfnamefont {Tristan}\ \bibnamefont
  {Benoist}}\ and\ \bibinfo {author} {\bibfnamefont {Ion}\ \bibnamefont
  {Nechita}},\ }\bibfield  {title} {\enquote {\bibinfo {title} {On bipartite
  unitary matrices generating subalgebra-preserving quantum operations},}\
  }\href {\doibase https://doi.org/10.1016/j.laa.2017.01.020} {\bibfield
  {journal} {\bibinfo  {journal} {Linear Algebra and its Applications}\
  }\textbf {\bibinfo {volume} {521}},\ \bibinfo {pages} {70--103} (\bibinfo
  {year} {2017}{\natexlab{b}})}\BibitemShut {NoStop}%
\bibitem [{\citenamefont {Rajchel-Mieldzio\'c}(2022)}]{Grzegorz_thesis}%
  \BibitemOpen
  \bibfield  {author} {\bibinfo {author} {\bibfnamefont {Grzegorz}\
  \bibnamefont {Rajchel-Mieldzio\'c}},\ }\emph {\bibinfo {title} {Quantum
  mappings and designs}},\ \href {\doibase 10.48550/ARXIV.2204.13008} {Ph.D.
  thesis} (\bibinfo {year} {2022})\BibitemShut {NoStop}%
\bibitem [{\citenamefont {Prosen}(2021)}]{prosen2021many}%
  \BibitemOpen
  \bibfield  {author} {\bibinfo {author} {\bibfnamefont {Toma{\v z}}\
  \bibnamefont {Prosen}},\ }\bibfield  {title} {\enquote {\bibinfo {title}
  {Many-body quantum chaos and dual-unitarity round-a-face},}\ }\href {\doibase
  10.1063/5.0056970} {\bibfield  {journal} {\bibinfo  {journal} {Chaos: An
  Interdisciplinary Journal of Nonlinear Science}\ }\textbf {\bibinfo {volume}
  {31}},\ \bibinfo {pages} {093101} (\bibinfo {year} {2021})}\BibitemShut
  {NoStop}%
\bibitem [{\citenamefont {Khaneja}\ \emph {et~al.}(2001)\citenamefont
  {Khaneja}, \citenamefont {Brockett},\ and\ \citenamefont {Glaser}}]{KBG01}%
  \BibitemOpen
  \bibfield  {author} {\bibinfo {author} {\bibfnamefont {Navin}\ \bibnamefont
  {Khaneja}}, \bibinfo {author} {\bibfnamefont {Roger}\ \bibnamefont
  {Brockett}}, \ and\ \bibinfo {author} {\bibfnamefont {Steffen~J.}\
  \bibnamefont {Glaser}},\ }\bibfield  {title} {\enquote {\bibinfo {title}
  {Time optimal control in spin systems},}\ }\href {\doibase
  10.1103/PhysRevA.63.032308} {\bibfield  {journal} {\bibinfo  {journal} {Phys.
  Rev. A}\ }\textbf {\bibinfo {volume} {63}},\ \bibinfo {pages} {032308}
  (\bibinfo {year} {2001})}\BibitemShut {NoStop}%
\bibitem [{\citenamefont {Kraus}\ and\ \citenamefont {Cirac}(2001)}]{KC01}%
  \BibitemOpen
  \bibfield  {author} {\bibinfo {author} {\bibfnamefont {B.}~\bibnamefont
  {Kraus}}\ and\ \bibinfo {author} {\bibfnamefont {J.~I.}\ \bibnamefont
  {Cirac}},\ }\bibfield  {title} {\enquote {\bibinfo {title} {Optimal creation
  of entanglement using a two-qubit gate},}\ }\href {\doibase
  10.1103/PhysRevA.63.062309} {\bibfield  {journal} {\bibinfo  {journal} {Phys.
  Rev. A}\ }\textbf {\bibinfo {volume} {63}},\ \bibinfo {pages} {062309}
  (\bibinfo {year} {2001})}\BibitemShut {NoStop}%
\bibitem [{\citenamefont {Zhang}\ \emph {et~al.}(2003)\citenamefont {Zhang},
  \citenamefont {Vala}, \citenamefont {Sastry},\ and\ \citenamefont
  {Whaley}}]{Zhang2003}%
  \BibitemOpen
  \bibfield  {author} {\bibinfo {author} {\bibfnamefont {Jun}\ \bibnamefont
  {Zhang}}, \bibinfo {author} {\bibfnamefont {Jiri}\ \bibnamefont {Vala}},
  \bibinfo {author} {\bibfnamefont {Shankar}\ \bibnamefont {Sastry}}, \ and\
  \bibinfo {author} {\bibfnamefont {K.~Birgitta}\ \bibnamefont {Whaley}},\
  }\bibfield  {title} {\enquote {\bibinfo {title} {Geometric theory of nonlocal
  two-qubit operations},}\ }\href {\doibase 10.1103/PhysRevA.67.042313}
  {\bibfield  {journal} {\bibinfo  {journal} {Phys. Rev. A}\ }\textbf {\bibinfo
  {volume} {67}},\ \bibinfo {pages} {042313} (\bibinfo {year}
  {2003})}\BibitemShut {NoStop}%
\bibitem [{\citenamefont {Musz}\ \emph {et~al.}(2013)\citenamefont {Musz},
  \citenamefont {Ku\ifmmode~\acute{s}\else \'{s}\fi{}},\ and\ \citenamefont
  {\ifmmode~\dot{Z}\else \.{Z}\fi{}yczkowski}}]{Kus2013}%
  \BibitemOpen
  \bibfield  {author} {\bibinfo {author} {\bibfnamefont {Marcin}\ \bibnamefont
  {Musz}}, \bibinfo {author} {\bibfnamefont {Marek}\ \bibnamefont
  {Ku\ifmmode~\acute{s}\else \'{s}\fi{}}}, \ and\ \bibinfo {author}
  {\bibfnamefont {Karol}\ \bibnamefont {\ifmmode~\dot{Z}\else
  \.{Z}\fi{}yczkowski}},\ }\bibfield  {title} {\enquote {\bibinfo {title}
  {Unitary quantum gates, perfect entanglers, and unistochastic maps},}\ }\href
  {\doibase 10.1103/PhysRevA.87.022111} {\bibfield  {journal} {\bibinfo
  {journal} {Phys. Rev. A}\ }\textbf {\bibinfo {volume} {87}},\ \bibinfo
  {pages} {022111} (\bibinfo {year} {2013})}\BibitemShut {NoStop}%
\bibitem [{\citenamefont {Childs}\ \emph {et~al.}(2003)\citenamefont {Childs},
  \citenamefont {Haselgrove},\ and\ \citenamefont {Nielsen}}]{Childs_2003}%
  \BibitemOpen
  \bibfield  {author} {\bibinfo {author} {\bibfnamefont {Andrew~M.}\
  \bibnamefont {Childs}}, \bibinfo {author} {\bibfnamefont {Henry~L.}\
  \bibnamefont {Haselgrove}}, \ and\ \bibinfo {author} {\bibfnamefont
  {Michael~A.}\ \bibnamefont {Nielsen}},\ }\bibfield  {title} {\enquote
  {\bibinfo {title} {Lower bounds on the complexity of simulating quantum
  gates},}\ }\href {\doibase 10.1103/PhysRevA.68.052311} {\bibfield  {journal}
  {\bibinfo  {journal} {Phys. Rev. A}\ }\textbf {\bibinfo {volume} {68}},\
  \bibinfo {pages} {052311} (\bibinfo {year} {2003})}\BibitemShut {NoStop}%
\bibitem [{\citenamefont {Vanicat}\ \emph {et~al.}(2018)\citenamefont
  {Vanicat}, \citenamefont {Zadnik},\ and\ \citenamefont
  {Prosen}}]{Vanicat_2018}%
  \BibitemOpen
  \bibfield  {author} {\bibinfo {author} {\bibfnamefont {Matthieu}\
  \bibnamefont {Vanicat}}, \bibinfo {author} {\bibfnamefont {Lenart}\
  \bibnamefont {Zadnik}}, \ and\ \bibinfo {author} {\bibfnamefont {Toma{\v z}}\
  \bibnamefont {Prosen}},\ }\bibfield  {title} {\enquote {\bibinfo {title}
  {Integrable {T}rotterization: Local conservation laws and boundary
  driving},}\ }\href {\doibase 10.1103/PhysRevLett.121.030606} {\bibfield
  {journal} {\bibinfo  {journal} {Phys. Rev. Lett.}\ }\textbf {\bibinfo
  {volume} {121}},\ \bibinfo {pages} {030606} (\bibinfo {year}
  {2018})}\BibitemShut {NoStop}%
\bibitem [{\citenamefont {Mandarino}\ \emph {et~al.}(2018)\citenamefont
  {Mandarino}, \citenamefont {Linowski},\ and\ \citenamefont
  {{\.Z}yczkowski}}]{Mandarino2018}%
  \BibitemOpen
  \bibfield  {author} {\bibinfo {author} {\bibfnamefont {Antonio}\ \bibnamefont
  {Mandarino}}, \bibinfo {author} {\bibfnamefont {Tomasz}\ \bibnamefont
  {Linowski}}, \ and\ \bibinfo {author} {\bibfnamefont {Karol}\ \bibnamefont
  {{\.Z}yczkowski}},\ }\bibfield  {title} {\enquote {\bibinfo {title}
  {Bipartite unitary gates and billiard dynamics in the {W}eyl chamber},}\
  }\href {\doibase 10.1103/PhysRevA.98.012335} {\bibfield  {journal} {\bibinfo
  {journal} {Phys. Rev. A}\ }\textbf {\bibinfo {volume} {98}},\ \bibinfo
  {pages} {012335} (\bibinfo {year} {2018})}\BibitemShut {NoStop}%
\bibitem [{\citenamefont {S}\ \emph {et~al.}(2020)\citenamefont {S},
  \citenamefont {Ahmad~Rather},\ and\ \citenamefont
  {Lakshminarayan}}]{ASL2020}%
  \BibitemOpen
  \bibfield  {author} {\bibinfo {author} {\bibfnamefont {Aravinda}\
  \bibnamefont {S}}, \bibinfo {author} {\bibfnamefont {Suhail}\ \bibnamefont
  {Ahmad~Rather}}, \ and\ \bibinfo {author} {\bibfnamefont {Arul}\ \bibnamefont
  {Lakshminarayan}},\ }\href@noop {} {} (\bibinfo {year} {2020}),\ \bibinfo
  {note} {in preparation}\BibitemShut {NoStop}%
\bibitem [{\citenamefont {Keedwell}\ and\ \citenamefont
  {D{\'e}nes}(2015)}]{keedwell2015latin}%
  \BibitemOpen
  \bibfield  {author} {\bibinfo {author} {\bibfnamefont {A~Donald}\
  \bibnamefont {Keedwell}}\ and\ \bibinfo {author} {\bibfnamefont {J{\'o}zsef}\
  \bibnamefont {D{\'e}nes}},\ }\href@noop {} {\emph {\bibinfo {title} {Latin
  squares and their applications}}},\ \bibinfo {edition} {2nd}\ ed.\ (\bibinfo
  {publisher} {Elsevier},\ \bibinfo {year} {2015})\BibitemShut {NoStop}%
\bibitem [{\citenamefont {Bose}\ \emph {et~al.}(1960)\citenamefont {Bose},
  \citenamefont {Shrikhande},\ and\ \citenamefont {Parker}}]{bose1960further}%
  \BibitemOpen
  \bibfield  {author} {\bibinfo {author} {\bibfnamefont {R.~C.}\ \bibnamefont
  {Bose}}, \bibinfo {author} {\bibfnamefont {S.~S.}\ \bibnamefont
  {Shrikhande}}, \ and\ \bibinfo {author} {\bibfnamefont {E.~T.}\ \bibnamefont
  {Parker}},\ }\bibfield  {title} {\enquote {\bibinfo {title} {Further results
  on the construction of mutually orthogonal latin squares and the falsity of
  euler's conjecture},}\ }\href {\doibase 10.4153/CJM-1960-016-5} {\bibfield
  {journal} {\bibinfo  {journal} {Canadian Journal of Mathematics}\ }\textbf
  {\bibinfo {volume} {12}},\ \bibinfo {pages} {189–203} (\bibinfo {year}
  {1960})}\BibitemShut {NoStop}%
\bibitem [{\citenamefont {Musto}\ and\ \citenamefont {Vicary}(2016)}]{MV16}%
  \BibitemOpen
  \bibfield  {author} {\bibinfo {author} {\bibfnamefont {Benjamin}\
  \bibnamefont {Musto}}\ and\ \bibinfo {author} {\bibfnamefont {Jamie}\
  \bibnamefont {Vicary}},\ }\bibfield  {title} {\enquote {\bibinfo {title}
  {Quantum {L}atin squares and unitary error bases},}\ }\href@noop {}
  {\bibfield  {journal} {\bibinfo  {journal} {Quantum Info. Comput.}\ }\textbf
  {\bibinfo {volume} {16}},\ \bibinfo {pages} {1318–1332} (\bibinfo {year}
  {2016})},\ \Eprint
  {http://arxiv.org/abs/https://doi.org/10.48550/arXiv.1504.02715}
  {https://doi.org/10.48550/arXiv.1504.02715} \BibitemShut {NoStop}%
\bibitem [{\citenamefont {Musto}\ and\ \citenamefont {Vicary}(2019)}]{MV19}%
  \BibitemOpen
  \bibfield  {author} {\bibinfo {author} {\bibfnamefont {Benjamin}\
  \bibnamefont {Musto}}\ and\ \bibinfo {author} {\bibfnamefont {Jamie}\
  \bibnamefont {Vicary}},\ }\bibfield  {title} {\enquote {\bibinfo {title}
  {Orthogonality for quantum {L}atin isometry squares},}\ }\href {\doibase
  10.4204/eptcs.287.15} {\bibfield  {journal} {\bibinfo  {journal} {Electronic
  Proceedings in Theoretical Computer Science}\ }\textbf {\bibinfo {volume}
  {287}},\ \bibinfo {pages} {253–266} (\bibinfo {year} {2019})}\BibitemShut
  {NoStop}%
\bibitem [{\citenamefont {Paczos}\ \emph {et~al.}(2021)\citenamefont {Paczos},
  \citenamefont {Wierzbi\ifmmode~\acute{n}\else \'{n}\fi{}ski}, \citenamefont
  {Rajchel-Mieldzio\ifmmode~\acute{c}\else \'{c}\fi{}}, \citenamefont
  {Burchardt},\ and\ \citenamefont {\ifmmode~\dot{Z}\else
  \.{Z}\fi{}yczkowski}}]{Paczos_2021}%
  \BibitemOpen
  \bibfield  {author} {\bibinfo {author} {\bibfnamefont {Jerzy}\ \bibnamefont
  {Paczos}}, \bibinfo {author} {\bibfnamefont {Marcin}\ \bibnamefont
  {Wierzbi\ifmmode~\acute{n}\else \'{n}\fi{}ski}}, \bibinfo {author}
  {\bibfnamefont {Grzegorz}\ \bibnamefont
  {Rajchel-Mieldzio\ifmmode~\acute{c}\else \'{c}\fi{}}}, \bibinfo {author}
  {\bibfnamefont {Adam}\ \bibnamefont {Burchardt}}, \ and\ \bibinfo {author}
  {\bibfnamefont {Karol}\ \bibnamefont {\ifmmode~\dot{Z}\else
  \.{Z}\fi{}yczkowski}},\ }\bibfield  {title} {\enquote {\bibinfo {title}
  {Genuinely quantum solutions of the game sudoku and their cardinality},}\
  }\href {\doibase 10.1103/PhysRevA.104.042423} {\bibfield  {journal} {\bibinfo
   {journal} {Phys. Rev. A}\ }\textbf {\bibinfo {volume} {104}},\ \bibinfo
  {pages} {042423} (\bibinfo {year} {2021})}\BibitemShut {NoStop}%
\bibitem [{\citenamefont {Nechita}\ and\ \citenamefont
  {Pillet}()}]{Nechita_qsudoku}%
  \BibitemOpen
  \bibfield  {author} {\bibinfo {author} {\bibfnamefont {Ion}\ \bibnamefont
  {Nechita}}\ and\ \bibinfo {author} {\bibfnamefont {Jordi}\ \bibnamefont
  {Pillet}},\ }\bibfield  {title} {\enquote {\bibinfo {title} {Sudoq--a quantum
  variant of the popular game},}\ }\href@noop {} {\bibfield  {journal}
  {\bibinfo  {journal} {arXiv:2005.10862}\ }}\Eprint
  {http://arxiv.org/abs/https://arxiv.org/abs/2005.10862}
  {https://arxiv.org/abs/2005.10862} \BibitemShut {NoStop}%
\bibitem [{\citenamefont {Borsi}\ and\ \citenamefont
  {Pozsgay}(2022)}]{borsi2022remarks}%
  \BibitemOpen
  \bibfield  {author} {\bibinfo {author} {\bibfnamefont {M\'arton}\
  \bibnamefont {Borsi}}\ and\ \bibinfo {author} {\bibfnamefont {Bal\'azs}\
  \bibnamefont {Pozsgay}},\ }\bibfield  {title} {\enquote {\bibinfo {title}
  {Construction and the ergodicity properties of dual unitary quantum
  circuits},}\ }\href {\doibase 10.1103/PhysRevB.106.014302} {\bibfield
  {journal} {\bibinfo  {journal} {Phys. Rev. B}\ }\textbf {\bibinfo {volume}
  {106}},\ \bibinfo {pages} {014302} (\bibinfo {year} {2022})}\BibitemShut
  {NoStop}%
\bibitem [{\citenamefont {Chen}\ \emph {et~al.}(2008)\citenamefont {Chen},
  \citenamefont {Duan}, \citenamefont {Ji}, \citenamefont {Ying},\ and\
  \citenamefont {Yu}}]{ChenDuan2007}%
  \BibitemOpen
  \bibfield  {author} {\bibinfo {author} {\bibfnamefont {Jianxin}\ \bibnamefont
  {Chen}}, \bibinfo {author} {\bibfnamefont {Runyao}\ \bibnamefont {Duan}},
  \bibinfo {author} {\bibfnamefont {Zhengfeng}\ \bibnamefont {Ji}}, \bibinfo
  {author} {\bibfnamefont {Mingsheng}\ \bibnamefont {Ying}}, \ and\ \bibinfo
  {author} {\bibfnamefont {Jun}\ \bibnamefont {Yu}},\ }\bibfield  {title}
  {\enquote {\bibinfo {title} {Existence of universal entangler},}\ }\href
  {\doibase 10.1063/1.2829895} {\bibfield  {journal} {\bibinfo  {journal}
  {Journal of Mathematical Physics}\ }\textbf {\bibinfo {volume} {49}},\
  \bibinfo {pages} {012103} (\bibinfo {year} {2008})},\ \Eprint
  {http://arxiv.org/abs/https://doi.org/10.1063/1.2829895}
  {https://doi.org/10.1063/1.2829895} \BibitemShut {NoStop}%
\bibitem [{\citenamefont {Mendes}\ and\ \citenamefont
  {Ramos}(2015)}]{MENDES2015}%
  \BibitemOpen
  \bibfield  {author} {\bibinfo {author} {\bibfnamefont {F.V.}\ \bibnamefont
  {Mendes}}\ and\ \bibinfo {author} {\bibfnamefont {R.V.}\ \bibnamefont
  {Ramos}},\ }\bibfield  {title} {\enquote {\bibinfo {title} {Numerical search
  for universal entanglers in ${C}^3 \otimes{C}^4$ and ${C}^4 \otimes
  {C}^4$},}\ }\href {\doibase https://doi.org/10.1016/j.physleta.2014.11.056}
  {\bibfield  {journal} {\bibinfo  {journal} {Physics Letters A}\ }\textbf
  {\bibinfo {volume} {379}},\ \bibinfo {pages} {289--292} (\bibinfo {year}
  {2015})}\BibitemShut {NoStop}%
\bibitem [{\citenamefont {Sloane}\ and\ \citenamefont {Inc.}(2020)}]{OEIS}%
  \BibitemOpen
  \bibfield  {author} {\bibinfo {author} {\bibfnamefont {Neil J.~A.}\
  \bibnamefont {Sloane}}\ and\ \bibinfo {author} {\bibfnamefont {The
  OEIS~Foundation}\ \bibnamefont {Inc.}},\ }\href
  {http://oeis.org/?language=english} {\enquote {\bibinfo {title} {The on-line
  encyclopedia of integer sequences},}\ } (\bibinfo {year} {2020})\BibitemShut
  {NoStop}%
\bibitem [{\citenamefont {Raissi}\ \emph {et~al.}(2020)\citenamefont {Raissi},
  \citenamefont {Teixid\'o}, \citenamefont {Gogolin},\ and\ \citenamefont
  {Ac\'{\i}n}}]{Acin_AME_2020}%
  \BibitemOpen
  \bibfield  {author} {\bibinfo {author} {\bibfnamefont {Zahra}\ \bibnamefont
  {Raissi}}, \bibinfo {author} {\bibfnamefont {Adam}\ \bibnamefont
  {Teixid\'o}}, \bibinfo {author} {\bibfnamefont {Christian}\ \bibnamefont
  {Gogolin}}, \ and\ \bibinfo {author} {\bibfnamefont {Antonio}\ \bibnamefont
  {Ac\'{\i}n}},\ }\bibfield  {title} {\enquote {\bibinfo {title} {Constructions
  of $k$-uniform and absolutely maximally entangled states beyond maximum
  distance codes},}\ }\href {\doibase 10.1103/PhysRevResearch.2.033411}
  {\bibfield  {journal} {\bibinfo  {journal} {Phys. Rev. Research}\ }\textbf
  {\bibinfo {volume} {2}},\ \bibinfo {pages} {033411} (\bibinfo {year}
  {2020})}\BibitemShut {NoStop}%
\bibitem [{\citenamefont {Kodiyalam}\ and\ \citenamefont
  {Sunder}(2004)}]{VijayK}%
  \BibitemOpen
  \bibfield  {author} {\bibinfo {author} {\bibfnamefont {Vijay}\ \bibnamefont
  {Kodiyalam}}\ and\ \bibinfo {author} {\bibfnamefont {V.~S.}\ \bibnamefont
  {Sunder}},\ }\bibfield  {title} {\enquote {\bibinfo {title} {A complete set
  of numerical invariants for a subfactor},}\ }\href {\doibase
  https://doi.org/10.1016/j.jfa.2003.11.010} {\bibfield  {journal} {\bibinfo
  {journal} {Journal of Functional Analysis}\ }\textbf {\bibinfo {volume}
  {212}},\ \bibinfo {pages} {1--27} (\bibinfo {year} {2004})}\BibitemShut
  {NoStop}%
\bibitem [{\citenamefont {Rico}(2020)}]{rico2020absolutely}%
  \BibitemOpen
  \bibfield  {author} {\bibinfo {author} {\bibfnamefont {Albert}\ \bibnamefont
  {Rico}},\ }\emph {\bibinfo {title} {Absolutely maximally entangled states in
  small system sizes}},\ \href
  {https://diglib.uibk.ac.at/ulbtirolhs/content/titleinfo/5327562} {Master's
  thesis},\ \bibinfo  {school} {University of Innsbruck} (\bibinfo {year}
  {2020})\BibitemShut {NoStop}%
\end{thebibliography}%

\pagebreak

\onecolumngrid

\appendix 
\section{Details about the map in the two-qubit case \label{app:map}}

In the two-qubit case, the Cartan form of any unitary (\ref{eq:cartan}) can be written as

\begin{equation}
    U_0 =     \left( \begin{array}{cccc}
    e^{-ic_3^{(0)}}c_-^{(0)}       & 0 & 0  & -ie^{-ic_3^{(0)}}s_-^{(0)} \\
    0       & e^{ic_3^{(0)}}c_+^{(0)} & -ie^{ic_3^{(0)}}s_+^{(0)} & 0 \\
    0       & -ie^{ic_3^{(0)}}s_+^{(0)} & e^{ic_3^{(0)}}c_+^{(0)} & 0 \\
    -ie^{-ic_3^{(0)}}s_-^{(0)}       & 0 & 0  & e^{-ic_3^{(0)}}c_-^{(0)}  
\end{array} \right) 
\label{eq:carmatrix}
\end{equation}
where,
$$
c_\pm^{(0)} = \cos(c_1^{(0)}\pm c_2^{(0)});\qquad s_\pm^{(0)} = \sin(c_1^{(0)}\pm c_2^{(0)}).
$$

The unitary operator $U$ in its canonical decomposition has four parameters. Let $\alpha_0 = e^{-ic_3^{(0)}}c_-^{(0)}, \beta_0 = -ie^{-ic_3^{(0)}}s_-^{(0)}, \gamma_0 = -ie^{ic_3^{(0)}}s_+^{(0)}, $ and $\delta_0 = e^{ic_3^{(0)}}c_+^{(0)}$.  Then the Eq. (\ref{eq:carmatrix}) can be written as 

\begin{equation}
    U_0 =     \left( \begin{array}{cccc}
    \alpha_0 & 0 & 0 & \beta_0 \\
    0 & \delta_0 & \gamma_0 & 0 \\
    0 & \gamma_0 & \delta_0 & 0 \\
    \beta_0 & 0 & 0 & \alpha_0 
    \end{array} \right).
\label{eq:carabc}
\end{equation}

The $\mathcal{M}_R$ map can now be studied analytically by applying it to the two qubit unitary operators in its Cartan form (\ref{eq:carabc}). Action of linear map $R$ on $U_0$ defined in Eq. (\ref{eq:carabc}) results 
\begin{equation}
    U^R_0 =     \left( \begin{array}{cccc}
    \alpha_0 & 0 & 0 & \delta_0 \\
    0 & \beta_0 & \gamma_0 & 0 \\
    0 & \gamma_0 & \beta_0 & 0 \\
    \delta_0 & 0 & 0 & \alpha_0 
    \end{array}  \right).
\label{eq:UR}
\end{equation} 
The polar decomposition of the matrix $U_0^R$, which is given by $U_0^R = U_1H$, where $U_1$ is unitary and $H = \sqrt{U^{R\dagger} U^R}$ is given by 

\begin{equation}
    H = \frac{1}{2}    \left( \begin{array}{cccc}
    |\alpha_0-\delta_0| + |\alpha_0+\delta_0| & 0 & 0 & -|\alpha_0-\delta_0| + |\alpha_0+\delta_0| \\
    0 & |\beta_0-\gamma_0| + |\beta_0+\gamma_0| & -|\beta_0-\gamma_0| + |\beta_0+\gamma_0| & 0 \\
    0 &  -|\beta_0-\gamma_0| + |\beta_0+\gamma_0| & |\beta_0-\gamma_0| + |\beta_0+\gamma_0| & 0 \\
    -|\alpha_0-\delta_0| + |\alpha_0+\delta_0| & 0 & 0 & |\alpha_0-\delta_0| + |\alpha_0+\delta_0| 
    \end{array} \right).
\label{eq:sqrtp}
\end{equation} 
The unitary $U_1= U_0^R H^{-1}$ is given by 
\begin{equation}
    U_1 =   \left( \begin{array}{cccc} 
    \alpha_0 \alpha_+ + \delta_0 \alpha_- & 0 & 0 & \alpha_0 \alpha_- + \delta_0 \alpha_+ \\
    0 & \beta_0 \beta_+ + \gamma_0 \beta_- & \beta_0 \beta_- + \gamma_0 \beta_+ & 0 \\
    0 & \gamma_0 \beta_+ + \beta_0 \beta_- & \gamma_0 \beta_- + \beta_0 \beta_+ & 0 \\
    \alpha_0 \alpha_- + \delta_0 \alpha_+   & 0 & 0 & \alpha_0 \alpha_+ + \delta_0 \alpha_-  
    \end{array} \right ),
\label{eq:U1}
\end{equation}

where $\alpha_{\pm}$ and $\beta_{\pm}$ are given as 
\begin{equation} 
\begin{split} 
\alpha_{\pm} &=  \frac{|\alpha_0-\delta_0| \pm |\alpha_0+\delta_0|}{2|\alpha_0-\delta_0||\alpha_0+\delta_0|} \\
\beta_{\pm} &=  \frac{|\beta_0-\gamma_0| \pm |\beta_0+\gamma_0|}{2|\beta_0-\gamma_0||\beta_0+\gamma_0|}
\end{split}
\end{equation} 
Note that although $U_0 \in \mathcal{SU}(4)$ but $U_1$ given by Eq.~(\ref{eq:U1}) need not be in $\mathcal{SU}(4)$ in general.
The mapping between $\alpha_0,\beta_0,\gamma_0,\delta_0$ of $U_0$ and $\alpha_1', \beta_1', \gamma_1', \delta_1'$ of 
$U_1$ is
\begin{align}
\alpha_1' & = \frac{(\alpha_0+\delta_0)|\alpha_0-\delta_0| + (\alpha_0-\delta_0) |\alpha_0+\delta_0|}{2|\alpha_0-\delta_0||\alpha_0+\delta_0|} \nonumber \\
\beta_1' &= \frac{(\alpha_0+\delta_0)|\alpha_0-\delta_0| - (\alpha_0-\delta_0) |\alpha_0+\delta_0|}{2|\alpha_0-\delta_0||\alpha_0+\delta_0|} \nonumber \\
\gamma_1' &= \frac{(\beta_0+\gamma_0)|\beta_0-\gamma_0| - (\beta_0-\gamma_0)|\beta_0+\gamma_0|}{2|\beta_0-\gamma_0||\beta_0+\gamma_0|} \nonumber \\
\delta_1' &= \frac{(\beta_0+\gamma_0)|\beta_0-\gamma_0| + (\beta_0-\gamma_0)|\beta_0+\gamma_0|}{2|\beta_0-\gamma_0||\beta_0+\gamma_0|}.
\label{eq:amap}
\end{align} 
The above set of equations written in a compact form as
\begin{equation}
\begin{pmatrix}
\alpha_1' \\
\beta_1' \\
\gamma_1' \\
\delta_1' 
\end{pmatrix} = 
\begin{pmatrix}
\alpha_+ & 0 & 0 & \alpha_- \\
\alpha_- & 0 & 0 & \alpha_+ \\
0 & \beta_- & \beta_+ & 0 \\
0 & \beta_+ & \beta_- & 0 
\end{pmatrix} 
\begin{pmatrix}
\alpha_0 \\
\beta_0 \\
\gamma_0 \\
\delta_0 
\end{pmatrix},
\end{equation} 
depicts the non-linear nature of the map.

\section{Proofs of the fixed point theorems in the two-qubit case \label{app:fixed_point_thms}}
\subsection{Proof of Theorem 1}
\begin{proof}
Let $U_0$ be a two-qubit gate of the form Eq.~(\ref{eq:carabc0}).
If $U_0$ is fixed point of the $\mathcal{M}_{R}$ map: \beq
\mathcal{M}_{R}[U_0]=U_1=U_0.
\eeq
As this implies that $U_1$ is also in $\mathcal{SU}(4)$, $\chi_{1}=0$. 
Using Eq.~(\ref{eq:abcdmap}), the fixed point condition $\mathcal{M}_{R}[U_0]=U_0$ can be written as
\begin{subequations}
\label{eq:period_one}
\begin{align}
\label{eq:alp_0}
  \alpha_0 & =\frac{1}{2}[(k_+^{(0)}+k_-^{(0)})\alpha_0 + (k_+^{(0)}-k_-^{(0)})\delta_0],\\
\label{eq:bet_0}
  \beta_0 & =\frac{1}{2}[(k_+^{(0)}-k_-^{(0)})\alpha_0 + (k_+^{(0)}+k_-^{(0)})\delta_0],\\
\label{eq:gam_0}
  \gamma_0 & =\frac{1}{2}[(l_+^{(0)}-l_-^{(0)})\beta_0 + (l_+^{(0)}+l_-^{(0)})\gamma_0],\\
\label{eq:del_0}
  \delta_0 & =\frac{1}{2}[(l_+^{(0)}+l_-^{(0)})\beta_0 + (l_+^{(0)}-l_-^{(0)})\gamma_0],
 \end{align}
 \end{subequations}
where $k_{\pm}^{(0)}=1/|\alpha_0 \pm \delta_0|$ and $l_{\pm}^{(0)}=1/|\beta_0 \pm \gamma_0|$. From the unitarity of $U_0$, it follows that Re$(\alpha_0 \beta_0^*)=$Re$(\gamma_0 \delta_0^*)=0$. Multiplying Eq.~(\ref{eq:alp_0}) by $\gamma_0^*$, Eq.~(\ref{eq:gam_0}) by $\alpha_0^*$ in Eq.~(\ref{eq:period_one}) and taking real parts one obtains $k_{+}^{(0)}+k_{+}^{(0)}=l_{+}^{(0)}+l_{+}^{(0)}=2$. Similarly, multiplying Eq.~(\ref{eq:alp_0}) by $\beta_0^*$ and taking the real parts we get
\beq
(k_+^{(0)}-k_-^{(0)})\text{Re}[\delta_0 \beta_0^*]=0.
\label{eq:fixed_point_conditions}
\eeq
Therefore, either $k_+^{(0)}=k_-^{(0)}$ or Re$[\delta_0 \beta_0^*]=0$. For $k_+^{(0)}=k_-^{(0)}$ together with the condition $k_{+}^{(0)}+k_{+}^{(0)}=2$, from Eq.~(\ref{eq:bet_0}), it follows that $\beta_0=\delta_0$. For $U_0$ with $\beta_0=\delta_0 \neq 0$, it is can be shown that $\gamma_0=\pm \alpha_0$ using the unitarity of $U_0$. Therefore, $U_0$ is of the form
\beq
U_0=\left( \begin{array}{cccc}
    \alpha_0 & 0 & 0 & \beta_0 \\
    0 & \beta_0 & \pm \alpha_0  & 0 \\
    0 & \pm \alpha_0 & \beta_0 & 0 \\
    \beta_0 & 0 & 0 & \alpha_0 
    \end{array} \right)\;,
    \label{eq:gen_selfdual}
\eeq
which satisfies $U_0^R=U_0$ and thus is a self-dual unitary. From Eq.~(\ref{eq:bet_0}) and Eq.~(\ref{eq:del_0}) using unitarity of $U_0$ together with $k_{+}^{(0)}+k_{+}^{(0)}=l_{+}^{(0)}+l_{+}^{(0)}=2$, it follows that
$|\beta_0|^2=|\delta_0|^2=$ Re$(\beta_0 \delta_0^*)$. Therefore, the other condition Re$(\beta_0 \delta_0^*)=0$ in Eq.~(\ref{eq:fixed_point_conditions}) is satisfied only when $\beta_0=\delta_0=0$. In this case $U_0$ is of the form
\beq 
U_0 =     \left( \begin{array}{cccc}
    \alpha_0 & 0 & 0 & 0 \\
    0 & 0 & \gamma_0 & 0 \\
    0 & \gamma_0 & 0 & 0 \\
    0 & 0 & 0 & \alpha_0 
    \end{array} \right)\;,
    \label{eq:selfdual}
\eeq
where  $|\alpha_0|= |\gamma_0|=1$ and is also self-dual unitary.
Hence, all period-one fixed points of the $\mathcal{M}_R$ map in the two-qubit case are self-dual.
\end{proof}

Canonical form of the dual-unitaries obtained by setting $c_1=c_2=\pi/4$ in Eq.~(\ref{eq:carmatrix}) is of the form Eq.~(\ref{eq:selfdual}). Self-dual unitaries of the form Eq.~(\ref{eq:gen_selfdual}) are LU equivalent to the canonical form. For example, $(H \otimes H)U_0(H \otimes H)$, $H$ being the Hadamard gate; $H=\frac{1}{\sqrt{2}}\begin{pmatrix}
1 & 1 \\
1 & -1
\end{pmatrix}$, is of the canonical form where $U_0$ is of the form
\[
U_0=\left( \begin{array}{cccc}
    \alpha_0 & 0 & 0 & \beta_0 \\
    0 & \beta_0  & \alpha_0  & 0 \\
    0 & \alpha_0 & \beta_0 & 0 \\
    \beta_0 & 0 & 0 & \alpha_0 
    \end{array} \right)\;.
   \]
\subsection{Proof of Theorem 2} 
\begin{proof}

One way is easy: if $U_0$ is dual-unitary, then $\mathcal{M}_{R}[U_0]=U_0^R$ as $U_0^R$ is unitary and therefore $\mathcal{M}_{R}^2[U_0]=\mathcal{M}_{R}[U_0^R]=U_0$ i.e., $U_0$ is fixed point of the $\mathcal{M}_{R}^2$ map.

The other direction: that all fixed points of $\mathcal{M}_{R}^2$
are dual-unitary, is nontrivial. Analogous to the period-one case the fixed point equation $\mathcal{M}_{R}^2[U_0]=U_0$ can be written in terms of matrix elements. Action of $\mathcal{M}_{R}$ on $U_0$ leads to $U_1$ given by
\[
U_1=\left( \begin{array}{cccc}
    \alpha_1 & 0 & 0 & \beta_1 \\
    0 & \delta_1  & \gamma_1  & 0 \\
    0 & \gamma_1 & \delta_1 & 0 \\
    \beta_1 & 0 & 0 & \alpha_1 
    \end{array} \right)\;,
\]
where
\begin{subequations}
\label{eq:period_two_U1}
\begin{align}
\label{eq:alp_1}
  \alpha_1 & =\frac{1}{2}[(k_+^{(0)}+k_-^{(0)})\alpha_0 + (k_+^{(0)}-k_-^{(0)})\delta_0],\\
\label{eq:bet_1}
  \beta_1 & =\frac{1}{2}[(k_+^{(0)}-k_-^{(0)})\alpha_0 + (k_+^{(0)}+k_-^{(0)})\delta_0],\\
\label{eq:gam_1}
  \gamma_1 & =\frac{1}{2}[(l_+^{(0)}-l_-^{(0)})\beta_0 + (l_+^{(0)}+l_-^{(0)})\gamma_0],\\
\label{eq:del_1}
  \delta_1 & =\frac{1}{2}[(l_+^{(0)}+l_-^{(0)})\beta_0 + (l_+^{(0)}-l_-^{(0)})\gamma_0],
 \end{align}
 \end{subequations}
where $k_{\pm}^{(0)}=1/|\alpha_0 \pm \delta_0|$ and $l_{\pm}^{(0)}=1/|\beta_0 \pm \gamma_0|$ and we have ignored the overall phase as it does not affect the proof. As $\mathcal{M}_{R}^2[U_0]:=\mathcal{M}_{R}[\mathcal{M}_{R}[U_0]]=\mathcal{M}_{R}[U_1]=U_0$, therefore mapping among the matrix elements is given by
\begin{subequations}
\label{eq:period_two_U2}
\begin{align}
\label{eq:alp_2}
 & \alpha_0=\frac{1}{2}[(k_+^{(1)}+k_-^{(1)})\alpha_1 + (k_+^{(1)}-k_-^{(1)})\delta_1],\\
 \label{eq:bet_2}
 & \beta_0=\frac{1}{2}[(k_+^{(1)}-k_-^{(1)})\alpha_1 + (k_+^{(1)}+k_-^{(1)})\delta_1],\\
 \label{eq:gam_2}
 & \gamma_0=\frac{1}{2}[(l_+^{(1)}-l_-^{(1)})\beta_1 + (l_+^{(1)}+l_-^{(1)})\gamma_1],\\
 \label{eq:del_2}
 & \delta_0=\frac{1}{2}[(l_+^{(1)}+l_-^{(1)})\beta_1 + (l_+^{(1)}-l_-^{(1)})\gamma_1],
 \end{align}
 \end{subequations}
where $k_{\pm}^{(1)}=1/|\alpha_1 \pm \delta_1|$ and $l_{\pm}^{(1)}=1/|\beta_1 \pm \gamma_1|$. Using unitarity constraints Re$(\alpha_n \beta_n^*)=$ Re$(\gamma_n\delta_n^*)=0$ ($n=0,1$), we will simplify above set of equations.


Multiplying Eq.~(\ref{eq:alp_2}) by $\gamma_1^*$ and taking real parts we get
\beq
\text{Re}[\alpha_0\gamma_1^*]=\frac{1}{2}(k_+^{(1)}+k_-^{(1)})\text{Re}[\alpha_1 \gamma_1^*].
\label{eq:alp0_gam1star}
\eeq
Now, multiplying $\alpha_0$ to the complex conjugate of Eq.~(\ref{eq:gam_1}) and taking real parts we get
\beq
\text{Re}[\alpha_0\gamma_1^*]=\frac{1}{2}(l_+^{(0)}+l_-^{(0)})\text{Re}[\alpha_0 \gamma_0^*].
\label{eq:gam1star_alp0}
\eeq
From Eq.~(\ref{eq:alp0_gam1star}) and Eq.~(\ref{eq:gam1star_alp0}) it follows that
\beq
\frac{\text{Re}[\alpha_1 \gamma_1^*]}{\text{Re}[\alpha_0 \gamma_0^*]}=\frac{l_+^{(0)}+l_-^{(0)}}{k_+^{(1)}+k_-^{(1)}}.
\label{eq:alp_gam_ratio_1}
\eeq
Similarly, mutiplying Eq.~(\ref{eq:alp_1}) by $\gamma_0^*$,  complex conjugate of Eq.~(\ref{eq:gam_1}) by $\alpha_0$, and taking real parts leads to
\beq
\frac{\text{Re}[\alpha_1 \gamma_1^*]}{\text{Re}[\alpha_0 \gamma_0^*]}=\frac{k_+^{(0)}+k_-^{(0)}}{l_+^{(1)}+l_-^{(1)}}.
\label{eq:alp_gam_ratio_2}
\eeq
From Eqs.~(\ref{eq:alp_gam_ratio_1}) and ~(\ref{eq:alp_gam_ratio_2}), it follows that
\beq
\frac{l_+^{(0)}+l_-^{(0)}}{k_+^{(1)}+k_-^{(1)}}=\frac{k_+^{(0)}+k_-^{(0)}}{l_+^{(1)}+l_-^{(1)}}.
\label{eq:alp_gam_ratio_1_2}
\eeq
Multiplying complex conjugate of Eq.~(\ref{eq:bet_2}) by $\beta_1$ and taking real parts
\beq
\text{Re}[\beta_1\beta_0^*]=\frac{1}{2}(k_+^{(1)}+k_-^{(1)})\text{Re}[\delta_1^*\beta_1].
\label{eq:bet0star_bet1}
\eeq
Now, multiplying Eq.~(\ref{eq:bet_1}) by $\beta_0^*$ and taking real parts
\beq
\text{Re}[\beta_1\beta_0^*]=\frac{1}{2}(k_+^{(0)}+k_-^{(0)})\text{Re}[\delta_0 \beta_0^*].
\label{eq:bet1_bet0star}
\eeq
From Eq.~(\ref{eq:bet0star_bet1}) and Eq.~(\ref{eq:bet1_bet0star}) it follows that
\beq
\frac{\text{Re}[\delta_1 \beta_1^*]}{\text{Re}[\delta_0 \beta_0^*]}=\frac{k_+^{(0)}+k_-^{(0)}}{k_+^{(1)}+k_-^{(1)}}.
\label{eq:bet_del_ratio_1}
\eeq
Similarly, mutiplying Eq.~(\ref{eq:del_1}) by $\delta_0^*$,  complex conjugate of Eq.~(\ref{eq:del_2}) by $\delta_1$, and taking real parts we get
\beq
\frac{\text{Re}[\delta_1 \beta_1^*]}{\text{Re}[\delta_0 \beta_0^*]}=\frac{l_+^{(0)}+l_-^{(0)}}{l_+^{(1)}+l_-^{(1)}}.
\label{eq:bet_del_ratio_2}
\eeq
From Eqs.~(\ref{eq:bet_del_ratio_1}) and (\ref{eq:bet_del_ratio_2}) it follows that
\beq
\frac{k_+^{(0)}+k_-^{(0)}}{k_+^{(1)}+k_-^{(1)}}=\frac{l_+^{(0)}+l_-^{(0)}}{l_+^{(1)}+l_-^{(1)}}.
\label{eq:bet_del_ratio_1_2}
\eeq
From Eq.~(\ref{eq:alp_gam_ratio_1_2}) and Eq.~(\ref{eq:bet_del_ratio_1_2}) it is easy to check that
\beq
k_+^{(0)}+k_-^{(0)}=l_+^{(0)}+l_-^{(0)},
k_+^{(1)}+k_-^{(1)}=l_+^{(1)}+l_-^{(1)}
\label{eq:k_l_n}
\eeq
Multiplying Eq.~(\ref{eq:alp_2}) by $\gamma_0^*$ and taking real parts we get
\beq
\begin{split}
\text{Re}[\alpha_0\gamma_0^*]& =\frac{1}{2}\left[(k_+^{(1)}+k_-^{(1)})\text{Re}[\alpha_1\gamma_0^*]+(k_+^{(1)}-k_-^{(1)})\text{Re}[\delta_1\gamma_0^*]\right],\\
& = \frac{(k_+^{(1)}+k_-^{(1)})(k_+^{(0)}+k_-^{(0)})}{4}\text{Re}[\alpha_0\gamma_0^*] \\
& + \frac{(k_+^{(1)}-k_-^{(1)})(l_+^{(1)}-l_-^{(1)})}{4}\text{Re}[\delta_1\beta_1^*] ,
\end{split}
\eeq
where second equation is obtained using Re$[\alpha_1\gamma_0^*]=(k_+^{(0)}+k_-^{(0)})/2$ Re$[\alpha_0\gamma_0^*]$ and Re$[\delta_1\gamma_0^*]=(l_+^{(1)}-l_-^{(1)})/2$ Re$[\delta_1\beta_1^*]$. Using Eq.~(\ref{eq:bet_del_ratio_1}) in above equation one obtains
\beq
\begin{split}
\text{Re}[\alpha_0\gamma_0^*]& = \frac{(k_+^{(1)}+k_-^{(1)})(k_+^{(0)}+k_-^{(0)})}{4}\text{Re}[\alpha_0\gamma_0^*] \\
& + \frac{(k_+^{(1)}-k_-^{(1)})(l_+^{(1)}-l_-^{(1)})}{4} \frac{(k_+^{(0)}+k_-^{(0)})}{(k_+^{(1)}+k_-^{(1)})}\text{Re}[\delta_0\beta_0^*] ,
\end{split}
\label{eq:alp_0_gam_0}
\eeq
Now, multiplying conjugate of Eq.~(\ref{eq:bet_2}) by $\delta_0$ and taking real parts 
\beq
\begin{split}
\text{Re}[\beta_0^*\delta_0]& =\frac{1}{2}\left[(k_+^{(1)}-k_-^{(1)})\text{Re}[\alpha_1^*\delta_0]+(k_+^{(1)}+k_-^{(1)})\text{Re}[\delta_1^*\delta_0]\right],\\
& = \frac{(k_+^{(1)}-k_-^{(1)})(l_+^{(1)}-l_-^{(1)})}{4}\text{Re}[\alpha_1^*\gamma_1] \\
& + \frac{(k_+^{(1)}+k_-^{(1)})(l_+^{(1)}+l_-^{(1)})}{4}\text{Re}[\delta_1^*\beta_1],
\end{split}
\eeq
where second equation is obtained using Re$[\alpha_1^*\delta_0]=(l_+^{(1)}-l_-^{(1)})/2$ Re$[\alpha_1^* \gamma_1]$ and Re$[\delta_1^* \delta_0]=(l_+^{(1)}+l_-^{(1)})/2$ Re$[\delta_1^*\beta_1]$. Using Eqs.~(\ref{eq:alp_gam_ratio_1},~(\ref{eq:bet_del_ratio_1} in above equation we get
\beq
\begin{split}
\text{Re}[\beta_0^*\delta_0]& = \frac{(k_+^{(1)}-k_-^{(1)})(l_+^{(1)}-l_-^{(1)})}{4}\frac{(k_+^{(0)}+k_-^{(0)})}{(k_+^{(1)}+k_-^{(1)})} \text{Re}[\alpha_0 \gamma_0^*] \\
& + \frac{(k_+^{(0)}+k_-^{(0)})(l_+^{(1)}+l_-^{(1)})}{4} \text{Re}[\delta_0^*\beta_0],
\end{split}
\label{eq:bet_0_del_0}
\eeq
From Eqs.~(\ref{eq:alp_0_gam_0}),~(\ref{eq:bet_0_del_0}) and ~(\ref{eq:k_l_n}), we get
\beq
\begin{split}
& k_+^{(1)}=k_-^{(1)}, l_+^{(1)}=l_-^{(1)},\\
& (k_+^{(0)}+k_-^{(0)})(k_+^{(1)}+k_-^{(1)})=(l_+^{(0)}+l_-^{(0)})(l_+^{(1)}+l_-^{(1)})=4.
\label{eq:k_pm_l_pm_1}
\end{split}
\eeq

A similar calculation of Re$[\alpha_1 \gamma_1^*]$ and Re$[\beta_1 \delta_1^*]$ implies that
\beq
\begin{split}
& k_+^{(0)}=k_-^{(0)}, l_+^{(0)}=l_-^{(0)},\\
& (k_+^{(0)}+k_-^{(0)})(k_+^{(1)}+k_-^{(1)})=(l_+^{(0)}+l_-^{(0)})(l_+^{(1)}+l_-^{(1)})=4.
\label{eq:k_pm_l_pm_2}
\end{split}
\eeq
Using Eqs.~(\ref{eq:k_pm_l_pm_2})-(\ref{eq:k_pm_l_pm_2}) in Eq.~(\ref{eq:period_two_U1}) and Eq.~(\ref{eq:period_two_U2}) it follows that $k_{\pm}^{(n)}=l_{\pm}^{(n)}=1$ where $n=0,1$, as in the period-one case. For  $k_+^{(1)}=k_-^{(1)}=1$, $\alpha_1=\alpha_0, \beta_1=\delta_0,\gamma_1=\gamma_0,$ and $\delta_1=\beta_0$. Therefore, Re$(\alpha_0 \delta_0^*)=$ Re$(\beta_0 \gamma_0^*)=0$ which implies that $U_0^R$ given by
\[
U_0^R=\left( \begin{array}{cccc}
    \alpha_0 & 0 & 0 & \delta_0 \\
    0 & \beta_0  & \gamma_0  & 0 \\
    0 & \gamma_0 & \beta_0 & 0 \\
    \delta_0 & 0 & 0 & \alpha_0 
    \end{array} \right)
\] 
is also unitary (note that $\alpha_0$ and $\gamma_0$ do not change positions under the realignment operation) and thus $U_0$ is dual-unitary. Hence, $\mathcal{M}_{R}^2[U_0]=U_0$ implies that $U_0$ is dual-unitary. 
\end{proof}
We have assumed that $\alpha_0,\beta_0,\delta_0,$ and $\gamma_0$ are all non-zero. It is easy to verify from Eqs.~(\ref{eq:period_two_U1})-~(\ref{eq:period_two_U2}) that if $\alpha_0=0$ then $\gamma_0=0$ and if $\beta_0=0$ then $\delta_0=0$. In the former case seed unitary is dual-unitary of the form 
\beq
U_0^R=\left( \begin{array}{cccc}
    0 & 0 & 0 & \beta_0 \\
    0 & \delta_0  & 0  & 0 \\
    0 & 0 & \delta_0 & 0 \\
    \beta_0  & 0 & 0 & 0 
    \end{array} \right)
\eeq
and it the later case $U_0$ is self-dual unitary of the canonical form Eq.~(\ref{eq:selfdual}). 

For two-qubit gates of the form Eq.~(\ref{eq:carmatrix}) without any restriction on the parameters, the general form of a two-qubit dual-unitary gates obtained from the map is
\[
U_0=\left( \begin{array}{cccc}
    \alpha_0 & 0 & 0 & \beta_0 \\
    0 & \pm \beta_0  & \pm \alpha_0  & 0 \\
    0 & \pm \alpha_0  & \pm \beta_0 & 0 \\
    \beta_0 & 0 & 0 & \alpha_0 
    \end{array} \right)
\] 
 
\section{Details about the map in terms of Cartan parameters \label{app:CartanAlg}}

\subsection{Map in terms of Cartan parameters}
\subsubsection{XXX family}
For $c_1^{(n)}=c_1^{(n)}=c_1^{(n)}=c^{(n)}$, the complex number arguments appearing in Eq.~(\ref{eq:cartan_gen2}) simplify to
\beq
\begin{split}
\theta_{+}^{(n)}= & -\arctan\left[\frac{1-\cos 2\,c^{(n)}}{1+\cos 2\,c^{(n)}} \tan c^{(n)}\right],\\
\theta_{-}^{(n)}= & -\arctan\left[\frac{1+\cos 2\,c^{(n)}}{1-\cos 2\,c^{(n)}} \tan c^{(n)}\right],\\
\phi_+^{(n)}= & -\arctan\left[\frac{1}{\tan c^{(n)}}\right],\\
\phi_-^{(n)}= & \pi-\arctan\left[\frac{1}{\tan c^{(n)}}\right].
\end{split}
\label{eq:comp_arg}
\eeq
Using above equation Eq.~(\ref{eq:cartan_gen2}) simplifies to
\beq
c_1^{(n+1)}=c_3^{(n+1)}=c^{(n+1)}=\frac{\pi}{4}-\frac{1}{4}\arctan\left[\frac{2}{\tan 2\,c^{(n)}}\right],
\eeq
for all $n$ and 
\beq
\begin{split}
c_2^{(n+1)}= & c^{(n+1)} \; \text{for odd $n$},\\
		   = & \frac{\pi}{2}- c^{(n+1)} \; \text{for even $n$}
\end{split}
\eeq
where Cartan coefficients satisfy $0 \leq c_3^{(n)} \leq c_2^{(n)} \leq c_1^{(n)}$ for all $n$.
Thus the map on Cartan parameters is 1-dimensional given by
 \beq
 c^{(n+1)}=\frac{\pi}{4}-\frac{1}{4}\arctan\left[\frac{2}{\tan 2\,c^{(n)}}\right].
 \label{eq:XXXcnmap}
 \eeq
It is easy to check that $c^{*}=\pi/4$ is the fixed point of the map. We show that it is global attractor for all $c^{(0)} \in (0,\pi/4]$ below. 
In terms of $x_n=1/\tan 2\,c^{(n)}$, Eq.~(\ref{eq:XXXcnmap}) becomes
\beq
x_n=\frac{x_{n+1}}{1-x_{n+1}^2},
\eeq
which under rearrangement gives Eq.~(\ref{eq:XXXunmap}). 

To prove convergence we write the map in terms of $x_n$ defined in Eq.~(\ref{eq:xndef}) as
\beq
\frac{x_{n+1}}{x_n}=\frac{2}{1+\sqrt{(2\,x_n)^2+1}}.
\label{eq:unmapconv}
\eeq
As $ 1 < \sqrt{(2\,x_n)^2+1}$ for all $x_n \in (0,\infty)$. Therefore,
$$1+\sqrt{(2\,x_n)^2+1} > 2 \implies \frac{2}{1+\sqrt{(2\,x_n)^2+1}} < 1.$$ 
Hence from Eq. \ref{eq:unmapconv}, $x_{n+1} < x_n$ and explains the contractive nature of the map. The convergence of the map can also be justified in terms of it's Jacobian given by 
$$J_x=\frac{d}{dx}{\left[\frac{2\,x}{1+\sqrt{4\,x^2+1}}\right]}=\frac{2}{1+4\,x^2+\sqrt{1+4\,x^2}},$$
with $J_x < 1 \;\forall \;x \in (0,\infty)$.
The approach to the fixed point $x^*=0$, or, equivalently the approach of $U_n$ to the {\sc swap} gate, is  {\em algebraic} with exponent equal to 1/2 as shown in Fig.~(\ref{fig:XXXMRmap}) for $x_0=1$. Numerical values in Fig.~(\ref{fig:XXXMRmap} obtained from the algorithm proposed in \cite{Kus2013} exactly match the analytical values obtained from Eq.~(\ref{eq:XXXunmap}).
\begin{figure}
\centering
\includegraphics[scale=.45]{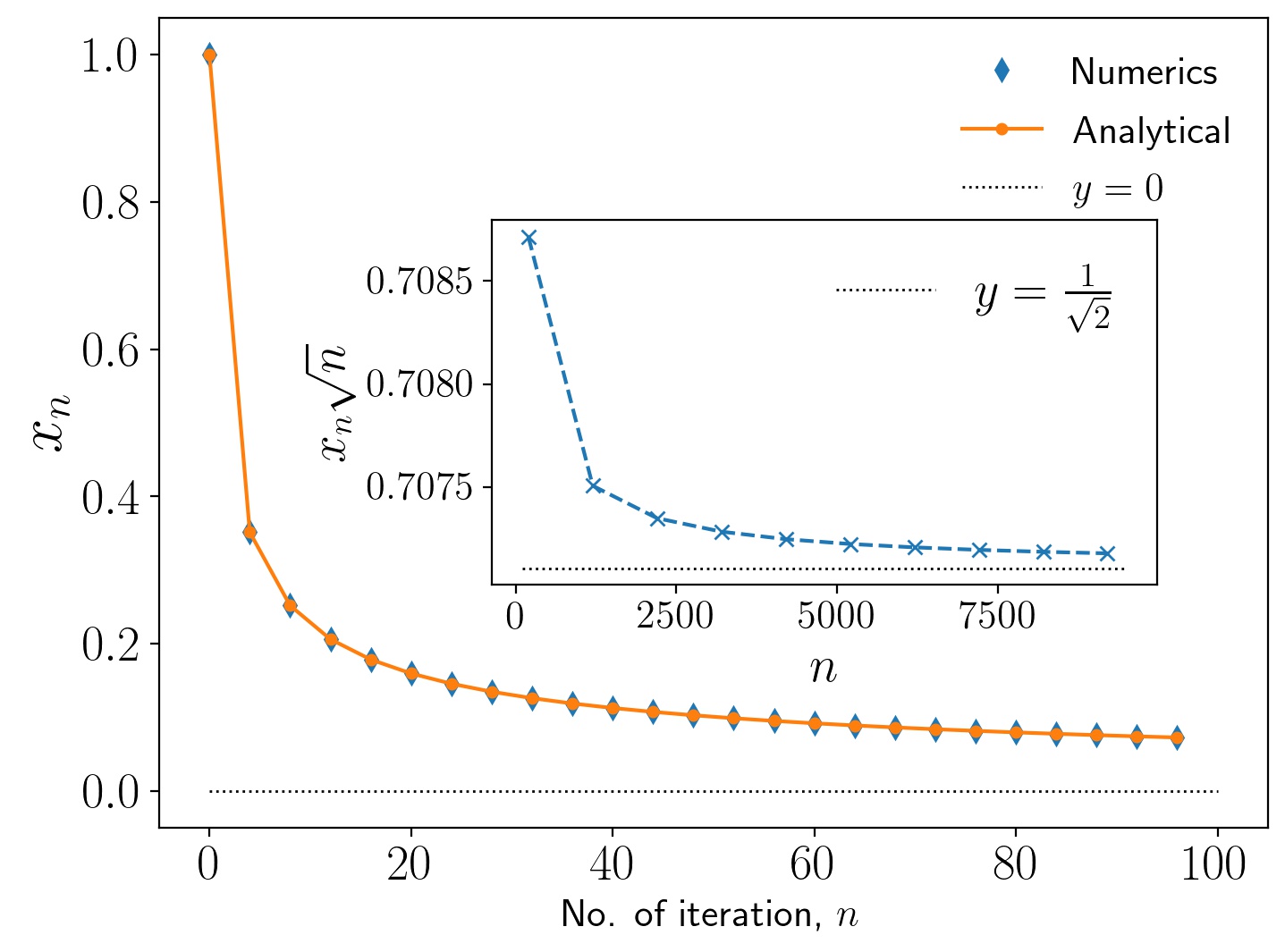}
\caption{XXX case: (Main) Convergence to the respective fixed point $x^*=0$ under the map initiated with $c^{(0)}=\pi/8$, or equivalently $x_0=1$. (Inset) Power law behaviour for the same initial condition as seen from the convergence $x_n \sqrt{n}\rightarrow \frac{1}{\sqrt{2}}$ for large $n$.}
\label{fig:XXXMRmap}
\end{figure}
\subsubsection{{\sc swap-cnot-dcnot} face}
In this case the substitution $c_1^{(n)}=\pi/4$ and assuming $0 \leq c_3^{(n)} \leq c_2^{(n)} \leq c_1^{(n)} \leq \pi/4$, simplifies the complex number arguments in the 3-dimensional map Eq.~(\ref{eq:cartan_gen2}) to
\beq
\begin{split}
\theta_{+}^{(n)}= & -\arctan\left[\tan c_2^{(n)} \tan c_3^{(n)}\right],\\
\theta_{-}^{(n)}= & -\arctan\left[\frac{\tan c_3^{(n)}}{\tan c_2^{(n)}}\right],\\
\phi_+^{(n)}= & -\arctan\left[\frac{1}{\tan c_2^{(n)}\,\tan c_3^{(n)}}\right],\\
\phi_-^{(n)}= & \pi-\arctan\left[\frac{\tan c_2^{(n)}}{\tan c_3^{(n)}}\right].
\end{split}
\label{eq:comp_arg_scd_face}
\eeq
Using above set of equations in Eq.~(\ref{eq:cartan_gen2}) the map on Cartan coefficients reduces to 
\beq
\begin{split}
& c_1^{(n+1)}=\frac{\pi}{4},\\
& c_2^{(n+1)}=\frac{\pi}{4} \pm \frac{1}{2}\arctan\left[\sin 2\,c_3^{(n)}\cot 2\, c_2^{(n)}\right],\\
& c_3^{(n+1)}= \frac{1}{2}\arctan\left[\frac{\tan 2\,c_3^{(n)}}{\sin 2c_2^{(n)}}\right].
\end{split}
\eeq
Thus the map is 2-dimensional and in terms of $y_n=1/\tan^2 2\,c_2^{(n)}$ and $z_n=1/\tan^2 2\,c_3^{(n)}$ the above 2-dimensional map takes a purely algebraic form given by Eq.~(\ref{eq:xnyn_alg_form} ) in the main text.
\subsubsection{{\sc swap-cnot} edge}
In this case the map on Cartan parameters is 1-dimensional; $c_1^{(n+1)}=\pi/4,c_2^{(n+1)}=c_3^{(n+1)}=c^{(n+1)}$, given by
\beq
c^{(n+1)}=\frac{1}{2}\arctan\left[\frac{1}{\cos 2\,c^{(n)}}\right].
\label{eq:XYYmodel}
\eeq
In terms of $t_n=\tan^2(2\,c^{(n)})$, the above map takes a simple linear form
\beq
t_{n+1}=1+t_n,
\eeq
hence $t_n=n-1+t_0$. In terms of $y_n=1/\tan^2(2\,c^{(n)})$ reduces to Eq.~(\ref{eq:XYYunmap}) with exact solution given by Eq.~(\ref{eq:XYYsol}). In this case the approach to the fixed point $y^*=0$ or, equivalently the approach of $U_n$ to the {\sc swap} gate, is {\em algebraic} with exponent equal to 1/2 as shown in Fig.~(\ref{fig:XYYMRmap}).
\begin{figure}
\centering
\includegraphics[scale=.45]{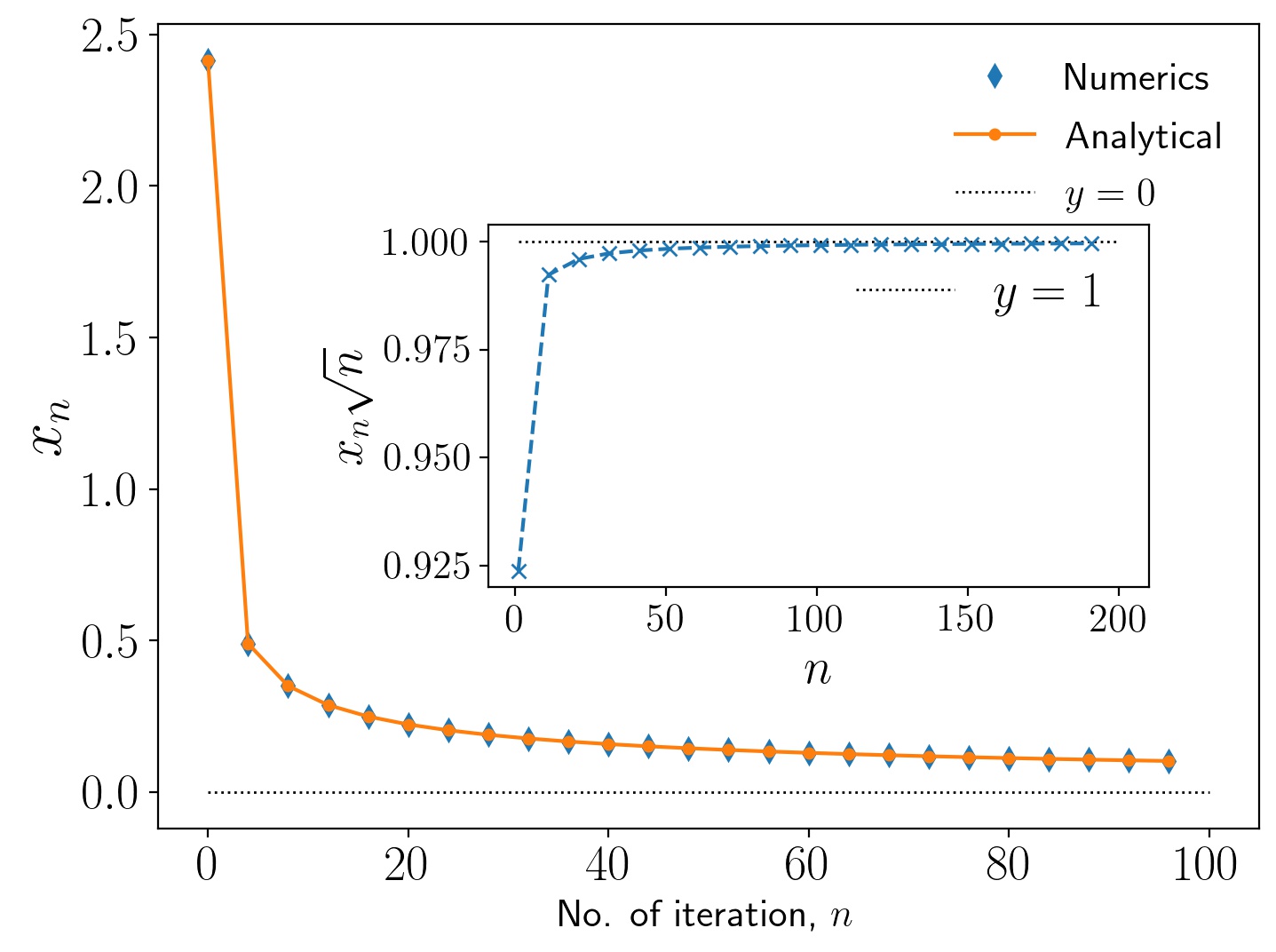}
\caption{{\sc swap-cnot} edge: (Main) Convergence to the respective fixed point $y^*=0$ under the map initiated with $c^{(0)}=\frac{\pi}{16}$. (Inset) Power law behaviour for the same initial condition as seen from the convergence $y_n \sqrt{n}\rightarrow 1 $ for large $n$.}
\label{fig:XYYMRmap}
\end{figure}

\section{2-unitaries with entangled rows and columns \label{app:general_comb_design}}
We write unitary operator $U$ in block form as 
$U=\sum_{i,j=1}^{d}\ket{i}\bra{j} \otimes X_{ij}.$
\begin{enumerate}
\item
Unitarity of $U$: $UU^{\dagger}=U^{\dagger} U=\mathbb{I}_{d^2}$.
\begin{align*}
UU^{\dagger} & =\left(\sum_{i,j=1}^{d} \ket{i} \bra{j} \otimes X_{ij} \right) \left(\sum_{k,l=1}^{d} \ket{k} \bra{l} \otimes X_{kl}\right)^{\dagger},\\
 			& = \sum_{i,k=1}^d \left( \ket{i} \bra{k} \otimes \sum_{j=1}^d X_{ij} X_{kj}^{\dagger} \right),
\end{align*}
$UU^{\dagger}=\mathbb{I}_{d^2}$ gives
\beq
\sum_{j=1}^d X_{ij} X_{kj}^{\dagger}= \delta_{ik}\mathbb{I}_{d}.
\label{eq:UOQLS}
\eeq

\item
Dual-unitarity of $U$: $U^RU^{R \, \dagger}=U^{R \, \dagger} U^R=\mathbb{I}_{d^2}$.
\begin{align*}
U^RU^{R \, \dagger} & =\left(\sum_{ij} \ket{ij} \bra{X_{ij}^*}\right) \left(\sum_{kl} \ket{kl} \bra{X_{kl}^*}\right)^{\dagger},\\
 			& = \sum_{ij} \sum_{kl} \langle X_{kl}|X_{ij} \rangle \ket{ij} \bra{kl},
\end{align*}
$U^RU^{R \, \dagger}=\mathbb{I}_{d^2}$ gives
\beq
\langle X_{kl}|X_{ij} \rangle=\delta_{ik}\delta_{jl},
\label{eq:UROQLS}
\eeq
 i.e, $X_{ij}$'s form an orthonormal operator basis.

\item
T-duality of $U$: $U^{\Gamma} U^{\Gamma \, \dagger}=\mathbb{I}_{d^2}$.
\begin{align*}
U^{\Gamma} U^{\Gamma \, \dagger}& =\left(\sum_{i,j=1}^{d} \ket{j} \bra{i} \otimes X_{ij} \right) \left(\sum_{k,l=1}^{d} \ket{l} \bra{k} \otimes X_{kl}\right)^{\dagger},\\
 			& = \sum_{j,l=1}^d \left( \ket{j} \bra{l} \otimes \sum_{i=1}^d X_{ij} X_{il}^{\dagger} \right),
\end{align*}
$U^{\Gamma} U^{\Gamma \, \dagger}=\mathbb{I}_{d^2}$ gives
\beq
\sum_{i=1}^d X_{ij} X_{il}^{\dagger}= \delta_{jl}\mathbb{I}_{d}.
\label{eq:UGOQLS}
\eeq
\end{enumerate}

The above conditions are equivalent to those presented in Ref.~\cite{SRatherAME46} and Ref.~\cite{rico2020absolutely} in terms of single qudit reduced density matrices (marginals) of a two-qudit pure state; $X_{ij} \mapsto \ket{X_{ij}}=(X_{ij} \otimes \mathbb{I}) \ket{\Phi}$, where $\ket{\Phi}$ is the generalized Bell state. It is easy to see that the marginal with respect to the first qudit is  given by $X_{ij} X_{ij}^{\dagger}$. Thus conditions (1) and (2) above involve orthonormality of sums of single qudit marginals in each row and each column respectively. A $d \times d$ arrangement of $d^2$-dimensional vectors (two-qudit quantum states) $\ket{X_{ij}}$ which satisfy Eqs.~(\ref{eq:UOQLS})--(\ref{eq:UGOQLS}) form an orthogonal quantum Latin square (OQLS). This generalizes the notion of OLS for general 2-unitary operators not necessarily permutations or those for which single qudit marginals are projectors where each $\ket{X_{ij}}$ is a product state. The original definitions of OQLS \cite{GRMZ_2018,MV19} are fragile in the sense they do not work when $\ket{X_{ij}}$ are entangled. Note that the entanglement of two-qudit states $\ket{X_{ij}}$'s change when unitary $U=\sum_{i,j=1}^{d}\ket{i}\bra{j} \otimes X_{ij}$, is multiplied by local unitary transformations.


\section{Fixed point of $\mathcal{M}_R^2$ map that is not dual-unitary \label{app:nondualFP}}
Here we give an example of a fixed point of the $\mathcal{M}_R^2$ map in $d^2=9$ which is not dual-unitary. The unitary is given by
\beq
U_{\text{nd}}=
\begin{pmatrix}
1 &  0     &  0     &  0     &  0     &  0     &
         0     &  0     &  0     \\
0 &  0     & \sqrt{3}/2 &  1/2    &  0     &  0     &
         0     &  0     &  0     \\
         
0 &  0     &  0     &  0     &  0     & \sqrt{3}/2 &
        -1/2    &  0     &  0     \\
        
0 &  0     &  0     &  0     &  0     &  0     &
         0     &  1     &  0     \\
         
0 &  1     &  0     &  0     &  0     &  0     &
        0     &  0     &  0     \\
        
0 &  0     &  0     &  0     &  1     &  0     &
         0     &  0     &  0     \\
         
0 &  0     &  0     &  0     &  0     &  0     &
        0     &  0     &  1     \\
        
0 &  0     & -1/2    & \sqrt{3}/2 &  0     &  0     &
         0     &  0     &  0     \\
         
0 &  0     &  0     &  0     &  0     &  1/2    &
        \sqrt{3}/2 &  0     &  0
\end{pmatrix}.
\eeq
Action of $\mathcal{M}_R$ map on $U_{\text{nd}}$ results in 
\begin{align}
U_{\text{nd}}'& :=\mathcal{M}_R[U_{\text{nd}}] \\
&=\begin{pmatrix}
\sqrt{3}/2 &  0     &  0     &  0     &  0     &  1/2    &
         0     &  0     &  0     \\
0     &  0     &  0     &  0     &  0     &  0     &
         0     &  0     &  1     \\
0     &  0     &  1/2    &  0     &  0     &  0     &
        -\sqrt{3}/2 &  0     &  0     \\
0     &  0     &  0     &  0     &  1     &  0     &
         0     &  0     &  0     \\
0     &  0     &  0     &  0     &  0     &  0     &
        0     &  1     &  0     \\
0     &  1     &  0     &  0     &  0     &  0     &
         0     &  0     &  0     \\
1/2    &  0     &  0     &  0     &  0     & -\sqrt{3}/2 &
        0     &  0     &  0     \\
0     &  0     &  0     &  1     &  0     &  0     &
         0     &  0     & -0     \\
0     &  0     &  \sqrt{3}/2 &  0     &  0     &  0     &
        1/2    &  0     &  0  
\end{pmatrix},
\end{align}

such that $\mathcal{M}_R[U_{\text{nd}}']=U_{\text{nd}}$ i.e. $U_{\text{nd}}$ (subscript 'nd' is used to emphasize it is not dual-unitary) is a fixed point of the $\mathcal{M}_R^2$ map
\[
\mathcal{M}_R^2[U_{\text{nd}}]:=\mathcal{M}_R[\mathcal{M}_R[U_{\text{nd}}]]=\mathcal{M}_R[U_{\text{nd}}']=U_{\text{nd}}.
\]
Interestingly, $U_{\text{nd}}$ and $U_{\text{nd}}'$ are LU equivalent,
\beq
U_{\text{nd}}=(u_1 \otimes u_2) U_{\text{nd}}' (v_1 \otimes v_2),
\eeq

where
$u_1=\begin{pmatrix}
-\sqrt{3}/2 & 0 & -1/2 \\
0 & 1 & 0 \\
-1/2 & 0 & \sqrt{3}/2 
\end{pmatrix}, \,  u_2=v_1=\begin{pmatrix}
1 & 0 & 0 \\
0 & 0 & 1 \\
0 & -1 & 0
\end{pmatrix}$, and $v_2=\begin{pmatrix}
-1 & 0 & 0 \\
0 & 1 & 0 \\
0 & 0 & -1
\end{pmatrix}$.

$U_{\text{nd}}$ ($U_{\text{nd}}'$) is not dual-unitary; $U_{\text{nd}}^RU_{\text{nd}}^{R \, \dagger} \neq \mathbb{I}$, having three distinct Schmidt values given by $\left\lbrace 1+\frac{\sqrt{3}}{2}\, ,\, 1\, ,\, 1-\frac{\sqrt{3}}{2} \right\rbrace,$ with each value repeated three times. As a consequence of LU equivalence $U_{\text{nd}}$ and $U_{\text{nd}}'$ have the same entangling power and gate-typicality.

\end{document}